\numberwithin{equation}{section}
\DeclareMathOperator*{\Var}{Var}
\DeclareMathOperator*{\Pois}{Poisson}
\DeclareMathOperator*{\Ga}{Gamma}
\DeclareMathOperator*{\tGa}{Truncated\;Gamma}
\DeclareMathOperator*{\IG}{\mathcal{IG}}
\DeclareMathOperator*{\Bessel}{Bessel}
\DeclareMathOperator*{\ind}{ind}
\DeclareMathOperator*{\supth}{th}
\DeclareMathOperator*{\mean}{mean}
\def\NN{\mathbb{N}}
\def\RR{\mathbb{R}}
\def\T{{\mathrm{\scriptscriptstyle T}}}
\def\bfV{\mathbf{V}}
\def\bfone{\mathbf{1}}
\newtheorem{theorem}{Theorem}
\newtheorem{corollary}{Corollary}
\definecolor{cornflowerblue}{rgb}{0.39, 0.58, 0.93}
\definecolor{yaleblue}{rgb}{0.06, 0.3, 0.57}
\newcommand*{\addFileDependency}[1]{
	\typeout{(#1)}
	\@addtofilelist{#1}
	\IfFileExists{#1}{}{\typeout{No file #1.}}
}\makeatother
\begin{document}
	
	\def\spacingset#1{\renewcommand{\baselinestretch}%
		{#1}\small\normalsize} \spacingset{1}	
	
	\title{\textbf{Effective Bayesian Modeling of Large Spatiotemporal Count Data Using Autoregressive Gamma Processes}}
	\author{\small Yifan Cheng \thanks{y.cheng@u.nus.edu}  }
	\author{\small Cheng Li \thanks{stalic@nus.edu.sg} }
	\affil{\footnotesize Department of Statistics and Data Science, National University of Singapore}
	\date{}
    \maketitle
	\begin{abstract}
		We put forward a new Bayesian modeling strategy for spatiotemporal count data that enables efficient posterior sampling. Most previous models for such data decompose logarithms of the response Poisson rates into fixed effects and spatial random effects, where the latter is typically assumed to follow a latent Gaussian process, the conditional autoregressive model, or the intrinsic conditional autoregressive model. Since log-Gaussian is not conjugate to Poisson, such implementations must resort to either approximation methods like INLA or Metropolis moves on latent states in MCMC algorithms for model fitting and exhibit several approximation and posterior sampling challenges. Instead of modeling logarithms of spatiotemporal frailties jointly as a Gaussian process, we construct a spatiotemporal autoregressive gamma process guaranteed stationary across the time dimension. We decompose latent Poisson variables to permit fully conjugate Gibbs sampling of spatiotemporal frailties and design a sparse spatial dependence structure to get a linear computational complexity that facilitates efficient posterior computation. Our model permits convenient Bayesian predictive machinery based on posterior samples that delivers satisfactory performance in predicting at new spatial locations and time intervals. We have performed extensive simulation experiments and real data analyses, which corroborated our model's accurate parameter estimation, model fitting, and out-of-sample prediction capabilities.
	\end{abstract}
	\noindent%
	{\it Keywords:} spatiotemporal count data, Gibbs sampler, autoregressive gamma process, Vecchia approximation.
	\newpage
	\tableofcontents
	
	\newpage
	\spacingset{1.75} 
	
	\section{Introduction} \label{sec:intro}
	Spatial and spatiotemporal count data are ubiquitous in many scientific fields, such as ecology \citep{Estetal18}, epidemiology \citep{Thoetal23}, and transcriptomics \citep{Zhaetal21}. Bayesian models for such data enable flexible hierarchical structures, convenient posterior sampling, and robust uncertainty quantification. A classic statistical model for spatial and spatiotemporal count data is the latent Gaussian model (LGP, \citealt{Digetal98,ChrWaa04}), where the count data is assumed to follow a standard generalized linear model such as the log-linear model, whose log-transformed parameters are specified as a linear combination of fixed effects and spatial random effects. For point-referenced spatial data, the spatial random effects are typically assumed to follow a continuous latent Gaussian process, characterizing the spatial or spatiotemporal dependence and leading to a well-defined stochastic process over continuous domains. For areal data, the spatial random effects are typically assumed to follow the conditional autoregressive model (CAR) or the intrinsic conditional autoregressive model (ICAR, \citealt{Bes74,Mar88,Besag1991,BesKoo95}). Computationally, both LGP and CAR models are often fit by either the integrated nested Laplace approximation (INLA, \citealt{Rueetal09, Blangiardo2013}) or Markov chain Monte Carlo (MCMC) algorithms that require Metropolis moves for latent states \citep{Lee2018}. For example, a recent survey on the spatial and spatiotemporal analysis of
	COVID-19 epidemiology reveals that almost all statistical models adopted in practice belong to either the LGP or ICAR class of models and used either INLA or MCMC for model fitting \citep{Nazetal22}.
	
	While the LGP and CAR models have achieved considerable empirical success in statistical applications, there exist several potential drawbacks in current statistical models and computational strategies. First, modeling the spatial random effects as Gaussian processes implies that the rate parameter of a Poisson response variable follows a log-Gaussian process, which is not conjugate to the Poisson distribution. Therefore, to compute the posterior distribution, one has to rely on either approximation methods such as INLA or Metropolis moves on latent states in an MCMC algorithm. INLA typically depends on tessellation of the spatial domain and incurs an approximation error that does not necessarily vanish with large samples. For MCMC algorithms, the dimension of latent states gets large as the number of spatial locations becomes large, and thus, the Metropolis moves may result in slow mixing of posterior chains and low effective sample sizes. Second, the log-Gaussian process is known to lead to potentially large gradients of log-likelihoods with respect to the latent states and model parameters in the covariance functions \citep{ChrWaa04}, which makes exploration of the full posterior difficult and requires numerical truncation. Third, realizations of an LGP model often demonstrate the whitening effect \citep{Moretal24}, due to implicit restrictions on the correlation structure caused by the nugget for spatial count data \citep{DeO13}. The literature offers many alternative strategies to overcome the aforementioned modeling and computational challenges. For instance, \citet{DeO13,DeO14} studied several classes of transformed covariance functions to remove discontinuity in the covariance function caused by the nugget and reduce the whitening effect. More recently, new efficient MCMC algorithms based on the P\'olya-Gamma sampler have been proposed in \citet{DAnCan23} for the Poisson model and in \citet{Banetal21} for spatial negative binomial model. \citet{Braetal18} and \citet{BraCli25} introduced new classes of multivariate log-gamma distributions as a fully conjugate prior of the Poisson model for spatial count data, which results in a fully conjugate posterior that does not require MCMC sampling. 
	
	This work focuses on Bayesian modeling of spatiotemporal count data and proposes a new modeling strategy that enables efficient and fully conjugate Gibbs sampling of spatiotemporal frailties and model parameters. Instead of assuming a log-Gaussian process on the spatial random effects as in most of the previous literature, we take the Poisson frailties associated with all spatial locations corresponding to each time interval as a vector and generalize scalar autoregressive gamma processes for time series (\citealt{Gourieroux2006}, \citealt{Creal2017}) to model the spatiotemporal frailties, thus extending prior applications of autoregressive gamma processes from the time series setting and the time-to-event network setting \citep{Zhaetal25} to the spatiotemporal setting. We make four main contributions, as follows. First, we establish a spatiotemporal autoregressive gamma process provably stationary across the time dimension. Different from LGP and CAR, our construction does not involve any Gaussian latent variables. Second, we adopt a novel decomposition of latent Poisson variables that allows fully conjugate Gibbs sampling of the frailties and scalar parameters $c,\,\kappa,\,\rho$ in \Cref{sec:model}. Third, we design a spatial dependence structure built on a sparse neighborhood graph that leads to a computational complexity linear in both the number of spatial locations and the number of time intervals in each posterior sampling iteration, thus facilitating efficient posterior computation. Fourth, we develop a complete set of Bayesian predictive machinery for new points in both spatial and temporal dimensions, based on posterior samples of spatiotemporal frailties and model parameters. 
	
	The rest of this paper is organized as follows. \Cref{sec:model} introduces our Bayesian spatiotemporal count model with prior specification and a rigorous theory on the stationarity condition for the spatiotemporal frailty process. \Cref{sec:sampling} details the posterior sampling steps for all model parameters. \Cref{sec:prediction} elaborates on procedures for predicting at future time intervals and unseen spatial locations. \Cref{sec:simulation} conducts extensive simulation experiments that justify our model's accurate parameter estimation, model fitting, and prediction capabilities on par with alternative methods. \Cref{sec:covid} analyzes world weekly COVID-19 cases and deaths data using our framework and competitive spatiotemporal count models, validating our model's effectiveness and satisfactory performance in both in-sample and out-of-sample predictions. \Cref{sec:discussion} concludes with pointers towards potential future work. Finally, the supplementary material provides deduction details of the MCMC sampler in \Cref{sec:sampling}, presents a theorem that justifies stationarity of the spatiotemporal frailty process for the third simulation group in \Cref{sec:simulation}, and displays complementary results for simulation experiments and the COVID-19 real data example in \Cref{sec:simulation,sec:covid}.
	
	\section{A Bayesian Spatiotemporal Model for Count Data} \label{sec:model}
	
	Let $m, T\in\mathbb{N}$ be the numbers of spatial locations and time intervals of equal length, respectively. For each $(i,t)\in\{1,\ldots,m\}\times\{1,\ldots,T\}$, we denote $y_t\left(\bm{s}_i\right)\in\NN$ as our observed counts at the spatial location $\bm{s}_i$ during the $t^{\supth}$ time interval. Then $\{y_t\left(\bm{s}_i\right):t=1,\ldots,T,\;i=1,\ldots,m\}$ consists of all the observed count data. We assume
	\begin{equation}\label{yPois}
		y_t\left(\bm{s}_i\right)~|~ U^t(\bm{s}_i), \bm{\beta},\bm{x}_t\left(\bm{s}_i\right)\overset{\ind}{\sim}\Pois\left(U^t(\bm{s}_i) \exp\{\bm{x}_t\left(\bm{s}_i\right)^{\T}\bm{\beta}\}\right), 
	\end{equation}
	where $\Pois(\lambda)$ denotes the Poisson distribution with rate parameter $\lambda$, $U^t(\bm{s}_i)$ is a positive spatiotemporal frailty term, $\bm{x}_t\left(\bm{s}_i\right) \in \RR^p$ is a vector of observed spatiotemporal covariates, and $\bm{\beta} \in \RR^p$ is the vector of regression coefficients. 
	
	Our Bayesian modeling strategy differs from the prior literature primarily in the model for frailties $U^{1:T}(\bm{s}_{1:m})=\left\{U^t(\bm{s}_i):~ t=1,\ldots,T,\;i=1,\ldots,m\right\}$. Instead of modeling their logarithms jointly as a Gaussian process, as in the classic latent Gaussian model \citep{ChrWaa04}, we construct a spatiotemporal autoregressive gamma process for $U^{1:T}(\bm{s}_{1:m})$, extending previous applications of such processes in the time series setting (\citealt{Gourieroux2006}, \citealt{Creal2017}) and the time-to-event network setting \citep{Zhaetal25}. Specifically, for each $t=2,\ldots,T$ and $\bm{s}_i$, we assume that each $U^{t}(\bm{s}_i)$ is associated with a latent space $Z^{t-1}(\bm{s}_i)$, such that
	\begin{align} \label{eq:ARG1}
		& U^{t}(\bm{s}_i) ~|~ Z^{t-1}(\bm{s}_i) \overset{\ind}{\sim} \Ga\left(\alpha + Z^{t-1}(\bm{s}_i), \frac{1}{c} \right), \nonumber \\
		& Z^{t-1}(\bm{s}_i) ~|~  \left\{U^{t-1}(\bm{s}_j):~ j=1,\ldots,m\right\} \overset{\ind}{\sim} \Pois \left(\frac{1}{c} \sum_{j=1}^m v_{ij} U^{t-1}(\bm{s}_j)\right) 
	\end{align}
	for some hyperparameter $\alpha>1$, parameter $c>0$, and nonnegative weights $\bfV=\{v_{ij}:i,j=1,\ldots,m\}$ such that $\sum_{j=1}^m v_{ij}>0$ for all $i$, where $\overset{\ind}{\sim}$ indicates the independence of all concerned distributions, and the $\Ga(a,b)$ distribution has density $f(x;a,b)=\frac{b^a}{\Gamma(a)}x^{a-1}\exp(-bx)$ for $x>0$. As a result of the conditional distributions in \eqref{eq:ARG1}, for a given set of $\alpha,c,\bfV$, the conditional distribution of $U^{t}(\bm{s}_i)$ given 1-lag values $U^{t-1}(\bm{s}_{1:m})$ is a noncentral gamma distribution with density
	\begin{align} \label{eq:nc.gamma1}
		&p\left(U^{t}(\bm{s}_i) ~|~ U^{t-1}(\bm{s}_{1:m})\right) = \left(\frac{U^{t}(\bm{s}_i)}{\sum_{j=1}^m v_{ij} U^{t-1}(\bm{s}_j)}\right)^{\frac{\alpha-1}{2}} \times \nonumber \\
		&\qquad \frac{1}{c}\exp\left(-\frac{U^{t}(\bm{s}_i)+\sum_{j=1}^m v_{ij} U^{t-1}(\bm{s}_j)}{c}\right) \times I_{\alpha-1} \left(2\sqrt{\frac{U^{t}(\bm{s}_i)\sum_{j=1}^m v_{ij} U^{t-1}(\bm{s}_j)}{c^2}}\right),
	\end{align}
	where $I_{\alpha-1}(\cdot)$ is the modified Bessel function of the first kind. From \eqref{eq:ARG1}, by the law of iterated expectations and the law of total variance, the conditional mean and variance of the vector $U^{t}(\bm{s}_{1:m})$ given $U^{t-1}(\bm{s}_{1:m})$ are
	\begin{align} \label{eq:cond.mean.var1}
		& \mathbb{E}\left[U^{t}(\bm{s}_{1:m}) ~|~ U^{t-1}(\bm{s}_{1:m})\right]  = \bfone_m \alpha c + \bfV U^{t-1}(\bm{s}_{1:m}) \text{ and} \nonumber \\
		& \Var\left[U^{t}(\bm{s}_{1:m}) ~|~ U^{t-1}(\bm{s}_{1:m})\right] = \bfone_m \alpha c^2 + 2c\bfV U^{t-1}(\bm{s}_{1:m}) ,
	\end{align}
	respectively, where $\bfone_m = (1,\ldots,1)^\T \in \RR^m$.
	
	The process defined by \eqref{eq:ARG1} along the temporal dimension is a vector generalization of the original autoregressive gamma process for time series, as in \citet{Gourieroux2006} and \citet{Creal2017}. We can show the following result regarding the stationarity condition for this process.
	
	\begin{theorem} \label{thm:stationary1}
		Suppose that for any fixed $m\in \NN$, $\bfV$ is an $m\times m$ matrix whose entries are all nonnegative with $\sum_{j=1}^m v_{ij}>0$ for all $i=1,\ldots,m$. Assume $\bfV$ satisfies at least one of the two following conditions.
		\begin{itemize}
			\item \textbf{Condition 1:} $\sum_{i=1}^m v_{ij}\leq 1$ for all $j=1,\ldots,m$;  
			\item \textbf{Condition 2:} $\sum_{j=1}^m v_{ij}\leq 1$ for all $i=1,\ldots,m$. 
		\end{itemize}
		Then the process $\{U^t(\bm{s}_{1:m}):t=1,2,\ldots,T\}$ defined in \eqref{eq:ARG1} is a stationary process with an invariant distribution as $T\to\infty$.
	\end{theorem}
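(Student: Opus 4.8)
The plan is to characterize the law of the Markov chain $\{U^t(\bm{s}_{1:m})\}$ through its conditional Laplace transform, for which the two-stage Poisson--gamma construction in \eqref{eq:ARG1} yields a clean closed form. First I would fix $\bm{u}=(u_1,\dots,u_m)^\T\in[0,\infty)^m$ and compute, conditioning successively on $Z^{t-1}(\bm{s}_{1:m})$ and then on $U^{t-1}(\bm{s}_{1:m})$, using that a $\Ga(a,1/c)$ variable has Laplace transform $(1+cu)^{-a}$ and a $\Pois(\lambda)$ variable has probability generating function $e^{\lambda(z-1)}$. This gives the affine form
\begin{equation*}
\mathbb{E}\!\left[e^{-\bm{u}^\T U^t(\bm{s}_{1:m})}\,\middle|\,U^{t-1}(\bm{s}_{1:m})\right] = g(\bm{u})\,\exp\!\left(-\Psi(\bm{u})^\T U^{t-1}(\bm{s}_{1:m})\right),
\end{equation*}
where $g(\bm{u})=\prod_{i=1}^m(1+cu_i)^{-\alpha}$, $\Psi(\bm{u})=\bfV^\T\bm{\phi}(\bm{u})$, and $\bm{\phi}(\bm{u})=(\phi(u_1),\dots,\phi(u_m))^\T$ with $\phi(u)=u/(1+cu)$. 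Writing $L_t(\bm{u})=\mathbb{E}[e^{-\bm{u}^\T U^t(\bm{s}_{1:m})}]$ and taking expectations gives the recursion $L_t(\bm{u})=g(\bm{u})\,L_{t-1}(\Psi(\bm{u}))$, hence
\begin{equation*}
L_t(\bm{u})=\Big[\textstyle\prod_{k=0}^{t-1} g\big(\Psi^k(\bm{u})\big)\Big]\,L_0\big(\Psi^t(\bm{u})\big),
\end{equation*}
where $\Psi^k$ is the $k$-fold composition. Stationarity then reduces to showing that this expression converges, as $t\to\infty$, to a limit $L_\infty(\bm{u})$ that does not depend on the initial law $L_0$ and is itself the Laplace transform of a probability measure on $[0,\infty)^m$.

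Second I would establish that $\Psi^t(\bm{u})\to\bm{0}$ for every $\bm{u}\ge\bm{0}$, which sends the factor $L_0(\Psi^t(\bm{u}))\to 1$ and removes the dependence on the initial distribution. This is where the two conditions enter, through an operator-norm bound: Condition~1 ($\sum_i v_{ij}\le 1$) gives $\|\bfV^\T\|_\infty\le 1$, while Condition~2 ($\sum_j v_{ij}\le 1$) gives $\|\bfV^\T\|_1\le 1$, so under either hypothesis $\Psi$ is non-expansive in the corresponding norm. Crucially, $\phi$ is strictly concave with $\phi(u)<u$ for all $u>0$, so $\|\Psi(\bm{u})\|<\|\bm{u}\|$ whenever $\bm{u}\ne\bm{0}$. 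Since the iterates stay in the compact set $\{\|\bm{v}\|\le\|\bm{u}\|\}$ and $\Psi$ is continuous, a subsequence argument forces the decreasing sequence $\|\Psi^t(\bm{u})\|$ to converge to $0$: any nonzero subsequential limit $\bm{v}^\ast$ would satisfy $\|\Psi(\bm{v}^\ast)\|=\|\bm{v}^\ast\|$, contradicting strict contraction off the origin.

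The main obstacle is the remaining step, showing that $\prod_{k\ge 0} g(\Psi^k(\bm{u}))$ converges to a strictly positive limit that is continuous at $\bm{u}=\bm{0}$; continuity at the origin is precisely the statement that no mass escapes to infinity, so that $L_\infty$ is a genuine (non-defective) transform. Taking logarithms, this amounts to the summability $\sum_{k\ge 0}\sum_{i=1}^m\log\big(1+c[\Psi^k(\bm{u})]_i\big)<\infty$, which near the origin behaves like $\alpha c\sum_{k\ge 0}\|\Psi^k(\bm{u})\|_1$. The delicate point is that mere non-expansiveness (all that the $\le 1$ conditions give directly) yields $\Psi^t(\bm{u})\to\bm{0}$ but not necessarily at a summable rate, so the conditions on $\bfV$ must be leveraged more carefully --- via the spectral radius of $\bfV^\T$ together with the linearization $\Psi(\bm{u})=\bfV^\T\bm{u}+o(\|\bm{u}\|)$ as $\bm{u}\to\bm{0}$ --- to rule out a unit-root direction along which the iterates decay only polynomially and the product collapses to zero. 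I would establish geometric decay of the iterates and bound the product tails uniformly on compact $\bm{u}$-sets, giving local uniform convergence and hence continuity of $L_\infty$. Finally, by the continuity theorem for Laplace transforms on $[0,\infty)^m$, $L_\infty$ is the transform of a probability law $\pi$; since $L_\infty(\bm{u})=g(\bm{u})L_\infty(\Psi(\bm{u}))$ it is invariant for the recursion, so starting the chain from $\pi$ produces the claimed stationary process, and independence of the limit from $L_0$ yields uniqueness of $\pi$.
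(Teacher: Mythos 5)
Your first two steps are sound and essentially coincide with the paper's own argument: the conditional Laplace transform computation giving $L_t(\bm{u})=g(\bm{u})L_{t-1}(\Psi(\bm{u}))$ is exactly the paper's verification that \eqref{eq:ARG1} is a compound autoregressive process of order $1$ in the sense of \citet{Daretal06}, with your $\Psi$ equal to the paper's $a(\cdot)=\bfV^\T K(\cdot)$ and $g=e^{b}$; and your proof that $\Psi^t(\bm{u})\to\bm{0}$ (non-expansiveness of $\bfV^\T$ in $\|\cdot\|_\infty$ under Condition 1 and in $\|\cdot\|_1$ under Condition 2, plus strict contraction of $\phi$ off the origin and a compactness/subsequence argument) is a correct and slightly slicker equivalent of the paper's max-coordinate and mean-coordinate arguments. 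Note the paper even quantifies the decay, obtaining $a_{\max}^h(\bm{x})\leq a_{\max}^{1}(\bm{x})/\bigl(c\,a_{\max}^{1}(\bm{x})(h-1)+1\bigr)$ in \eqref{eq:a.decrease12}, i.e.\ a rate of order $1/h$ — which is exactly the rate that derails your third step.

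The genuine gap is in that third step. You propose to secure summability of $\sum_{h}\|\Psi^h(\bm{u})\|_1$ by establishing geometric decay via the spectral radius of $\bfV^\T$ and the linearization $\Psi(\bm{u})=\bfV^\T\bm{u}+o(\|\bm{u}\|)$, "ruling out a unit-root direction." But the theorem's hypotheses are weak inequalities, and a unit root is genuinely attainable: $\bfV=I_m$ satisfies both conditions (as does any row- or doubly-stochastic $\bfV$; \Cref{cor:rho.kappa} explicitly permits $\rho+\kappa=1$). For $\bfV=I_m$ one computes $[\Psi^h(\bm{u})]_i=u_i/(chu_i+1)\asymp 1/(ch)$, so $\sum_h\|\Psi^h(\bm{u})\|_1$ diverges and your infinite product $\prod_h g(\Psi^h(\bm{u}))$ converges to $0$ — a defective limit, from which your own continuity-theorem step cannot manufacture an invariant probability law. (Consistently, \eqref{eq:cond.mean.var1} gives $\mathbb{E}[U^t(\bm{s}_{1:m})\,|\,U^{t-1}(\bm{s}_{1:m})]=\bfone_m\alpha c+U^{t-1}(\bm{s}_{1:m})$ in this case, an upward drift.) So geometric decay cannot be proved under the stated hypotheses, and your route closes only if the inequalities are strict, where $\|\bfV^\T\|<1$ in the relevant norm indeed gives geometric decay. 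The paper never confronts this: having verified the CaR(1) structure, it stops at $\lim_{h\to\infty}a^{\circ h}(\bm{x})=\bm{0}_{m\times 1}$ and invokes Proposition 6 of \citet{Daretal06}, which (as quoted in the paper) takes that limit alone as necessary and sufficient, so the convergence-of-the-product issue you correctly isolated is absorbed into the cited external result rather than re-proved. Your analysis of the boundary cases is thus a substantive observation about where the self-contained transform argument breaks down, but as a proof plan for the stated theorem it cannot be completed as written.
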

	
	\begin{proof}[Proof of \Cref{thm:stationary1}]
		We show the stationarity condition using the theory on Laplace transform of compound autoregressive processes in \citet{Daretal06}. We first verify that for a fixed $m$ and given spatial locations $\{\bm{s}_1,\ldots,\bm{s}_m\}$, the vector process $\{U^t(\bm{s}_{1:m}):t=1,2,\ldots\}$ is a temporal compound autoregressive process of order 1. For any vector $\bm{x}=(x_1,\ldots,x_m)^\T \in \RR^{m}$ such that $x_i\geq 0$ for all $i=1,\ldots,m$, we have 
		\begin{align} \label{eq:CAR1.1}
			&~\quad \mathbb{E}\left[\exp\left\{-\bm{x}^\T U^t(\bm{s}_{1:m})\right\}  ~|~ U^{t-1}(\bm{s}_{1:m}) \right] \nonumber \\
			& = \mathbb{E}\left(\mathbb{E}\left[\exp\left\{-\bm{x}^\T U^t(\bm{s}_{1:m})\right\} ~|~ Z^{t-1}(\bm{s}_{1:m})\right] ~|~ U^{t-1}(\bm{s}_{1:m})  \right) \nonumber \\
			&= \mathbb{E}\left( \prod_{i=1}^m \mathbb{E}\left[\exp\left\{-x_i U^t(\bm{s}_i)\right\} ~|~ Z^{t-1}(\bm{s}_i)\right] ~\Big|~ U^{t-1}(\bm{s}_{1:m})  \right) \nonumber \\
			&= \mathbb{E}\left( \prod_{i=1}^m \frac{1}{(cx_i+1)^{\alpha+Z^{t-1}(\bm{s}_i)}} ~\Big|~ U^{t-1}(\bm{s}_{1:m})  \right) \nonumber \\
			&= \prod_{i=1}^m\mathbb{E}\left(\frac{1}{(cx_i+1)^{\alpha+Z^{t-1}(\bm{s}_i)}} ~\Big|~ U^{t-1}(\bm{s}_{1:m})  \right) \nonumber \\
			&= \prod_{i=1}^m \sum_{k=0}^{\infty} \frac{\left(\frac{1}{c} \sum_{j=1}^m v_{ij} U^{t-1}(\bm{s}_j)\right)^k}{k!} \exp\left\{-\frac{1}{c} \sum_{j=1}^m v_{ij} U^{t-1}(\bm{s}_j)\right\} \cdot \frac{1}{(cx_i+1)^{\alpha+k}}   \nonumber \\
			&= \prod_{i=1}^m \frac{1}{(cx_i+1)^{\alpha}}\times \exp\nonumber \left\{-\frac{1}{c} \sum_{j=1}^m v_{ij} U^{t-1}(\bm{s}_j)\left[1 - \frac{1}{cx_i+1}\right]\right\}\\
			&= \exp\left\{-\sum_{j=1}^m \left(\sum_{i=1}^m \frac{x_iv_{ij}}{cx_i+1} \right) U^{t-1}(\bm{s}_j) -\alpha\sum_{i=1}^m \log (cx_i+1)\right\}.
		\end{align}
		If we define vector-valued function $a(\bm{x}) = \left(\sum_{i=1}^m \frac{x_iv_{i1}}{cx_i+1},\ldots,\sum_{i=1}^m \frac{x_iv_{im}}{cx_i+1}\right)^\T \in \RR^m$ and scalar-valued function $b(\bm{x}) = -\alpha\sum_{i=1}^m \log (cx_i+1) \in \RR$, then \eqref{eq:CAR1.1} implies 
		\begin{align*}
			\mathbb{E}\left[\exp\left\{-\bm{x}^\T U^t(\bm{s}_{1:m})\right\}  ~|~ U^{t-1}(\bm{s}_{1:m}) \right] &= \exp\left\{- a(\bm{x})^\T U^{t-1}(\bm{s}_{1:m}) + b(\bm{x})\right\}.
		\end{align*}
		Since $v_{ij}\geq 0$ for all $(i,j)$ and $\sum_{j=1}^m v_{ij}>0$ for all $i$, $a(\bm{x})\neq \bm{0}_{m\times 1}$ for all $\bm{x}\in\mathbb{R}^m$ with $x_i\geq 0$ for all $i$ and $\sum_{i=1}^m x_i>0$. Hence, based on Definition 1 in \citet{Daretal06}, $\{U^t(\bm{s}_{1:m}):t=1,2,\ldots\}$ is a temporal compound autoregressive process of order 1. 
		
		By Proposition 6 in \citet{Daretal06}, the process $\{U^t(\bm{s}_{1:m}):t=1,2,\ldots\}$ converges to a stationary distribution as $T\to\infty$ if and only if $\lim_{h\to\infty} a^{\circ h}(\bm{x}) =\bm{0}_{m\times 1}$ for all $\bm{x}\in\RR^m$ with $x_i\geq 0$ for all $i=1,\ldots,m$ and $\sum_{i=1}^m x_i>0$, where $a^{\circ h}(\bm{x})$ represents the function $a(\bm{x})$ compounded $h$ times with itself. In this case, if we define the $m$-dimensional function $K(\bm{x})= (K_1(\bm{x}),\ldots,K_m(\bm{x}))^\T =\left(\frac{x_1}{cx_1+1},\ldots, \frac{x_m}{cx_m+1}\right) ^\T $ for any $\bm{x}\in \RR^m$  with $x_i\geq 0$ for all $i=1,\ldots,m$ and $\sum_{i=1}^m x_i>0$, then for any $h\in\mathbb{N}$, $h\geq 1$, it is straightforward to see that 
		\begin{align*}
			a^{\circ h}(\bm{x}) = \bfV^\T K\left(a^{\circ (h-1)}(\bm{x})\right).
		\end{align*}
		In particular, for any $j=1,\ldots,m$, the $j$th entry of $a^{\circ h}(\bm{x})$ is
		\begin{align*}
			a^{\circ h}_j(\bm{x}) &= \sum_{i=1}^m v_{ij} K_i\left(a^{\circ (h-1)}(\bm{x})\right).
		\end{align*}
		
		\noindent \textbf{Condition 1:} Suppose $\sum_{i=1}^m v_{ij}\leq 1$ for all $j=1,\ldots,m$. For each $h\in \NN$, let $i_{\max}^{(h)} = \arg\max_{1\leq i \leq m} K_i\left(a^{\circ h}(\bm{x})\right)$. Since $v_{ij}$'s are all nonnegative, for any $j=1,\ldots,m$,
		\begin{align} \label{eq:a.upper1}
			0 &\leq  a^{\circ h}_j(\bm{x}) \leq  \left(\sum_{i=1}^m v_{ij} \right) K_{i_{\max}^{(h-1)}}\left(a^{\circ (h-1)}(\bm{x})\right) \nonumber \\
			&\overset{(i)}{\leq} 1\cdot \frac{a_{i_{\max}^{(h-1)}}^{\circ (h-1)}(\bm{x})}{ca_{i_{\max}^{(h-1)}}^{\circ (h-1)}(\bm{x})+1} \overset{(ii)}{=} \frac{\max_{1\leq k\leq m} a^{\circ (h-1)}_k(\bm{x})}{c\max_{1\leq k\leq m} a^{\circ (h-1)}_k(\bm{x})+1},
		\end{align}
		where $(i)$ follows from $0\leq \sum_{i=1}^m v_{ij}\leq 1$ for all $j=1,\ldots,m$ and $(ii)$ from the fact that the function $x/(cx+1)$ increases in $x$ for $x\geq 0$. The relation \eqref{eq:a.upper1} implies that $a_{\max}^h(\bm{x}) := \max_{1\leq j\leq m} a^{\circ h}_j(\bm{x})$ is a nonnegative sequence that strictly decreases in $h$ when strictly positive for any fixed $\bm{x}$ and satisfies
		\begin{align} \label{eq:a.decrease11}
			& 0\leq a_{\max}^h(\bm{x}) \leq \frac{a_{\max}^{h-1}(\bm{x})}{ca_{\max}^{h-1}(\bm{x})+1} \leq a_{\max}^{h-1}(\bm{x}).
		\end{align}
		By the increasing property of the function $f(x)=x/(cx+1)$ for $x\geq 0$, we can apply the second inequality in \eqref{eq:a.decrease11} $h-1$ times iteratively to obtain 
		\begin{align} \label{eq:a.decrease12}
			& 0\leq a_{\max}^h(\bm{x})  \leq f\left(a_{\max}^{h-1}(\bm{x})\right)\leq f^{\circ(h-1)}\left(a_{\max}^{1}(\bm{x})\right) = \frac{a_{\max}^{1}(\bm{x})}{ca_{\max}^{1}(\bm{x})\cdot (h-1) +1}.
		\end{align}
		Therefore, $\lim_{h\to\infty} a_{\max}^h(\bm{x}) =0$. This further implies that $\lim_{h\to\infty} a^{\circ h}_j(\bm{x}) =0$ for all $j=1,2,\ldots,m$, i.e., $\lim_{h\to\infty} a^{\circ h}(\bm{x}) =\bm{0}_{m\times 1}$, as $a^{\circ h}_j(\bm{x})\geq 0$ for all $j\in\{1,\ldots,m\}$ and $h\in\mathbb{N}$.
		
		\noindent \textbf{Condition 2:} Suppose $\sum_{j=1}^m v_{ij}\leq 1$ for all $i=1,\ldots,m$. Since $v_{ij}$'s are all nonnegative, we have that for any $j=1,\ldots,m$,
		\begin{align} \label{eq:a.upper2}
			0 &\leq  \frac{1}{m}\sum_{j=1}^m a^{\circ h}_j(\bm{x}) = \frac{1}{m} \sum_{j=1}^m\sum_{i=1}^m v_{ij} K_{i}\left(a^{\circ (h-1)}(\bm{x})\right) = \frac{1}{m}\sum_{i=1}^m \left(\sum_{j=1}^mv_{ij}\right) K_{i}\left(a^{\circ (h-1)}(\bm{x})\right)  \nonumber \\
			&\overset{(i)}{\leq} \frac{1}{m}\sum_{i=1}^m 1\cdot \frac{a_{i}^{\circ (h-1)}(\bm{x})}{ca_{i}^{\circ (h-1)}(\bm{x})+1} = \frac{1}{m}\sum_{j=1}^m  \frac{a_{j}^{\circ (h-1)}(\bm{x})}{ca_{j}^{\circ (h-1)}(\bm{x})+1} \overset{(ii)}{\leq} \frac{\frac{1}{m}\sum_{j=1}^m a^{\circ (h-1)}_j(\bm{x})}{c\cdot \frac{1}{m}\sum_{j=1}^m a^{\circ (h-1)}_j(\bm{x})  +1},
		\end{align}
		where $(i)$ follows from $0< \sum_{j=1}^m v_{ij}\leq 1$ for all $i=1,\ldots,m$ and $(ii)$ from the concavity of the function $f(x)=x/(cx+1)$ for $x\geq 0$, as $c>0$ and thus $f''(x) = -\frac{2c}{(cx+1)^3}<0$ for all $x\geq 0$. The relation \eqref{eq:a.upper2} implies that $a_{\mean}^h(\bm{x}) := \frac{1}{m}\sum_{j=1}^m a^{\circ h}_j(\bm{x})$ is a nonnegative sequence that strictly decreases in $h$ when strictly positive for any fixed $\bm{x}$ and satisfies
		\begin{align} \label{eq:a.decrease21}
			& 0\leq a_{\mean}^h(\bm{x}) \leq \frac{a_{\mean}^{h-1}(\bm{x})}{ca_{\mean}^{h-1}(\bm{x})+1} \leq a_{\mean}^{h-1}(\bm{x}).
		\end{align}
		By the increasing property of the function $f(x)=x/(cx+1)$ for $x\geq 0$, we can apply the second inequality in \eqref{eq:a.decrease21} $h-1$ times iteratively to obtain 
		\begin{align} \label{eq:a.decrease22}
			& 0\leq a_{\mean}^h(\bm{x}) \leq f\left(a_{\mean}^{h-1}(\bm{x})\right)\leq f^{\circ(h-1)}\left(a_{\mean}^{1}(\bm{x})\right) = \frac{a_{\mean}^{1}(\bm{x})}{ca_{\mean}^{1}(\bm{x})\cdot (h-1) +1}.
		\end{align}
		Therefore, $\lim_{h\to\infty} a_{\mean}^h(\bm{x}) =0$, which is equivalent to $\lim_{h\to\infty} \sum_{j=1}^m a^{\circ h}_j(\bm{x}) =0$. Since $a^{\circ h}_j(\bm{x})\geq 0$ for all $j\in\{1,\ldots,m\}$ and $h\in\mathbb{N}$, we have $\lim_{h\to\infty} a^{\circ h}_j(\bm{x}) =0$ for all $j=1,2,\ldots,m$, i.e., $\lim_{h\to\infty} a^{\circ h}(\bm{x}) =\bm{0}_{m\times 1}$.
	\end{proof}
	
	Under the joint framework specified by \eqref{yPois} and \eqref{eq:ARG1}, the conditional posterior distribution of each $U^t(\bm{s}_i)$ ($t=2,\ldots,T$) does not immediately follow a closed-form parametric distribution, even with a fixed set of $\alpha,c,\bfV$. The fundamental issue lies in the weighted sum of $U^{t-1}(\bm{s}_{1:m})$ in the second equation of \eqref{eq:ARG1}. Using the additive property of independent Poisson random variables, we can further decompose each $Z^{t-1}(\bm{s}_i)$ into a sum of latent Poisson variables with rates $v_{ij}U^{t-1}(\bm{s}_j)/c$, $j=1,\ldots,m$ so that $v_{ij}>0$, such that the conditional posterior of $U^t(\bm{s}_i)$ can be made conjugate as a gamma distribution. If all entries in the $m\times m$ weighting matrix $\bfV$ are strictly positive, however, this strategy will introduce $m^2(T-1)$ latent variables, which incurs exceptionally high posterior sampling cost for spatiotemporal count data with a large number of spatial locations $m$.
	
	To reduce this computational cost quadratic in $m$, we consider a sparse weighting matrix $\bfV$ for practical implementation. In the following, we construct a $\bfV$ that satisfies Condition 2 in Theorem \ref{thm:stationary1} by making each row of $\bfV$ sparse. That is, each $Z^{t-1}(\bm{s}_i)$ only depends on a small number of neighboring spatial locations in the $t-1^{\text{th}}$ time interval. This modeling strategy is fundamentally similar to the Vecchia approximation in the Gaussian process literature (\citealt{Vecchia1988}, \citealt{Datta2016}, \citealt{KatGui21}, \citealt{Zhuetal24}), where the conditional distribution of the response variable at each spatial location is made only dependent on a small set of neighboring locations. 
	
	Specifically, for each $\bm{s}_i$ ($i=1,\ldots,m$), we let the neighboring set of $\bm{s}_i$ be $N(\bm{s}_i)$, which usually consists of the $|N(\bm{s}_i)|$ ``nearest'' neighbors of $\bm{s}_i$ in $\{\bm{s}_1,\dots,\bm{s}_{m}\}\setminus\{\bm{s}_i\}$. We denote $h_s\in\NN,\,h_s\ll m$ as the maximum number of neighbors for each spatial location; thus, $1\leq|N(\bm{s}_i)|\leq h_s$ for all $i=1,\ldots,m$. It is permitted that some spatial locations' numbers of nearest neighbors are strictly smaller than $h_s$. Let us fix any arbitrary $i\in\{1,2,\ldots,m\}$ and consider the $i^{\text{th}}$ row in the $m\times m$ weighting matrix $\bfV$ in \eqref{eq:ARG1}. We assume the following parametric form of $\bfV$:
	\begin{align} \label{eq:V.rho.kappa}
		v_{ii} &= \rho, \nonumber \\
		v_{ij} &= \kappa \cdot w_{ik_i(\bm{s}_j)}, \quad \text{for all } j \text{ such that } \bm{s}_j\in N(\bm{s}_i), \nonumber \\
		v_{ij} &=0, \quad \text{for all } j\neq i\text{ such that } \bm{s}_j\notin N(\bm{s}_i),
	\end{align}
	where $k_i(\bm{s}_j)\in\{1,\ldots,|N(\bm{s}_i)|\}$ satisfies $\bm{s}_j=N(\bm{s}_i)[k_i(\bm{s}_j)]$, i.e., $\bm{s}_j$ is the $k_i(\bm{s}_j)^{\text{th}}$ spatial location in $N(\bm{s}_i)$. $\left\{w_{ik_i(\bm{s}_j)}\right\}_{j:\:\bm{s}_j\in N(\bm{s}_i)} = \left\{w_{ik}\right\}_{1\leq k\leq |N(\bm{s}_i)|}$ consists of prespecified constants in $(0,1]$ with $\sum_{k=1}^{|N(\bm{s}_i)|} w_{ik}=1$. In real applications, for $\bm{s}_j \in N(\bm{s}_i)$, $w_{ik_i(\bm{s}_j)}$ can either be inversely related to the distance between $\bm{s}_i$ and $\bm{s}_j$ or set to $w_{ik_i(\bm{s}_j)} = 1/|N(\bm{s}_i)|$ as uniform over all neighboring locations of $\bm{s}_i$. Following Theorem \ref{thm:stationary1}, we have the following corollary regarding the parameters $\rho$ and $\kappa$ in the model \eqref{eq:V.rho.kappa}.
	
	\begin{corollary} \label{cor:rho.kappa}
		Assume the weighting matrix $\bfV=\{v_{ij}\geq 0:i,j=1,\ldots,m\}$ is specified in \eqref{eq:V.rho.kappa} with known constants $w_{ik_i(\bm{s}_j)}\in(0,1]$ for all $\bm{s}_j\in N(\bm{s}_i)$ and $i=1,\ldots,m$ that satisfy $\sum_{j:\:\bm{s}_j \in  N(\bm{s}_i)} w_{ik_i(\bm{s}_j)}=\sum_{k=1}^{|N(\bm{s}_i)|}w_{ik}=1$ for all $i=1,\ldots,m$. Suppose that the parameters $(\rho,\kappa)$ satisfy $\rho \geq 0,\: \kappa \geq 0,\: 0< \rho+\kappa\leq 1$. Then $\{U^t(\bm{s}_{1:m}):t=1,2,\ldots,T\}$ defined in \eqref{eq:ARG1} is a stationary process with an invariant distribution as $T\to\infty$.
	\end{corollary}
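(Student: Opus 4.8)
The plan is to reduce the corollary to a direct verification of the hypotheses of \Cref{thm:stationary1} under Condition 2, since the parametric form \eqref{eq:V.rho.kappa} is engineered precisely so that every row of $\bfV$ has a controlled sum. First I would record the structural facts about the neighbor sets: because $N(\bm{s}_i)$ is drawn from $\{\bm{s}_1,\ldots,\bm{s}_m\}\setminus\{\bm{s}_i\}$, the diagonal index $i$ never appears among the neighbors of $\bm{s}_i$, so the three cases in \eqref{eq:V.rho.kappa} partition the row index set $\{1,\ldots,m\}$ without overlap. This is the only point where any care is needed.

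Next I would compute the $i$th row sum explicitly. The off-diagonal, off-neighbor entries vanish, so
\[
\sum_{j=1}^m v_{ij} = v_{ii} + \sum_{j:\,\bm{s}_j\in N(\bm{s}_i)} v_{ij} = \rho + \kappa \sum_{j:\,\bm{s}_j\in N(\bm{s}_i)} w_{ik_i(\bm{s}_j)} = \rho + \kappa,
\]
where the final equality uses the identification $\bm{s}_j=N(\bm{s}_i)[k_i(\bm{s}_j)]$ to rewrite the sum over neighbors as $\sum_{k=1}^{|N(\bm{s}_i)|} w_{ik}$, followed by the assumed normalization $\sum_{k=1}^{|N(\bm{s}_i)|} w_{ik}=1$. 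Since this holds for every $i=1,\ldots,m$, the assumption $\rho+\kappa\leq 1$ yields $\sum_{j=1}^m v_{ij}\leq 1$ for all $i$, which is exactly Condition 2 of \Cref{thm:stationary1}.

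It then remains to check the standing hypotheses of the theorem, namely nonnegativity of all entries and strict positivity of each row sum. Nonnegativity is immediate from $\rho\geq 0$, $\kappa\geq 0$, and $w_{ik}\in(0,1]$, while strict positivity of the row sums follows from the display above together with $\rho+\kappa>0$. With all hypotheses of \Cref{thm:stationary1} (under Condition 2) thereby verified, I would invoke the theorem to conclude that $\{U^t(\bm{s}_{1:m}):t=1,2,\ldots,T\}$ is stationary with an invariant distribution as $T\to\infty$.

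Because the corollary is just a specialization of \Cref{thm:stationary1}, I do not expect any genuine analytic obstacle; the proof is essentially a row-sum computation. The one thing to watch is the indexing bookkeeping: the normalization is stated over the neighbor index $k\in\{1,\ldots,|N(\bm{s}_i)|\}$ rather than over the global index $j$, so one must pass through the map $k_i(\cdot)$ to rewrite $\sum_{j:\,\bm{s}_j\in N(\bm{s}_i)} w_{ik_i(\bm{s}_j)}$ as $\sum_{k=1}^{|N(\bm{s}_i)|} w_{ik}$ before applying the constraint. Once this identification is made, the bound $\rho+\kappa\leq 1$ transfers directly to the row sums and the result follows.
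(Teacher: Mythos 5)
Your proposal is correct and follows essentially the same route as the paper: compute each row sum of $\bfV$ as $\rho+\kappa\in(0,1]$ using the normalization $\sum_{k=1}^{|N(\bm{s}_i)|}w_{ik}=1$, note nonnegativity of the entries, and invoke Condition 2 of \Cref{thm:stationary1}. Your added care about the index map $k_i(\cdot)$ and the disjointness of the three cases in \eqref{eq:V.rho.kappa} is sound bookkeeping that the paper leaves implicit, but it does not change the argument.
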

	
	\begin{proof}[Proof of Corollary \ref{cor:rho.kappa}]
		The construction \eqref{eq:V.rho.kappa} ensures that all entries in $\bfV$ are nonnegative. For any $i=1,2,\ldots,m$, 
		\begin{align*}
			& \sum_{j=1}^m v_{ij} = \rho + \sum_{j:\:\bm{s}_j \in  N(\bm{s}_i)} \kappa w_{ik_i(\bm{s}_j)} + 0 = \rho + \kappa \cdot \sum_{k=1}^{|N(\bm{s}_i)|}w_{ik} =\rho + \kappa \in (0,1].
		\end{align*}
		Hence, Condition 2 of Theorem \ref{thm:stationary1} is satisfied, and the conclusion follows.
	\end{proof}
	
	We introduce $(T-1)\sum_{i=1}^m|N(\bm{s}_i)|\leq mh_s(T-1)\ll m^2(T-1)$ latent variables $\{Z_{ij}^t:\,t=1,\ldots,T-1,\,i=1,\ldots,m,\,j=0,1,\ldots,|N(\bm{s}_i)|\}$ that lead to fully conjugate Gibbs sampling steps of $U^t(\bm{s}_i)$'s, $c,\kappa,\rho$ for the model specified by \eqref{yPois}, \eqref{eq:ARG1}, and \eqref{eq:V.rho.kappa}. For each spatial location $\bm{s}_i$, we let $U^1(\bm{s}_i) ~|~ c \overset{\ind}{\sim} \Ga\left(\alpha, \frac{1}{c} \right)$ and for each $t=2,\ldots,T$, 
	\begin{equation}\label{UmodelEq}
		U^t(\bm{s}_i) ~|~ Z_{i0}^{t-1},Z_{i1}^{t-1}, Z_{i2}^{t-1},\ldots,Z_{i|N(\bm{s}_i)|}^{t-1},c\overset{\ind}{\sim} \Ga\left(\alpha+\sum_{j=0}^{|N(\bm{s}_i)|}Z_{ij}^{t-1}, \frac{1}{c} \right),
	\end{equation}
	where $Z^{t-1}(\bm{s}_i)$ in \eqref{eq:ARG1} equals $\sum_{j=0}^{|N(\bm{s}_i)|}Z_{ij}^{t-1}$ in \eqref{UmodelEq} and
	\begin{align}\label{ZsepmodelEq}
		& Z_{ij}^{t-1} ~|~ U_{N(\bm{s}_i)[j]}^{t-1},\kappa,c\overset{\ind}{\sim} \Pois\left(\frac{\kappa}{c}\cdot w_{ij}U_{N(\bm{s}_i)[j]}^{t-1}\right),\;j=1,2,\ldots,|N(\bm{s}_i)|\text{ independent of}\nonumber\\
		& Z_{i0}^{t-1} ~|~ U^{t-1}(\bm{s}_i),\rho,c\overset{\ind}{\sim} \Pois\left(\frac{\rho}{c}\cdot U^{t-1}(\bm{s}_i)\right).
	\end{align}
	Hence, for each $(t,i)$ with $t\in\{2,\ldots,T\}$, 
	\begin{equation}\label{ZmodelEq}
		Z^{t-1}(\bm{s}_i)=\sum_{j=0}^{|N(\bm{s}_i)|}Z_{ij}^{t-1}~\vert~c,U^{t-1}(\bm{s}_i),\rho,U_{N(\bm{s}_i)}^{t-1},\kappa\sim\Pois\left(\lambda_{it}\right),
	\end{equation}
	where $\lambda_{it}=\frac{\kappa}{c}\cdot\sum_{j=1}^{|N(\bm{s}_i)|}w_{ij}U^{t-1}_{N(\bm{s}_i)[j]}+\frac{\rho}{c}\cdot U^{t-1}(\bm{s}_i)$ and $U_{N(\bm{s}_i)}^{t-1} = \left\lbrace U_{N(\bm{s}_i)[j]}^{t-1},\,j=1,2,\ldots,|N(\bm{s}_i)|\right\rbrace$. $U^t(\bm{s}_i)~|~c,U^{t-1}(\bm{s}_i),\rho,U_{N(\bm{s}_i)}^{t-1},\kappa\sim$ Non-Central $\Ga\left(\alpha,\frac{1}{c},\lambda_{it}\right)$, with mean $(\alpha +\lambda_{it}) c$ and variance $(\alpha +2\lambda_{it}) c^2$. See \Cref{countFlowchart} for the structure and flow of our model.
	
	\begin{figure}[htb]
		\centering
		\includegraphics[width=\textwidth]{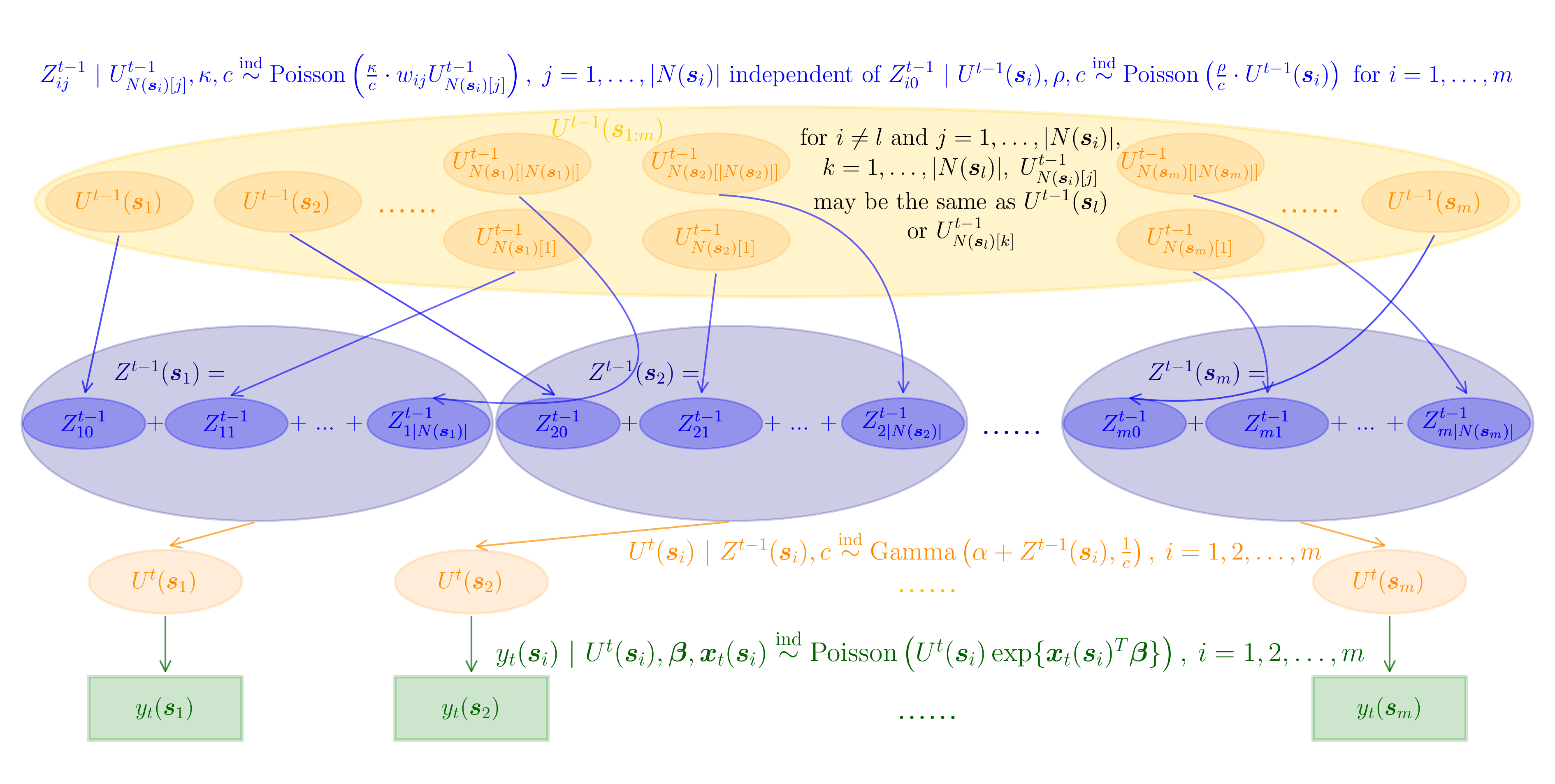}     
		\caption{A schematic diagram of our model specified by \eqref{yPois}, \eqref{UmodelEq}, and \eqref{ZsepmodelEq}.}
		\label{countFlowchart}
	\end{figure}
	
	The hyperparameter $\alpha$ satisfies the Feller condition \citep{Creal2017} $\alpha>1$; $w_{ij}$'s are positive constants satisfying $\sum_{k=1}^{|N(\bm{s}_i)|}w_{ik}=1$ for all $i=1,\ldots,m$; positive parameters $\rho,\kappa$ satisfy $\rho+\kappa \leq 1$. Therefore, \Cref{cor:rho.kappa} applies and thus the process $\{U^t(\bm{s}_{1:m}):t=1,2,\ldots,T\}$ is guaranteed stationary. $\rho$ is imposed a conjugate truncated gamma prior $\pi(\rho)\sim\tGa(a_{\rho},b_{\rho},0,1)$, and $\kappa$ is imposed a conjugate truncated gamma prior $\pi(\kappa)\sim\tGa(a_{\kappa},b_{\kappa},0,1-\rho)$. 
	The $\tGa(a,b,\text{lo},\text{hi})$ distribution is a $\Ga(a,b)$ distribution restricted to a specific interval $(\text{lo},\text{hi})$, where $0\leq\text{lo}<\text{hi}\leq\infty$. $c> 0$ is assigned a conjugate inverse gamma prior $\pi(c)\sim\IG(\alpha_c,\theta_c)$.  In presence of spatiotemporal covariates $\bm{x}_t(\bm{s}_i)$'s, the coefficient vector $\bm{\beta}\in\mathbb{R}^p$ is imposed a standard $p$-variate normal prior $\pi(\bm{\beta})\sim N_p\left(\bm{\mu}_{0\bm{\beta}}, {\Sigma}_{0\bm{\beta}}\right)$.

	\section{Posterior Sampling Algorithm} \label{sec:sampling}
	
	We detail below posterior sampling steps for the model \eqref{yPois} with spatiotemporal frailties $U^{1:T}(\bm{s}_{1:m})=\left\{U^t(\bm{s}_i):~ t=1,\ldots,T,\;i=1,\ldots,m\right\}$ modeled as in \eqref{UmodelEq} and \eqref{ZsepmodelEq}. Our parameter set is $\Theta=\{\bm{\beta}, c, U^{1:T}(\bm{s}_{i}), \kappa, \rho,Z_{ij}^{1:(T-1)},\,i=1,\ldots,m,\,j=0,1,\ldots,|N(\bm{s}_i)|\}$ when $m>1$ and $T>1$. $\Theta$ simplifies to $\{ \bm{\beta}, c, U^{1:T}(\bm{s}_{1}), \rho, Z_{10}^{1:(T-1)}\}$ when $m=1$ and $T>1$, and to $\{ \bm{\beta}, c, U^{1}(\bm{s}_{1:m})\}$ when $T=1$. If we do not consider $U^t(\bm{s}_i)$ directly affected by $U^{t-1}(\bm{s}_i)$, then $\rho$ and $\{Z^{t}_{i0}: t=1,\ldots,T-1,\,i=1,\ldots,m\}$ are not required anymore, i.e., $\rho=0$ in \eqref{eq:V.rho.kappa}. All parameters have full conditional densities corresponding to standard distributions except for the coefficient vector $\bm{\beta}\in\mathbb{R}^p$, which requires a Metropolis step in posterior sampling. See \Cref{appenA} for full deduction details of this algorithm.
	\begin{enumerate}[leftmargin=5mm]
		\item Sampling from the full conditional distribution of $c$:\\
		If $m>1$ and $T>1$, then
		\begin{align}\label{postC}
			f(c|\cdot) &\propto \pi(c) \times \prod_{i=1}^mf\left(U^1(\bm{s}_i)~|~c\right) \times\prod_{t=2}^T\prod_{i=1}^m  f\left(U^t(\bm{s}_i)~|~Z^{t-1}_{i0},Z^{t-1}_{i1},\ldots,Z^{t-1}_{i|N(\bm{s}_i)|},c\right) \times\nonumber\\
			& \quad \prod_{t=2}^T\prod_{i=1}^m f\left(Z_{i0}^{t-1}~|~U^{t-1}(\bm{s}_i),\rho,c\right) \times\prod_{t=2}^T\prod_{i=1}^m \prod_{j=1}^{|N(\bm{s}_i)|}f\left(Z^{t-1}_{ij}~|~U^{t-1}_{N(\bm{s}_i)[j]},\kappa,c\right) \nonumber\\
			&\sim \IG\left(\alpha_c+ Tm\alpha+2\sum_{t=1}^{T-1}\sum_{i=1}^m \sum_{j=0}^{|N(\bm{s}_i)|} Z^t_{ij},\: \theta_c^{\text{post}}\right),
		\end{align}
		where $\theta_c^{\text{post}}=\theta_c+\sum\limits_{t=1}^T\sum\limits_{i=1}^m U^t(\bm{s}_i) + \rho\cdot\sum\limits_{t=1}^{T-1}\sum\limits_{i=1}^m U^t(\bm{s}_i) + \kappa\cdot\sum\limits_{i=1}^m  \sum\limits_{j=1}^{|N(\bm{s}_i)|}w_{ij}\sum\limits_{t=1}^{T-1}U^t_{N(\bm{s}_i)[j]}$.
		
		If $T=1$, then
		\begin{align}
			f(c|\cdot) & \propto \pi(c) \times \prod_{i=1}^mf\left(U^1(\bm{s}_i)~|~c\right)  
			\sim  \IG\left(m\alpha+\alpha_c,\: \sum_{i=1}^m U^1(\bm{s}_i) +\theta_c\right). \nonumber
		\end{align}
		If $m=1$ and $T>1$, then
		\begin{align*}
			f(c|\cdot) &\propto \pi(c) \times f\left(U^1(\bm{s}_1)~|~c\right) \times\prod_{t=2}^T  f\left(U^t(\bm{s}_1)~|~Z^{t-1}_{10},c\right) \times \prod_{t=1}^{T-1} f\left(Z_{10}^{t}~|~U^{t}(\bm{s}_1),\rho,c\right)  \\
			&\sim  \IG\left(T\alpha+2\sum_{t=1}^{T-1}Z_{10}^t+\alpha_c,\: \sum_{t=1}^T U^t(\bm{s}_1)+\rho\cdot\sum_{t=1}^{T-1}U^t(\bm{s}_1)+\theta_c\right).
		\end{align*}
		\item Sampling from the full conditional distributions of $U^{1:T}(\bm{s}_{1:m})$:\par Fix any arbitrary $i\in\{1,2,\ldots,m\}$. For each $l\in\{1,2,\ldots,m\}$ such that $\bm{s}_i\in N(\bm{s}_l)$, we denote $k_l(\bm{s}_i)$ as the positive integer less than or equal to $h_s$ such that $\bm{s}_i=N(\bm{s}_l)[k_l(\bm{s}_i)]$. If $m>1$ and $T>1$, then for any $t\in\{2,3,\ldots,T-1\}$,
		\begin{align}\label{postU}
			&f\left(U^t(\bm{s}_i)|\cdot\right)  \propto f\left(y_t(\bm{s}_i)~|~U^t(\bm{s}_i),\bm{\beta}\right) \times f(U^t(\bm{s}_i)~|~Z^{t-1}_{i0},Z^{t-1}_{i1},\ldots,Z^{t-1}_{i|N(\bm{s}_i)|},c)\times\nonumber\\
			&\qquad f\big(Z^t_{i0}~|~U^t(\bm{s}_i),\rho,c\big)\times\prod_{l:\bm{s}_i\in N(\bm{s}_l)}^{1\leq l\leq m} f\big(Z^t_{lk_l(\bm{s}_i)}~|~U^t(\bm{s}_i),\kappa,c\big) \nonumber\\
			&\quad \sim \Ga\left(y_t(\bm{s}_i)+\alpha+\sum_{j=0}^{|N(\bm{s}_i)|}Z^{t-1}_{ij}+ Z^t_{i0}+\sum_{l:\bm{s}_i\in N(\bm{s}_l)}^{1\leq l\leq m}Z^t_{lk_l(\bm{s}_i)},\: \text{rate}_{U^t(\bm{s}_i)}\right), 
		\end{align}
		where $\text{rate}_{U^t(\bm{s}_i)}= e^{\bm{x}_t(\bm{s}_i)^{\T}\bm{\beta}}+\frac{1}{c}\left(1+\rho+\kappa\sum_{l:\bm{s}_i\in N(\bm{s}_l)}^{1\leq l\leq m}w_{lk_l(\bm{s}_i)}\right)$.       
		\begin{align*}             
			&f\left(U^t(\bm{s}_i)|\cdot\right) \sim \Ga\left(y_t(\bm{s}_i)+\alpha+ Z^t_{i0}+\sum_{l:\bm{s}_i\in N(\bm{s}_l)}^{1\leq l\leq m}Z^t_{lk_l(\bm{s}_i)},\: \text{rate}_{U^t(\bm{s}_i)}\right)\text{  for } t=1,\text{ and}\\           
			&f\left(U^t(\bm{s}_i)|\cdot\right) \sim \Ga\left(y_t(\bm{s}_i)+\alpha+\sum_{j=0}^{|N(\bm{s}_i)|}Z^{t-1}_{ij},\: e^{\bm{x}_t(\bm{s}_i)^{\T}\bm{\beta}}+\frac{1}{c}\right)\text{ for }t=T.\\
			&\text{If }T=1,\text{ then }
			f\left(U^1(\bm{s}_i)|\cdot\right) \sim \Ga\left(y_1(\bm{s}_i)+\alpha,\: e^{\bm{x}_1(\bm{s}_i)^{\T}\bm{\beta}}+\frac{1}{c}\right).\\
			&\text{If }m=1\text{ and }T>1,\text{ then for any }t\in\{2,\ldots,T-1\},\\
			&f\left(U^t(\bm{s}_1)|\cdot\right) \sim \Ga\left(y_t(\bm{s}_1)+\alpha+Z_{10}^{t-1}+Z_{10}^t,\: e^{\bm{x}_t(\bm{s}_1)^{\T}\bm{\beta}}+\frac{1}{c}(1+\rho)\right).\\
			&f\left(U^t(\bm{s}_1)|\cdot\right) \sim \Ga\left(y_t(\bm{s}_1)+\alpha+Z_{10}^{t},\: e^{\bm{x}_t(\bm{s}_1)^{\T}\bm{\beta}}+\frac{1}{c}(1+\rho)\right)\text{ for }t=1,\text{ and }\\
			&f\left(U^t(\bm{s}_1)|\cdot\right) \sim \Ga\left(y_t(\bm{s}_1)+\alpha+Z_{10}^{t-1},\: e^{\bm{x}_t(\bm{s}_1)^{\T}\bm{\beta}}+\frac{1}{c}\right)\text{ for }t=T.		
		\end{align*}
		\item Sampling from the full conditional distribution of $\rho$ when $T>1$:
		\begin{align}\label{postRho}
			f\left(\rho|\cdot\right) &\propto \pi(\rho) \times \prod_{t=1}^{T-1}\prod_{i=1}^m f\left(Z_{i0}^t~|~U^t(\bm{s}_i),\rho,c\right)  \nonumber\\
			& \sim \tGa\left(a_{\rho}+\sum_{t=1}^{T-1}\sum_{i=1}^m Z^t_{i0}, \: b_{\rho}+\frac{1}{c}\sum_{t=1}^{T-1}\sum_{i=1}^mU^{t}(\bm{s}_i),\:0,\:1\right).
		\end{align}
		\item Sampling from the full conditional distribution of $\kappa$ when $m>1$ and $T>1$:
		\begin{align}\label{postKappa}
			f\left(\kappa|\cdot\right) &\propto \pi(\kappa) \times \prod_{t=1}^{T-1}\prod_{i=1}^m \prod_{j=1}^{|N(\bm{s}_i)|} f\left(Z_{ij}^t~|~U^t_{N(\bm{s}_i)[j]},\kappa,c\right) \nonumber\\
			& \sim \tGa\left(a_{\kappa}+\sum_{t=1}^{T-1}\sum_{i=1}^m \sum_{j=1}^{|N(\bm{s}_i)|}Z^t_{ij}, \: \text{rate}^{\text{post}}_{\kappa},\:0,\:1-\rho\right), 
		\end{align}
		where $\text{rate}^{\text{post}}_{\kappa}=b_{\kappa}+\frac{1}{c}\sum_{i=1}^m\sum_{j=1}^{|N(\bm{s}_i)|}w_{ij}\sum_{t=1}^{T-1}U^t_{N(\bm{s}_i)[j]}$.
		\item Sampling from the full conditional distributions of $Z^t_{ij}$'s when $T>1$:\par
		Fix any arbitrary $t\in\{1,2,\ldots,T-1\}$, $i\in\{1,2,\ldots,m\}$. Then 
		\begin{align}\label{postZi0}
			f\left(Z^t_{i0}|\cdot\right)&\propto f\left(U^{t+1}(\bm{s}_i)~|~Z^t_{i0},Z^t_{i1},\ldots,Z^t_{i|N(\bm{s}_i)|},c\right) \times  f\left(Z^t_{i0}~|~U^t(\bm{s}_i),\rho,c\right)\nonumber\\
			&\sim \Bessel\left(\alpha+\sum_{k=1}^{|N(\bm{s}_i)|}Z^t_{ik}-1, \frac{2}{c}\sqrt{\rho \cdot U^{t+1}(\bm{s}_i)U^t(\bm{s}_i)}\right). 
		\end{align}
		When $m>1$, we further fix any arbitrary $j\in\{1,2,\ldots,|N(\bm{s}_i)|\}$. Then 
		\begin{align}\label{postZi}
			f\left(Z^t_{ij}|\cdot\right)&\propto f\left(U^{t+1}(\bm{s}_i)~|~Z^t_{i0},Z^t_{i1},\ldots,Z^t_{i|N(\bm{s}_i)|},c\right) \times  f\left(Z^t_{ij}~|~U^t_{N(\bm{s}_i)[j]},\kappa,c\right)\nonumber\\
			&\sim \Bessel\left(\alpha+\sum_{k=0,k\neq j}^{|N(\bm{s}_i)|}Z^t_{ik}-1,\: \frac{2}{c}\sqrt{\kappa w_{ij}U^{t+1}(\bm{s}_i)U^t_{N(\bm{s}_i)[j]}}\right),
		\end{align}
		where $X\sim  \Bessel(\nu,a)$ has the probability mass function $\mathbb{P}(X=n\mid a,\nu)=\frac{(a/2)^{2n+\nu}}{I_{\nu}(a)\Gamma(n+\nu+1)n!}$ for $n\in \NN$, $a>0,\nu>-1$, and $I_{\nu}(\cdot)$ is the modified Bessel function of the first kind. We generate Bessel random variables via the efficient rejection sampling algorithms in Section 3 of \citet{Devroye2002}.
		\item Sampling from the full conditional distribution of $\bm{\beta}_{p\times 1}$ via a Metropolis step:
		\begin{align}\label{postBeta}
			&f\left(\bm{\beta}|\cdot\right) \propto \pi(\bm{\beta}) \times \prod_{i=1}^m\prod_{t=1}^T f\left(y_t(\bm{s}_i)~|~U^t(\bm{s}_i),\bm{\beta}\right) \\
			&\;\propto \exp\left\lbrace -\frac{1}{2}\left(\bm{\beta}-\bm{\mu}_{0\bm{\beta}}\right)^{\T}\Sigma_{0\bm{\beta}}^{-1}\left(\bm{\beta}-\bm{\mu}_{0\bm{\beta}}\right) + \sum_{i=1}^m\sum_{t=1}^T \left[y_t(\bm{s}_i)\cdot \bm{x}_t(\bm{s}_i)^{\T}\bm{\beta} -  U^t(\bm{s}_i) e^{\bm{x}_t(\bm{s}_i)^{\T}\bm{\beta}} \right]\right\rbrace  \nonumber 
		\end{align}
		is not a standard parametric family, and we resort to a Metropolis step for sampling $\bm{\beta}$.
		\par We consider a symmetric Markov kernel $Q:\mathbb{R}^p \times 
		\mathbb{R}^p\rightarrow \mathbb{R}^+$ with $
		Q\left(\bm{\beta}_a, \bm{\beta}_b\right)=Q(\bm{\beta}_b,\bm{\beta}_a)$ for all $\bm{\beta}_a,\bm{\beta}_b\in\mathbb{R}^p$. $
		Q\left(\bm{\beta}_a, \bm{\beta}_b\right)$ equals the density of a $N_{p}(\bm{\beta}_a, V)$ random variable evaluated at $\bm{\beta}_b$ with probability $p$ and the density of a $N_{p}(\bm{\beta}_a, 100V)$ random variable evaluated at $\bm{\beta}_b$ with probability $1-p$, 
		where $V$ is an estimate 
		of the inverse Hessian of $\ln\left[f(\bm{\beta}|\cdot)\right]$ and $p\in(0,1)$ should be quite large to give less probability to big moves in the parameter space. 
		At each MCMC iteration $w\in\mathbb{N}$, we let the current parameter estimate for $\bm{\beta}$ be $\bm{\beta}^{(w)}$. Since 
		\begin{align}\label{sampleBetaHessian}
			\frac{\partial^2}{\partial \bm{\beta}\partial \bm{\beta}^{\T}}\ln f(\bm{\beta}|\cdot) = -\Sigma_{0\bm{\beta}}^{-1} -  \sum_{i=1}^m\sum_{t=1}^T U^t(\bm{s}_i) e^{\bm{x}_t(\bm{s}_i)^{\T}\bm{\beta}}\bm{x}_t(\bm{s}_i)\bm{x}_t(\bm{s}_i)^{\T},
		\end{align}
		we set $V^{(w)} = \left[\Sigma_{0\bm{\beta}}^{-1} +  \sum_{i=1}^m\sum_{t=1}^T U^t(\bm{s}_i) e^{\bm{x}_t(\bm{s}_i)^{\T}\bm{\beta}^{(w)}}\bm{x}_t(\bm{s}_i)\bm{x}_t(\bm{s}_i)^{\T}\right]^{-1}$. We propose a new parameter value $\bm{\beta}^*$ from the density $q(\bm{\beta}|\bm{\beta}^{(w)})=Q\left(\bm{\beta}^{(w)},\bm{\beta}\right)$ and then set 
		\begin{equation*}
			\bm{\beta}^{(w+1)}=
			\left\{ \begin{array}{ll}
				\bm{\beta}^*,&\text{ with probability }\alpha\left(\bm{\beta}^{(w)},\bm{\beta}^*\right)\\ 
				\bm{\beta}^{(w)},&\text{ with probability }1-\alpha\left(\bm{\beta}^{(w)},\bm{\beta}^*\right)
			\end{array} \right. ,
		\end{equation*}
		where $\alpha\left(\bm{\beta}^{(w)},\bm{\beta}^*\right)=\min\left\lbrace\frac{f(\bm{\beta}^*|\cdot)}{f(\bm{\beta}^{(w)}|\cdot)},1\right\rbrace$ is the acceptance probability for this multivariate random walk Metropolis algorithm. 
	\end{enumerate}

	\section{Spatial and Temporal Predictions} \label{sec:prediction}
	Bayesian predictions at any arbitrary future time intervals and unseen spatial locations under our modeling framework are straightforward. We can decompose the integral representation of the posterior predictive distribution (PPD) into several known densities and then obtain the PPD via composition sampling. 
	Suppose we have an arbitrary number $q\in\mathbb{N},q\geq 1$ of future time periods $T+1,\dots,T+q$ of the same length and $r\in\mathbb{N},r\geq 1$ new spatial locations $\bm{s}_{(m+1):(m+r)}$. Then for Bayesian predictions given the corresponding covariates matrices $\left[ X_{(T+1):(T+q)}\big(\bm{s}_{1:m}\big)\right] _{qm\times p}$ and $X_{1:(T+q)}\big(\bm{s}_{(m+1):(m+r)}\big)_{(T+q)r\times p}$, if any, we can write the PPD as
	\begin{align}\label{PPDspatemp}
		&f\left(\bm{y}_{(T+1):(T+q)}\big(\bm{s}_{1:m}\big), \bm{y}_{1:(T+q)}\big(\bm{s}_{(m+1):(m+r)}\big)~|~\bm{y}_{1:T}\big(\bm{s}_{1:m}\big),X^{\text{new}}\big)\right)\nonumber\\
		=&\int_{\Theta}f\left(\bm{y}_{(T+1):(T+q)}\big(\bm{s}_{1:m}\big), \bm{y}_{1:(T+q)}\big(\bm{s}_{(m+1):(m+r)}\big)~|~\Theta,\bm{y}_{1:T}\big(\bm{s}_{1:m}\big),X^{\text{new}}\right)\nonumber\\
		&\times\pi\left(\Theta~|~\bm{y}_{1:T}\big(\bm{s}_{1:m}\big)\right)d\Theta,
	\end{align}
	where $\Theta= 	
	\left(\bm{\beta},U^{(T+1):(T+q)}\left(\bm{s}_{1:m}\right), U^{1:(T+q)}\big(\bm{s}_{(m+1):(m+r)}\big),U^{1:T}(\bm{s}_{1:m}),\kappa,\rho,c\right)$ and 
	$X^{\text{new}} = \left\lbrace X_{(T+1):(T+q)}\big(\bm{s}_{1:m}\big), X_{1:(T+q)}\big(\bm{s}_{(m+1):(m+r)}\big)\right\rbrace$. 	
	We can then partition the integral in \eqref{PPDspatemp} into
	\begin{align} 
		&\int_{\Theta}\underbrace{f\left(\bm{y}_{(T+1):(T+q)}\big(\bm{s}_{1:m}\big), \bm{y}_{1:(T+q)}\big(\bm{s}_{(m+1):(m+r)}\big)~|~\bm{\beta},U^{(T+1):(T+q)}\left(\bm{s}_{1:m}\right), U^{1:(T+q)}\big(\bm{s}_{(m+1):(m+r)}\big),X^{\text{new}}\right)}_{T_1}\nonumber\\	
		&\quad\times\underbrace{f\left(U^{1:(T+q)}\left(\bm{s}_{(m+1):(m+r)}\right)~|~U^{1:(T+q-1)}\left(\bm{s}_{1:m}\right),\kappa,\rho,c\right)}_{T_2}\nonumber\\&\quad\times\underbrace{f\left(U^{(T+1):(T+q)}\left(\bm{s}_{1:m}\right)~|~U^{T}\left(\bm{s}_{1:m}\right),\kappa,\rho,c\right)}_{T_3}\times\underbrace{\pi\left(\bm{\beta},U^{1:T}\left(\bm{s}_{1:m}\right),\kappa,\rho,c~|~\bm{y}_{1:T}(\bm{s}_{1:m})\right)}_{T_4}d\Theta, \nonumber
	\end{align}
	where $T_1$ is the likelihood, $T_4$ is the parameters' posterior distribution obtained from the original model fit's MCMC sampler, $\hat{U}^{1:(T+q)}\left(\bm{s}_{(m+1):(m+r)}\right)$ conditioning on $U^{1:T}\left(\bm{s}_{1:m}\right),\kappa,\rho,c$, $\hat{U}^{(T+1):(T+q-1)}\left(\bm{s}_{1:m}\right)$ can be predicted based on \eqref{predZU} for $T_2$, and $T_3$ is
	\begin{align}
		&f\left(U^{(T+1):(T+q)}\left(\bm{s}_{1:m}\right)~|~U^{T}\left(\bm{s}_{1:m}\right),\kappa,\rho,c\right) = \prod_{t=T+1}^{T+q}f\left(U^{t}\left(\bm{s}_{1:m}\right)~|~U^{t-1}\left(\bm{s}_{1:m}\right),\kappa,\rho,c\right)\nonumber\\
		=&\prod_{t=T+1}^{T+q}\prod_{i=1}^mf\left(U^{t}\left(\bm{s}_{i}\right)~|~U^{t-1}_{N(\bm{s}_i)},\kappa,U^{t-1}\left(\bm{s}_{i}\right),\rho,c\right)\nonumber\\
		=&\prod_{t=T+1}^{T+q}\prod_{i=1}^m\text{Non-Central} \Ga\left(\alpha,\frac{1}{c},\:\lambda_{it}\right),
	\end{align}
	where $\lambda_{it} = \frac{\kappa}{c}\cdot\sum_{j=1}^{|N(\bm{s}_i)|}w_{ij}U^{t-1}_{N(\bm{s}_i)[j]}+\frac{\rho}{c}\cdot U^{t-1}(\bm{s}_i)$. Assume $m>h_s$. For each new location $\bm{s}$ outside of the training set ${\mathcal{S}}=\{\bm{s}_1,\bm{s}_2,\ldots,\bm{s}_m\}$, we let $N(\bm{s})$ consist of the $h_s$ nearest neighbors of $\bm{s}$ in ${\mathcal{S}}$ with the attached weights $\{w_j(\bm{s}):j=1,\ldots,h_s\}$ that satisfy $w_{j}(\bm{s})\in(0,1]$ for all $j=1,\ldots,h_s$ and $\sum_{k=1}^{h_s}w_{k}(\bm{s})=1$. 
	Independently for all new locations, we let $U^1(\bm{s}) ~|~c\sim\Ga\left(\alpha,\frac{1}{c}\right)$, and for $t=2,3,\ldots,T+q$,
	\begin{align}\label{predZU}
		& U^t(\bm{s}) ~|~ Z^{t-1}(\bm{s}),c \sim \Ga\left(\alpha+Z^{t-1}(\bm{s}), \frac{1}{c} \right),\nonumber\\
		&Z^{t-1}(\bm{s})~|~U^{t-1}_{N(\bm{s})},\kappa,U^{t-1}(\bm{s}),\rho,c\sim\Pois\left(\frac{\kappa}{c}\cdot\sum_{j=1}^{h_s}w_{j}(\bm{s})U^{t-1}_{N(\bm{s})[j]} + \frac{\rho}{c}\cdot U^{t-1}(\bm{s})\right).
	\end{align}
	Then each $U^t(\bm{s})~|~U^{t-1}_{N(\bm{s})},\kappa,U^{t-1}(\bm{s}),\rho,c\sim$ Non-Central $\Ga\left(\alpha,\frac{1}{c},\lambda_t(\bm{s})\right)$, where $\lambda_t(\bm{s}) = \frac{\kappa}{c}\sum_{j=1}^{h_s}w_{j}(\bm{s})U^{t-1}_{N(\bm{s})[j]} + \frac{\rho}{c} U^{t-1}(\bm{s})$, whose mean and variance are $\alpha c+\kappa\sum\limits_{j=1}^{h_s}w_{j}(\bm{s})U^{t-1}_{N(\bm{s})[j]} + \rho  U^{t-1}(\bm{s})$ and $\alpha c^2+2c\left[\kappa\sum\limits_{j=1}^{h_s}w_{j}(\bm{s})U^{t-1}_{N(\bm{s})[j]}+ \rho  U^{t-1}(\bm{s})\right]$, respectively. 
	
	\section{Simulation Experiments} \label{sec:simulation}
	This section first verifies the satisfactory parameter estimation and model fitting performance of our proposed model in \eqref{yPois} and \eqref{eq:ARG1} through extensive simulation experiments. This section then compares computational efficiency, modeling and prediction performance of our framework with alternative Bayesian spatiotemporal count methods on simulated data from our MCMC model and from MCMC models in \texttt{CARBayesST} \citep{Lee2018}. 
	
	We first performed three groups of simulation experiments that all simulated data from \eqref{yPois} without external covariates $\bm{x}_t(\bm{s}_i)$'s and differ in variations of \eqref{eq:ARG1} modeling spatiotemporal frailties $U^{1:T}(\bm{s}_{1:m})=\left\{U^t(\bm{s}_i):~ t=1,\ldots,T,\;i=1,\ldots,m\right\}$ with the fixed hyperparameter $\alpha=1.0001$. We specified $T=100$ time intervals of equal length and $m=11^2=121$ spatial locations $\bm{s}_i=(i_1,i_2)$ for $i=1,\ldots,m$ on an equispaced 2-dimensional grid, where $(i_1,i_2)\in\{1,\ldots,11\}\times\{1,\ldots,11\}$. Within each group, various sets of true parameter values for $\rho$ (only for the first group), $\kappa$, $c$, and a few hyperparameter combinations were adopted. We ran each MCMC chain for $3\times 10^4$ burn-in iterations and $2\times 10^4$ post-burn-in iterations, which were thinned to $10^4$ posterior samples for analysis.
	
	The first group adopted our standard spatiotemporal frailty model \eqref{UmodelEq} and \eqref{ZsepmodelEq} for data simulation and model fitting. For each spatial location $\bm{s}_i$, $N(\bm{s}_i)$ consists of the $h_s=12$ nearest neighbors of $\bm{s}_i$ in $\{\bm{s}_1,\dots,\bm{s}_{m}\}\setminus\{\bm{s}_i\}$ and $w_{ij} = 1/h_s$, $j=1,\ldots,h_s$ in \eqref{ZsepmodelEq}. For entries of the $i$th row in the matrix $\bfV=\{v_{ij}:i,j=1,\ldots,m\}$ in \eqref{eq:ARG1}, $v_{ii}=\rho>0$, $v_{ij}=\kappa/h_s>0$ for all $j$ such that $\bm{s}_j\in N(\bm{s}_i)$, and $v_{ij}=0$ for all $j\neq i$ such that $\bm{s}_j\notin N(\bm{s}_i)$.
	
	The second group used our spatiotemporal frailty model \eqref{eq:ARG1} with entries in the matrix $\bfV=\{v_{ij}:i,j=1,\ldots,m\}$ in \eqref{eq:ARG1} given as follows for data simulation and model fitting. For each spatial location $\bm{s}_i$, $N(\bm{s}_i)$ consists of $\bm{s}_i$ itself and the $h_s-1=12-1=11$ nearest neighbors of $\bm{s}_i$ in $\{\bm{s}_1,\dots,\bm{s}_{m}\}\setminus\{\bm{s}_i\}$. $v_{ij}=\kappa/h_s>0$ for all $j$ such that $j=i$ or $\bm{s}_j\in N(\bm{s}_i)$ to weight $\bm{s}_i$ and each of its neighbors equally, and $v_{ij}=0$ for all $j$ such that $j\neq i$ and $\bm{s}_j\notin N(\bm{s}_i)$. Hence, $\sum_{j=1}^m v_{ij} = h_s\times \kappa/h_s + 0 = \kappa \in(0,1)$. As $v_{ij}\geq 0$ for all $(i,j)$ and $0<\sum_{j=1}^m v_{ij}< 1$ for all $i=1,\ldots,m$, Condition 2 in \Cref{thm:stationary1} applies and the process $\{U^t(\bm{s}_{1:m}):t=1,2,\ldots,T\}$ is guaranteed temporally stationary. 
	
	The third group assumed that the $m$ spatial locations are labeled as $\bm{s}_1,\ldots,\bm{s}_m$, starting from the central location $\bm{s}_1$ and going clockwise in the 2-dimensional grid, and employed a slightly varied version of our spatiotemporal frailty model \eqref{eq:ARG1} with no autocorrelation detailed as follows for data simulation and model fitting. We let $N(\bm{s}_1)=\emptyset$ and define for each $i\in\{2,3,\dots,m\}$ location $\bm{s}_i$'s neighboring set $N(\bm{s}_i)$ as the $\min\{h_s,i-1\}$ nearest neighbors of $\bm{s}_i$ in $\{\bm{s}_1,\dots,\bm{s}_{i-1}\}$. We assume \eqref{eq:ARG1} for all $\bm{s}_i$ with $i=2,\ldots,m$, $U^t(\bm{s}_1)\sim \Ga(\alpha,1/c)$, and set all $m$ entries in the first row of $\bfV=\{v_{ij}:i,j=1,\ldots,m\}$ as 0. For each $i=2,\ldots,m$, $v_{ij}=\kappa/|N(\bm{s}_i)|$ for all $j<i$ such that $\bm{s}_j\in N(\bm{s}_i)$, where $|N(\bm{s}_i)|=\min\{h_s,\:i-1\}=\min\{12,\:i-1\}$, and $v_{ij}=0$ for all $j$ such that $\bm{s}_j\notin N(\bm{s}_i)$. Hence, $\sum_{j=1}^m v_{ij} = |N(\bm{s}_i)|\times \kappa/|N(\bm{s}_i)| + 0 = \kappa \in(0,1)$. As $v_{ij}\geq 0$ for all $(i,j)$ and $0<\sum_{j=1}^m v_{ij}< 1$ for all $i=2,\ldots,m$, \Cref{thm:stationary2} in \Cref{appenB}, a theorem similar to \Cref{thm:stationary1}, applies and guarantees the temporal stationarity of the process $\{U^t(\bm{s}_{1:m}):t=1,2,\ldots,T\}$. 
	
	We calculate Mean Absolute Error (MAE)$ = \frac{1}{mT}\sum_{t=1}^T\sum_{i=1}^m \big|y_t(\bm{s}_i)-\hat{y}_t(\bm{s}_i)\big|$, where $\hat{y}_t(\bm{s}_i)$ is the posterior mean of the estimated expectation of the response in time interval $t$ and spatial location $\bm{s}_i$ for each $(t,i)$ (\Cref{postMeanComplete,MAE} and \Cref{plot:absErrorBoxplot_hyparaDefault,plot:absErrorBoxplot_hyparaMiddle}), and present summary statistics for posterior samples of parameters $\rho$ (for the first group only), $\kappa$, $c$, $U_t(\bm{s}_i)$'s, and the actual observed counts $y_t(\bm{s}_i)$'s corresponding to true parameter values $(\rho,\kappa,c)\in \{0.4\}\times \{0.4\}\times \{5,10,15,20,50,100,500,1000,2000,5000\}$ and hyperparameter combinations \texttt{hypara1} and \texttt{hypara2} for the first simulation group and true parameter values $(\kappa,c)\in\{0.35,0.7\}\times \{5,10,15,20,50,100,500,1000,2000,5000\}$ and hyperparameter combinations \texttt{hypara3} and \texttt{hypara4} for the second and the third simulation groups.
	\begin{enumerate}
		\setlength{\itemsep}{0pt}
		\setlength{\parskip}{0pt}
		\item \texttt{hypara1}: $(\alpha_c, \theta_c) = (2,10)$ for $c$, $(a_{\kappa}, b_{\kappa}) = (0.55,1)$ for $\kappa$, $(a_{\rho}, b_{\rho}) = (0.4,1)$ for $\rho$;
		\item \texttt{hypara2}: $(\alpha_c, \theta_c) = (2,50)$ for $c$, $(a_{\kappa}, b_{\kappa}) = (0.4,1)$ for $\kappa$, $(a_{\rho}, b_{\rho}) = (0.55,1)$ for $\rho$.
		\item \texttt{hypara3}: $(\alpha_c, \theta_c) = (2,10)$ for $c$ and $(a_{\kappa}, b_{\kappa}) = (0.9,1)$ for $\kappa$;
		\item \texttt{hypara4}: $(\alpha_c, \theta_c) = (2,50)$ for $c$ and $(a_{\kappa}, b_{\kappa}) = (0.4,1)$ for $\kappa$.
	\end{enumerate}
	The obtained results presented in \Cref{MAE,postMeanComplete,postMeanKappa,postMeanC,c5complete,c10complete,c15complete,c20complete,c50complete,c100complete,c500complete,c1000complete,c2000complete,c5000complete,undirectedkappa0.35c5,undirectedkappa0.35c10,undirectedkappa0.35c15,undirectedkappa0.35c20,undirectedkappa0.35c50,undirectedkappa0.35c100,undirectedkappa0.35c500,undirectedkappa0.35c1000,undirectedkappa0.35c2000,undirectedkappa0.35c5000,undirectedkappa0.7c5,undirectedkappa0.7c10,undirectedkappa0.7c15,undirectedkappa0.7c20,undirectedkappa0.7c50,undirectedkappa0.7c100,undirectedkappa0.7c500,undirectedkappa0.7c1000,undirectedkappa0.7c2000,undirectedkappa0.7c5000,directedkappa0.35c5,directedkappa0.35c10,directedkappa0.35c15,directedkappa0.35c20,directedkappa0.35c50,directedkappa0.35c100,directedkappa0.35c500,directedkappa0.35c1000,directedkappa0.35c2000,directedkappa0.35c5000,directedkappa0.7c5,directedkappa0.7c10,directedkappa0.7c15,directedkappa0.7c20,directedkappa0.7c50,directedkappa0.7c100,directedkappa0.7c500,directedkappa0.7c1000,directedkappa0.7c2000,directedkappa0.7c5000} and \Cref{plot:absErrorBoxplot_hyparaDefault,plot:absErrorBoxplot_hyparaMiddle} corroborate our model's accurate parameter estimation and in-sample prediction capabilities for spatiotemporal count data spanning a wide range.
	
	\begin{table}[htb]
		{
			\begin{adjustwidth}{-0cm}{-0cm}
				\begin{center}
					\scalebox{0.9}{\begin{tabular}{|*{9}{c|}}
							\hline
							\multirow{2}{*}{\textbf{True $c$}}& \multicolumn{2}{|c|}{Posterior Mean of $c$} & \multicolumn{2}{|c|}{Posterior Mean of $\kappa$} & \multicolumn{2}{|c|}{Posterior Mean of $\rho$} & \multicolumn{2}{|c|}{MAE}\\ 
							\cline{2-9}
							\textbf{Value}  & \texttt{hypara1} & \texttt{hypara2} & \texttt{hypara1} & \texttt{hypara2} & \texttt{hypara1} & \texttt{hypara2} & \texttt{hypara1} & \texttt{hypara2}\\
							\hline
							$c=5$ & 4.987 & 4.997 & 0.4023 & 0.4030 & 0.4040 & 0.4031 & 1.272418  & 1.269910\\
							\hline
							$c=10$ & 9.936 & 9.949 & 0.4031 & 0.4025 & 0.4035 & 0.4040 & 1.33638 & 1.334743 \\
							\hline
							$c=15$ & 14.91 & 14.92 & 0.4088 & 0.4074 & 0.3979 &  0.3991 & 1.350147 & 1.348748 \\
							\hline
							$c=20$ & 19.89 & 19.87 & 0.4064 & 0.4072 & 0.4001 & 0.3995 & 1.363482 & 1.362761 \\
							\hline
							$c=50$ & 49.86 & 49.84 & 0.4061 & 0.4069 & 0.3998 & 0.3991 & 1.380839 & 1.379167 \\
							\hline
							$c=100$ & 99.57 & 99.53 & 0.4061 & 0.4068 & 0.4002 & 0.3995 & 1.387671 & 1.389354  \\
							\hline
							$c=500$ & 497.5 & 497.7 & 0.4065 & 0.4062 & 0.3999 & 0.4001 & 1.429515 & 1.433766  \\
							\hline
							$c=1000$ &  996.9 & 994.9  & 0.4067 & 0.4073 & 0.3993 & 0.3991 & 1.483429 & 1.477110 \\
							\hline
							$c=2000$ & 1990 & 1994 & 0.4061 & 0.4072 & 0.4001 & 0.3989 & 1.570770 & 1.590506 \\
							\hline
							$c=5000$ & 4983 & 4973 & 0.4072 & 0.4064 & 0.3989 & 0.3999 & 1.816319 &  1.821356\\
							\hline
					\end{tabular}}
				\end{center}
			\end{adjustwidth}
			\caption{MAE and Mean value of the kept $10^4$ posterior samples of $c$, $\kappa$, $\rho$ for each of the $2\times 10=20$ settings for the first simulation group with true parameter values $\kappa=\rho=0.4$.}
			\label{postMeanComplete}
		}
	\end{table}
	
	We then implemented our model with hyperparameter combination \texttt{hypara1} given earlier and six MCMC models--\texttt{ST.CARlinear}, \texttt{ST.CARanova}, \texttt{ST.CARsepspatial}, \texttt{ST.CARar}, \texttt{ST.CARadaptive}, and \texttt{ST.CARlocalised}--from the R package \texttt{CARBayesST} \citep{Lee2018} and used \href{https://www.r-inla.org/}{the \texttt{R-INLA} package} to fit the last two models for spatiotemporal areal data (Equations 7 and 8 in Section 3.3 of \citealt{Blangiardo2013}), denoted as \texttt{INLA.ST1} and \texttt{INLA.STint}, on data simulated from our model with the same mechanism as the first of the three simulation groups mentioned earlier in this section, where we specified $h_s=4$, $c=50$, $\kappa=0.4$, $\rho=0.1$, and on data generated from the aforementioned \texttt{CARBayesST} models with the same mechanisms as in the package manual examples \citep{Lee2018}, where \texttt{ST.CARlocalised} and \texttt{ST.CARadaptive} share a common mechanism. We set $T=50$ with two choices of the number of spatial locations--$m=30^2=900$ and $m=40^2=1600$. 
	
	When fitting \texttt{CARBayesST} models, we specified \texttt{AR=1} for \texttt{ST.CARar} and \texttt{interaction = TRUE} for \texttt{ST.CARanova}. On data generated from our model, we specified the $m\times m$ adjacency matrix $W=(w_{lj})$ as binary, where $w_{lj}=1$ if and only if $\bm{s}_l\in N(\bm{s}_j)$ or $\bm{s}_j\in N(\bm{s}_l)$, and \texttt{G=6} for \texttt{ST.CARlocalised}, where \texttt{G} is the maximum number of distinct clusters permitted, indicated as 3 on data simulated from \texttt{ST.CARlocalised} itself. When fitting our model, we specified $|N(\bm{s}_i)|=h_s=4$ and $w_{ij} = 1/h_s$, $j=1,\ldots,h_s$ in \eqref{ZsepmodelEq} for all $i=1,\ldots,m$. We ran each MCMC chain for $10^4$ burn-in iterations and $10^4$ post-burn-in iterations, thinned to 5000 posterior samples for analysis. For both \texttt{INLA.ST1} and \texttt{INLA.STint}, the \texttt{graph} input for the spatial structured and unstructured components under a Besag-York-Mollie (BYM) specification \citep{Besag1991} is produced from the $W$ adjacency matrix for \texttt{CARBayesST} models. A logGamma$(1,0.1)$ prior is imposed on the four precisions in \texttt{INLA.ST1} and the five precisions in \texttt{INLA.STint}.   
	
	\begin{table}[h!]
		{
			\begin{adjustwidth}{-0cm}{-0cm}
				\begin{center}
					\scalebox{0.86}{\begin{tabular}{|*{8}{c|}} 
							\hline
							\backslashbox{\textbf{Model}}{\textbf{Measure}} & \texttt{runtime} & $\overline{\text{ESS}}$ & MAE & DIC & p.d. & WAIC & p.w. \\
							\hline
							our model & 5976 & $\approx5000$ & 0.8700 & 629253.4 & 76748.78 & 631860.1 & 40400.18\\
							\hline                     
							\texttt{ST.CARar} & 810.5 & 1831.7 & 0.6561196 &  634740.50 &     77915.41  &   615289.45   &   42213.60\\
							\hline                      
							\texttt{ST.CARanova} & 632.6 & 1815.29 & 0.1972 & 631168.50 & 79178.71 & 607686.27 & 40153.84 \\
							\hline
							\texttt{ST.CARsepspatial} & 781.5 & 1834.9 & 0.7036909 & 634851.46 & 77972.09   &  615489.38   &   42307.94\\
							\hline
							\texttt{ST.CARadaptive} & 26339.3 & 1830.7 & 0.7262935 & 634386.92 &     77719.66  &   614979.23   &   42107.87  \\
							\hline
							\texttt{ST.CARlocalised} & 5421.9 & 1825.8 & 1.393124 &  589610.92    &  27372.65  &   625742.84   &   45768.54 \\
							\hline
							\texttt{INLA.STint} & 0.49 & N.A. & 0.7418 & 637908.4 & 78591.9 & 627063.6 & 47732.64\\
							\hline 
					\end{tabular}}
				\end{center}
			\end{adjustwidth}
			\caption{Data simulated from our model with $T=50$ and $m=1600$: overall computation time (\texttt{runtime}) in seconds, mean response ESS ($\overline{\text{ESS}}$) at the 5000 kept post-burn-in MCMC iterations, and diagnostics metrics from our model and adequate \texttt{CARBayesST}, \texttt{INLA} models. \texttt{ST.CARlinear} and \texttt{INLA.ST1} perform very poorly and are thus excluded from the table.}
			\label{m1600T50poisGammaTimeDiags}
		}
	\end{table}
	
	\begin{table}[h!]
		{
			\begin{adjustwidth}{-0cm}{-0cm}
				\begin{center}
					\scalebox{0.81}{\begin{tabular}{|*{8}{c|}} 
							\hline
							\backslashbox{\textbf{Model}}{\textbf{Measure}} & \texttt{runtime} & $\overline{\text{ESS}}$ & MAE & DIC & p.d. & WAIC & p.w. \\
							\hline
							\texttt{ST.CARlinear} & 376.9 & 1758.5 & 0.7615936 & 203867.457   &   1245.449  &  203877.468   &   1231.952\\      
							\hline
							our model & 7164 & 1780.7 & 0.5877898 & 206284.7 & 16745.43 & 227533.2 & 19548.92\\
							\hline    
							\texttt{INLA.ST1} & 0.012 & N.A. & 0.7784977 & 206559.4 & 1063.286 & 206561.6 & 1051.365\\
							\hline
							\texttt{INLA.STint} & 0.37 & N.A. & 0.7676521 & 206537.9 & 2131.546 & 206562.2 & 2098.806\\
							\hline
							\hline
							\texttt{ST.CARsepspatial} & 743.2 & 312.18 & 0.8648668 & 229463.915    &  1759.231  &  229485.080   &   1727.627 \\      
							\hline
							our model & 5472 & $\approx 5000$ & 0.41761 & 240632.6 & 36262.85 & 250083.2 &  23388.25\\ 
							\hline  
							\texttt{INLA.ST1} & 0.08 & N.A. & 0.8761719 & 228936.3 & 506.5141 & 228947.8 & 514.6062\\
							\hline
							\texttt{INLA.STint} & 0.29 & N.A. & 0.8522007 & 227920.5 & 2151.013 & 227970.1 & 2141.197\\
							\hline
							\hline
							\texttt{ST.CARanova} & 571.4 & 1674.8 & 0.8549369 & 225248.6631  &    931.3869 &  225253.3885   &   925.1672 \\      
							\hline
							our model & 6732 & 1889.0 & 0.6350912 & 229445.7 & 19782.83 &  251996.4 & 21661.22\\ 
							\hline 
							\texttt{INLA.ST1} & 0.08 & N.A. & 0.8306218 & 219293.5 & 517.8178 & 219295.3 & 516.1339\\
							\hline
							\texttt{INLA.STint} & 0.29 & N.A. & 0.8192538 & 219275.8 & 1580.859 & 219299.1 &  1572.868\\
							\hline
							\hline
							\texttt{ST.CARar} & 732.6 & 168.17 & 0.7703520 & 213100.853  &    6325.963  &  213398.226   &   6088.797\\      
							\hline
							our model & 6444 & 2025 & 0.5986256 & 213717.5 & 19371.1 & 243104.1 &  24950.1 \\ 
							\hline  
							\texttt{INLA.ST1} & 0.08 & N.A. & 0.8936 & 226253.8 & 871.5126 & 226446 & 1051.778\\
							\hline
							\texttt{INLA.STint} & 0.41 & N.A. & 0.7496993 & 223827 & 13036.55 & 224140.1 & 11520.21\\
							\hline
							\hline
							\texttt{ST.CARadaptive} & 26260.2 & 592.73 & 2.626 & 466999.97  &    28277.32  &   468224.07   &   23389.50 \\      
							\hline
							\texttt{ST.CARlocalised} & 4322.2 & 1101.763 & 2.152 & 468133.15 &  32165.86  &   473010.07    &  28313.31\\      
							\hline
							our model & 6372 & 4183 & 1.610672 & 483729 & 58051.67 & 493412.3 & 34472.09\\ 
							\hline   
							\texttt{INLA.STint} & 1.08 & N.A. & 1.566 & 490315.3 & 54133.67 & 484575.3 & 35720.6\\
							\hline
					\end{tabular}}
				\end{center}
			\end{adjustwidth}
			\caption{Data simulated from five \texttt{CARBayesST} settings (\texttt{ST.CARadaptive} and \texttt{ST.CARlocalised} share the same simulated data set) with $T=50$ and $m=1600$: overall computation time (\texttt{runtime}) in seconds, mean response ESS ($\overline{\text{ESS}}$) at the 5000 kept post-burn-in MCMC iterations, and diagnostics metrics from the corresponding \texttt{CARBayesST} models, our model, \texttt{INLA.ST1}, and \texttt{INLA.STint}. \texttt{INLA.ST1} performs poorly on the last simulated data set, and that row is thus excluded from the table.}
			\label{m1600T50CARBayesSTtimeDiags}
		}
	\end{table}
	
	For each MCMC model, we recorded the overall computation time for $2\times 10^4$ MCMC iterations, denoted as \texttt{runtime}, and calculated the Effective Sample Size (ESS) for fitted response count values at the kept 5000 posterior iterations. Then, computation time per effective response sample, obtained by the \texttt{runtime}/$4\overline{\text{ESS}}$, where $\overline{\text{ESS}}$ denotes the mean of the $mT$ effective sample sizes corresponding to $y_t(\bm{s}_i)$s, would be an adequate metric to gauge the model's computational efficiency. We calculated MAE and considered model fit criteria including the Deviance Information Criterion (DIC), the Watanabe-Akaike Information Criterion (WAIC), and their estimated effective numbers of parameters (p.d for DIC and p.w. for WAIC) to evaluate model fitting performance.   \Cref{m1600T50poisGammaTimeDiags,m1600T50CARBayesSTtimeDiags,m900T50poisGammaTimeDiags,m900T50CARBayesSTtimeDiags} suggest that our MCMC model is comparable in computation time and modeling performance to and yields in all settings clearly higher ESS than \texttt{CARBayesST} MCMC models. Our framework also performs better than or comparably to \texttt{INLA.ST1} and \texttt{INLA.STint}.
	
	We further investigate the prediction capabilities of these Bayesian spatiotemporal count models. Our package, \texttt{spatempBayesCounts}, has a function \texttt{predictNewLocTime()} for out-of-sample predictions executing our methodology in \Cref{sec:prediction}, which takes a fitted model object as one of its arguments and thus needs not refit a model. The \texttt{CARBayesST} package contains no built-in functions for out-of-sample predictions and can only obtain predictions on new data by refitting a model with new responses set to \texttt{NA}. Among the six models, only \texttt{ST.CARlinear}, \texttt{ST.CARanova}, and \texttt{ST.CARar} permit missing (\texttt{NA}) values in the response data and thus support out-of-sample predictions, as they are the simplest in terms of parsimony. \texttt{ST.CARsepspatial}, \texttt{ST.CARadaptive}, and \texttt{ST.CARlocalised} exhibit more complex forms and cannot recover missing data well \citep{Lee2018}. \texttt{INLA.ST1} and \texttt{INLA.STint} also require refitting the model with response data corresponding to new spatial locations and/or time intervals set to \texttt{NA} in order to make out-of-sample predictions.
	
	We simulated data from \texttt{ST.CARlinear}, \texttt{ST.CARanova}, and \texttt{ST.CARar} with the same mechanisms as in the previous simulation experiments. We considered a total of $N=50$ consecutive time intervals of equal length and $K=39^2=1521$ spatial locations $\bm{s}=(i_1,i_2)$ on an equispaced 2-dimensional grid, where $(i_1,i_2)\in\{1,\ldots,39\}\times\{1,\ldots,39\}$. The last $h=2$ of the $N=50$ time intervals were taken as future time periods and had their corresponding generated responses at all locations set to \texttt{NA} when fitting the three \texttt{CARBayesST} models and the two \texttt{R-INLA} models. 
	$r=21$ out of the $K=1521$ locations, whose set of coordinates is $\{10,20,30\}\times \{10,20,30\}\cup \{5,15,25,35\}\times\{15,25\}\cup\{15,25\}\times \{5,35\}$, were set as testing locations (see \Cref{plot:simu}\textbf{A}) and had their corresponding generated responses at all time intervals set to \texttt{NA} when fitting the three \texttt{CARBayesST} models and the two \texttt{R-INLA} models. 
	Simulated response data corresponding to the first $T=N-h=48$ time intervals and the $m=K-r=1500$ training locations were used to fit our model. The function \texttt{predictNewLocTime()} in our package, which takes in the fitted model object and the distance matrix between the $m=1500$ reference and the $r=21$ testing locations, then predicts at the $2$ future time intervals and the $r$ testing locations following \Cref{sec:prediction}.
	
	\begin{table}[h!]
		{
			\begin{adjustwidth}{0cm}{-0cm}
				\begin{center}
					\scalebox{0.84}{\begin{tabular}{|*{7}{c|}}
							\hline
							\multirow{3}{*}{\backslashbox{\textbf{Model}}{\textbf{MAE}}} & \multicolumn{3}{|c|}{$i=1:m$} & \multicolumn{2}{|c|}{$i=(m+1):(m+r)$} & $t=T+1,T+2,\:i=$\\
							\cline{2-6}
							&\multirow{2}{*}{$t=1:T$} &\multirow{2}{*}{$t=T+1$} & $t=(T+1)$ &  \multirow{2}{*}{$t=1:T$} & $t=1:$ &  $1:m$;\hspace{1mm}   $t=1:(T+2)$,\\
							& &  & $:(T+2)$ & & $(T+2)$ & $i=(m+1):(m+r)$\\
							\hline
							\texttt{ST.CARlinear} & 0.7525225 & 0.759521 & 0.7533761 & 0.7884273 & 0.7929358 & 0.7636323\\
							\hline
							\texttt{our model} & 0.5759387 & 0.7593725 & 0.7522299 & 0.794803 & 0.798284 & 0.7641698\\
							\hline
							\texttt{INLA.ST1} & 0.754968 & 1.0900247 & 1.0911743 & 1.1091403 & 1.1122884 & 1.0966483 \\
							\hline
							\texttt{INLA.STint} & 0.7442695 & 1.1031692 & 1.1060514 & 1.1167986 & 1.1204805 & 1.1097923\\
							\hline
							\hline
							\texttt{ST.CARanova} & 0.7715878 & 0.89888 & 0.89300 & 0.7825770 & 0.7818363 & 0.8641791\\
							\hline
							\texttt{our model} & 0.5662226 & 0.8951150 & 0.8867734 & 0.786961 & 0.785446 & 0.8605033\\
							\hline
							\texttt{INLA.ST1} & 0.7730499 & 1.2358673 & 1.2379976 & 1.07116 & 1.07961 & 1.1969354\\
							\hline
							\texttt{INLA.STint} & 0.7624068 & 1.2450 & 1.248512 & 1.075794 & 1.084680 & 1.206037\\
							\hline
							\hline
							\texttt{ST.CARar} & 0.7835062 & 0.871665 & 0.874366 & 0.8882452 & 0.8947664 & 0.8796553\\
							\hline
							\texttt{our model} & 0.6192472 & 0.8716193 & 0.8795939 & 0.936035 & 0.938046 & 0.8947482\\
							\hline
							\texttt{INLA.ST1} & 0.8894 & 1.0689401 & 1.1037297 & 1.1291291 & 1.1471930 & 1.1149979\\
							\hline
							\texttt{INLA.STint} & 0.7428802 & 1.209911 & 1.2661711 & 1.1727726 & 1.1981913 & 1.2485467\\
							\hline
					\end{tabular}}
				\end{center}
			\end{adjustwidth}
			\caption{Data simulated from \texttt{ST.CARlinear}, \texttt{ST.CARanova}, and \texttt{ST.CARar} with $T+h=48+2=50$ consecutive time periods and $m+r=1500+21=1521=39^2$ spatial locations: MAEs for in-sample (the second column) and out-of-sample (the third column to the last column) predictions from the corresponding \texttt{CARBayesST} models, our model, \texttt{INLA.ST1}, and \texttt{INLA.STint}. The last column presents MAE for all testing response data.}
			\label{K1521N50CARBayesSTpredMAE}
		}
	\end{table}
	
	As the prediction MAEs in \Cref{K1521N50CARBayesSTpredMAE} and Figures \ref{plot:simu}\textbf{B}, \ref{plot:absErrorBoxplotSTCARlinear} and \ref{plot:absErrorBoxplotSTCARanova} demonstrate, our model matches the performance of \texttt{ST.CARlinear}, \texttt{ST.CARanova}, and \texttt{ST.CARar} and significantly outperforms \texttt{INLA.ST1} and \texttt{INLA.STint} in predicting responses at future time intervals and unseen spatial locations on data simulated from the three \texttt{CARBayesST} models that support out-of-sample predictions. Despite the fast computation speed and adequate in-sample prediction performance, INLA appears to introduce an approximation error that may not disappear with large samples, thus producing evidently higher MAEs for out-of-sample predictions.
	
	\begin{figure}[h!]
		\centering
		\includegraphics[width=1\textwidth]{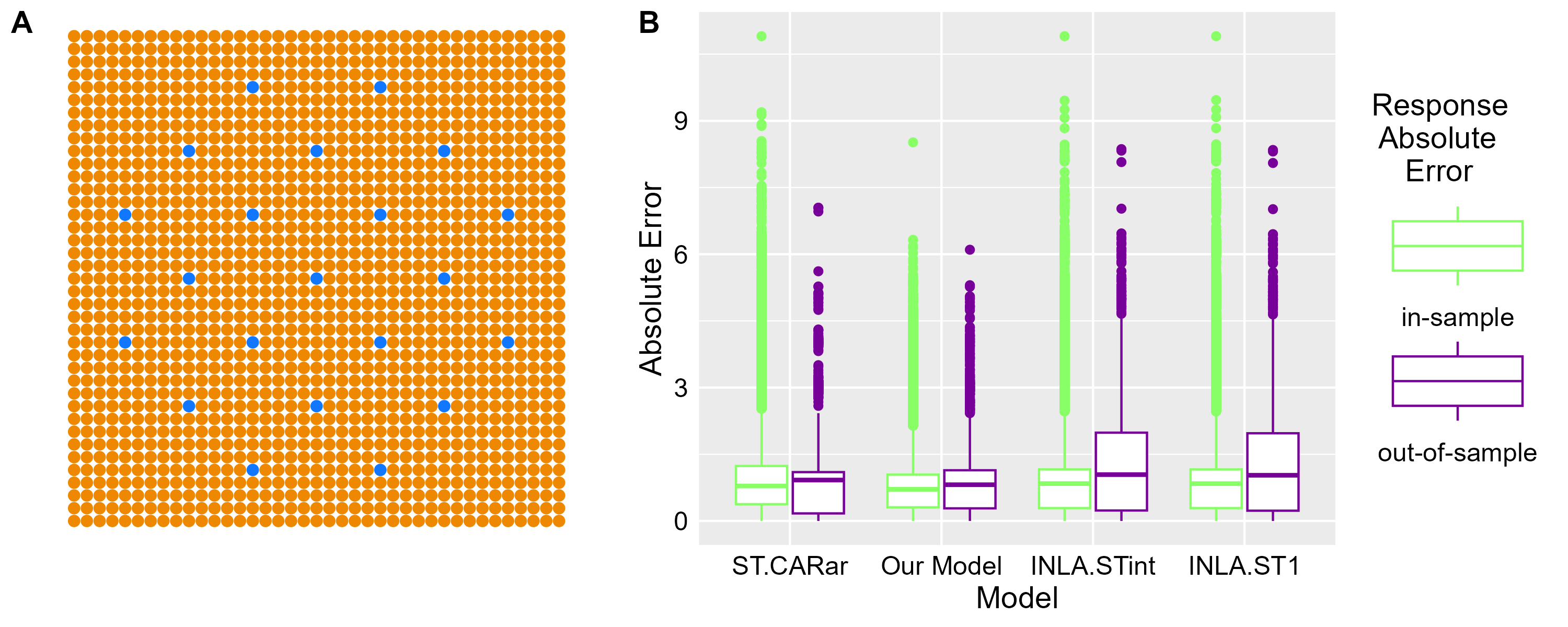}     
		\caption{(\textbf{A}) Data simulated from \texttt{ST.CARlinear}, \texttt{ST.CARanova}, and \texttt{ST.CARar} with $50$ consecutive time periods and $m+r=1521=39^2$ spatial locations: the $r=21$ blue locations are testing ones, and the $m=1500$ orange locations are training ones. (\textbf{B}) Boxplot for the last of the three simulated data sets from \textbf{A} (from \texttt{ST.CARar}): distribution of the absolute errors $\big|y_t(\bm{s}_i)-\hat{y}_t(\bm{s}_i)\big|$, where $\hat{y}_t(\bm{s}_i)$ is the posterior mean for each $(t,i)$.}
		\label{plot:simu} 
	\end{figure}

	\vspace{-1cm}
	\section{World Weekly COVID-19 Data} \label{sec:covid}
	We compare the performance of our proposed Bayesian spatiotemporal count model with several popular LGP-based and CAR-based models in the literature, including two methods fit by \texttt{INLA} \citep{Rueetal09,Blangiardo2013} and six models from the R package \texttt{CARBayesST} \citep{Lee2018}, which also adopts MCMC simulation, on COVID-19 data.
	
	We downloaded from \href{https://data.who.int/dashboards/covid19/data}{the WHO website} Weekly COVID-19 Cases and Deaths data of 240 countries/territories since January 2020. We considered the 145 countries/territories with neither NA cases nor NA deaths in all 123 weeks from May 24, 2020, to September 25, 2022, both ends included. Negative counts for weekly new cases/deaths were taken as recording errors and changed to their absolute values. After extensive trials, population turned out to be an external covariate that should be included in the model, and yearly population by country/territory data was obtained from \href{https://data.worldbank.org/indicator/SP.POP.TOTL?end=2022&start=1960}{the World Bank website}. 
	
	As geographical proximity does not amount to heavy bilateral human traffic between two countries, it is likely inappropriate to specify each country's neighbor set based on geographical distances between countries, which was corroborated by our tests and experiments. Instead, we retrieved from \href{https://wits.worldbank.org/CountryProfile/en/Country/CHN/Year/2022/TradeFlow/EXPIMP/Partner/by-country/Product/Total}{World Integrated Trade Solution (WITS)} all products' import \& export data in the latest available year for all countries of interest and specified for each country $\bm{s}_i$ its $h_s=8$ neighbor countries as the ones among our retained $m=123$ countries that contribute the most to $\bm{s}_i$'s import dollars. For each country $\bm{s}_i$, $w_{ij}$ in \eqref{ZsepmodelEq} was calculated as $\bm{s}_i$'s latest available year import dollars from its $j$th neighbor $N(\bm{s}_i)[j]$ divided by $\bm{s}_i$'s latest available year import dollars from all its $h_s$ neighbor countries.
	
	Due to zero new COVID-19 cases for many consecutive weeks since May 24, 2020, suspicious new counts fluctuations, a lack of reliable population data, and/or unavailable recent yearly WITS import \& export data, we excluded 22 countries/territories, whose ISO 3 codes are WLF, XKX, COK, NIU, PCN, TKL, ASM, PRK, MHL, NRU, FSM, SLB, TON, TUV, VUT, PLW, WSM, KIR, GUM, PRI, MNP, and HTI, from the 145 originally retained countries/territories. Our training data contain weekly new COVID-19 cases and deaths of $m=123$ countries/territories in $T=105$ weeks, from May 24, 2020, to May 22, 2022, both ends included. 
	After comprehensive tests and detailed analyses, we decided to include fixed scaled population as the only external covariate by fixing $\bm{\beta}$ and setting $\exp\{\bm{x}_t(\bm{s}_i)\bm{\beta}\}=N_t(\bm{s}_i) / k$ in \eqref{yPois}, where $N_t(\bm{s}_i)$ denotes country/territory $\bm{s}_i$'s population in the year among 2020, 2021, and 2022 within which the $t$th week falls, and $k$ is a constant specified as 5000 for new cases and $5\times 10^5$ for new deaths. 
	
	For each training data set of either weekly new cases or weekly new deaths, we fit our model with $2\times 2=4$ settings for spatiotemporal frailties $U^t(\bm{s}_i)$'s -- with ($\rho> 0$, \texttt{autoreg}) or without ($\rho = 0$, \texttt{noself}) autoregression; specify each country $\bm{s}_i$'s neighbor set and $w_{ij}$'s in \eqref{ZsepmodelEq} either based on the country's latest available year import data as detailed in the third paragraph of this section (\texttt{WITS}) or based on the correlations of the actual observed responses in the $T$ weeks between $\bm{s}_i$ and the other $m-1$ countries (\texttt{Cor}). The four settings are denoted as \texttt{autoregWITS}, \texttt{autoregCor}, \texttt{noselfWITS}, \texttt{noselfCor}, respectively. 
	We set $h_s=8$, $\alpha = 1.0001$, and $(\alpha_c, \theta_c) = (2,10)$. For \texttt{autoreg}, i.e., we follow our standard spatiotemporal frailty model \eqref{UmodelEq} and \eqref{ZsepmodelEq} so that all $m$ diagonal entries in the $\bfV$ matrix in \eqref{eq:ARG1} equal $\rho>0$, we specified $(a_{\kappa}, b_{\kappa}) = (0.55,1)$ and $(a_{\rho}, b_{\rho}) = (0.4,1)$; for \texttt{noself}, i.e., $\rho=0$ in \eqref{eq:V.rho.kappa} so that all $m$ diagonal entries in  $\bfV$ equal 0, we specified $(a_{\kappa}, b_{\kappa}) = (0.9,1)$.
	We ran each MCMC chain for $2\times 10^4$ burn-in iterations and $10^4$ post-burn-in iterations, thinned to 5000 posterior samples for analysis. 	
	
	\begin{figure}[h!]
		\centering
		\includegraphics[width=0.9\textwidth]{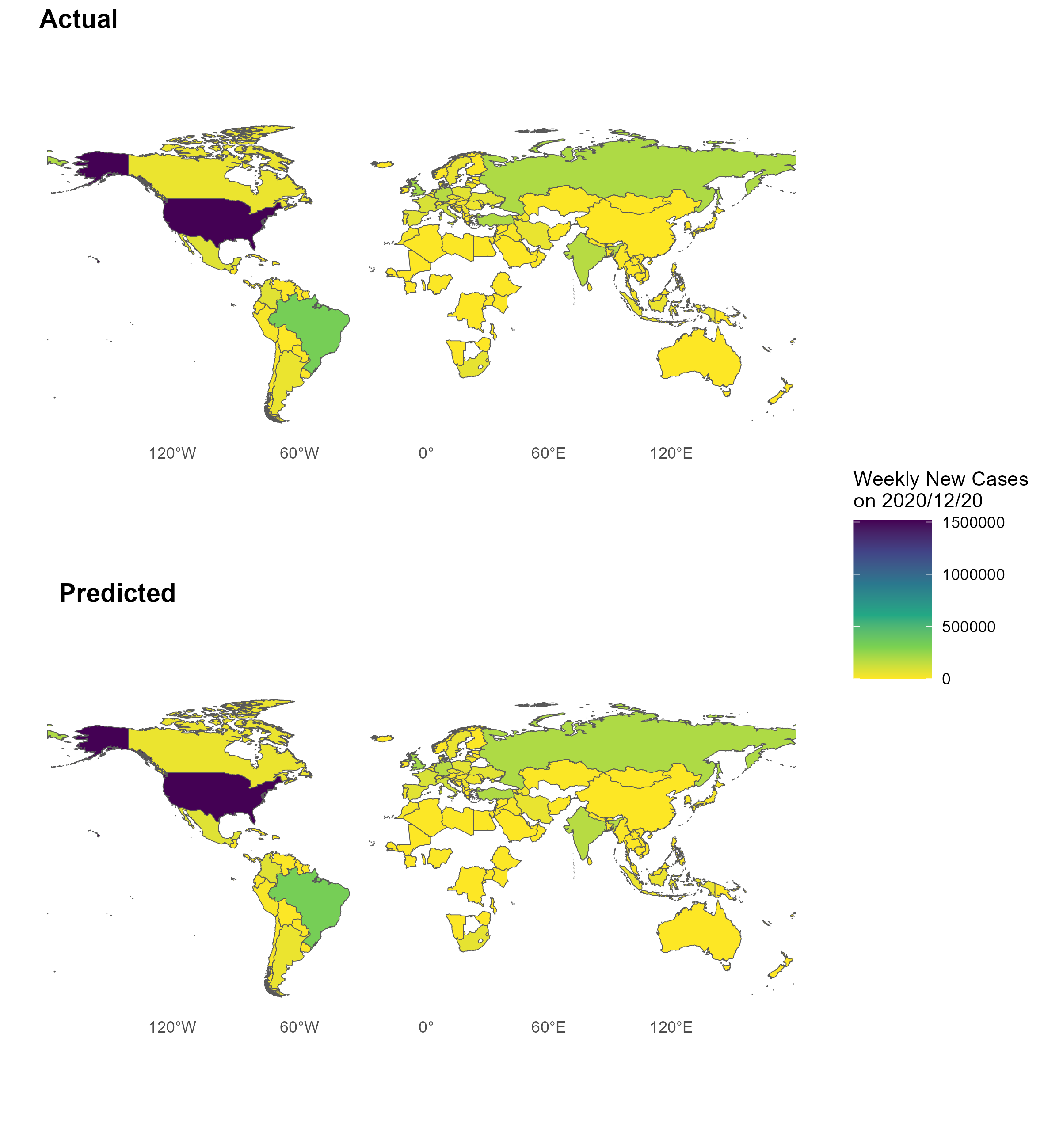}     
		\caption{Actual and predicted (from the \texttt{noselfWITS} setting) new COVID-19 cases in the $m=123$ training countries/territories in the week reported on December 20, 2020. The shapefile is obtained from \href{https://public.opendatasoft.com/explore/dataset/world-administrative-boundaries/export/}{Opendatasoft}.}
		\label{covidCases}
	\end{figure}
	
	\Cref{covidCasesPara,covidDeathsPara} present summary statistics of actual and predicted count responses and posterior samples of scalar parameters $c$, $\kappa$, and $\rho$. All four settings of our model produce good fits, as demonstrated by the close alignment between the estimated responses and the actual counts and the consistent and adequate scalar parameter estimates. \Cref{covidCases,covidDeaths} further showcase our model setting \texttt{noselfWITS}'s decent in-sample prediction performance in all 123 training countries/territories in the week with reporting date December 20, 2020. The other three settings of our model lead to satisfactory outcomes, too.
	
	As a comparison, we implemented the eight alternative Bayesian spatiotemporal count models carried out in \Cref{sec:simulation} on the two data sets, which all included population (in 2022, 2023, and 2024) by country/territory as the only external covariate and adopted the Poisson family. 
	For all six \texttt{CARBayesST} models, we ran $3\times 10^4$ MCMC iterations, with the first $2\times 10^4$ discarded as burn-in, and specified two choices for the $m\times m$ adjacency or weight matrix $W$, which is required by \texttt{CARBayesST} to be symmetric, nonnegative, and produce a positive sum for each row. For each of the two data sets, we first calculate the $m\times m$ correlation matrix $(\rho_{lj})$ of the actual observed counts in the $T$ weeks for the $m$ countries/territories. One $W_{m\times m}$ choice simply replaces all negative entries in the correlation matrix by 0 and is denoted as \texttt{Wcorr}. The other $W = (w_{lj})$ choice, denoted as \texttt{Wbinary}, is an $m\times m$ binary matrix whose diagonal entries are all 0. For all $l\neq j$, $w_{lj}=1$ if and only if $\rho_{lj}$ is among the $h_s=8$ largest values in $\{\rho_{lk}:k=1,\ldots,m,\,k\neq l\}$ or $\{\rho_{hj}:h=1,\ldots,m,\,h\neq j\}$. For both data sets, we specified \texttt{AR=1} for \texttt{ST.CARar} and \texttt{interaction = TRUE} for \texttt{ST.CARanova}. For \texttt{ST.CARlocalised}, we specified \texttt{G=6} for new cases and \texttt{G=3} for new deaths.    
	For both \texttt{INLA.ST1} and \texttt{INLA.STint}, the \texttt{graph} input under a spatial BYM specification \citep{Besag1991} is produced from the \texttt{Wbinary} matrix. A logGamma$(1,0.1)$ prior is imposed on the four precisions in \texttt{INLA.ST1} and the five precisions in \texttt{INLA.STint}. 
	
	To evaluate the adequacy of the aforementioned models, we calculated MAE and Mean Absolute Percentage Error (MAPE). Let $\hat{y}_t(\bm{s}_i)$ denote the estimated expectation of the response, where the posterior mean is taken for each MCMC method, in time interval $t$ and country/territory $\bm{s}_i$ for each $(i,t)$, then $\text{MAE} = \frac{1}{mT}\sum_{t=1}^T\sum_{i=1}^m \big|y_t(\bm{s}_i)-\hat{y}_t(\bm{s}_i)\big|$ and $\text{MAPE} = \frac{100\%}{N}\sum_{(i,t) \text{ s.t. } y_t(\bm{s}_i)>0}\Big|\frac{y_t(\bm{s}_i)-\hat{y}_t(\bm{s}_i)}{y_t(\bm{s}_i)}\Big|$, where $N\leq mT,\,N\in\mathbb{N}$ is the number of $(i,t)$ pairs with non-zero actual $y_t(\bm{s}_i)$'s. Out of the $mT=123\times 105=12915$ actual $y_t(\bm{s}_i)$'s, 178 weekly COVID-19 new cases are 0, and 1758 weekly COVID-19 new deaths are 0. We further considered DIC, WAIC, and their estimated effective numbers of parameters. 
	
	No matter whether the covariate population is unscaled or scaled, \texttt{ST.CARlinear}, \texttt{ST.CARanova}, and \texttt{ST.CARsepspatial}, with both aforementioned choices for $W_{m\times m}$, and \texttt{INLA.ST1} all produced very unsatisfactory model fitting results and were thus excluded from our analyses. With population scaled, among \texttt{ST.CARar}, \texttt{ST.CARadaptive}, and \texttt{ST.CARlocalised}, with both \texttt{Wbinary} and \texttt{Wcorr} for $W_{m\times m}$, and \texttt{INLA.STint}, the only setting that produced an adequate model fit significantly better than its counterpart with unscaled population is \texttt{INLA.STint} applied to the weekly COVID-19 new deaths data. Most other model settings led to similarly good results with unscaled and scaled population. We thus only present results with the original unscaled population from settings except \texttt{INLA.STint} for the new deaths data, which divided population by $5\times 10^5$.
	
	\Cref{covidCasesDiags,covidDeathsDiags} present for both data sets metrics obtained from more suitable ones among all model settings we experimented with. These adequate settings include \texttt{autoregWITS}, \texttt{autoregCor}, \texttt{noselfWITS}, and \texttt{noselfCor} from our model, \texttt{ST.CARar}, \texttt{ST.CARadaptive}, and \texttt{ST.CARlocalised}, with both \texttt{Wbinary} and \texttt{Wcorr} for $W_{m\times m}$, from \texttt{CARBayesST}, and \texttt{INLA.STint} fit by \texttt{R-INLA}. \Cref{covidCasesDiags,covidDeathsDiags} suggest that for both data sets, our model's four settings yield performance comparable to that of the best-suited models from R packages \texttt{CARBayesST} and \texttt{R-INLA}. Moreover, our model clearly beats \texttt{CARBayesST} models for the world weekly COVID-19 new cases data set (\Cref{covidCasesDiags}).
	
	\begin{table}[h!]
		{
			\begin{adjustwidth}{-0cm}{-0cm}
				\begin{center}
					\scalebox{0.86}{\begin{tabular}{|*{7}{c|}}
							\hline
							\backslashbox{\textbf{Method}}{\textbf{Metric}} & MAE & MAPE & DIC & p.d. & WAIC & p.w. \\
							\hline
							\texttt{autoregWITS} & 2.248796 & 1.35548 & 150922.8 & 12612.84 & 151018.5 & 6656.463\\
							\hline
							\texttt{autoregCor} & 2.272264 & 1.354982 & 150932.4 & 12617.63 & 150843.5 & 6565.742\\
							\hline
							\texttt{noselfWITS} & 1.968359 & 1.422978 & 150924.3 & 12620.48 & 150888.7 & 6557.625\\
							\hline
							\texttt{noselfCor} & 2.005899 & 1.41713 & 150925.8 & 12620.46 & 150870.1 & 6574.339\\
							\hline
							\texttt{INLA.STint} & 0.632495 & 1.91249 & 151429.8 & 12868.29 & 148388.9 & 6991.496\\
							\hline                      
							\texttt{ST.CARarWbinary} & 4.920659 & 2.988448 & 153774.1   &  12346.52  &  155789.8  &    9610.862 \\
							\hline                      
							\texttt{ST.CARarWcorr} & 4.531402 & 1.714859 & 158076.23  &   12547.22 &   160361.6 &     9826.647 \\
							\hline
							\texttt{ST.CARadaptiveWbinary} & 5.874778 & 2.233535 & 151254.15   &   12487.26 &    148417.63   &    6918.49  \\
							\hline
							\texttt{ST.CARadaptiveWcorr} & 15.64428 & 2.579549 & 158116.5   &   15781.17  &   216239.24  &    40352.00  \\
							\hline
							\texttt{ST.CARlocalisedWbinary} & 7.098764 & 0.5548995 & 190523.03 &     32576.98   & 1028955.88 &    445788.63\\
							\hline
							\texttt{ST.CARlocalisedWcorr} & 16.8224 & 0.7096886 & 311732.1   &   92927.66 &  20371494.59 &  10115176.62\\
							\hline
					\end{tabular}}
				\end{center}
			\end{adjustwidth}
			\caption{World weekly COVID-19 new cases: diagnostics metrics from 11 adequate model settings.}
			\label{covidCasesDiags}
		}
	\end{table}
	
	We now consider predicting in $h=18$ new weeks, from May 29, 2022, to September 25, 2022, both ends included, in the $m=123$ training countries/territories. We also consider predicting in the $T+h=105+18=123$ weeks in $r=4$ new countries/territories--whose ISO 3 codes are HTI, GUM, MNP, and PRI--that do not exhibit zero new COVID-19 cases for multiple consecutive weeks since May 24, 2020. For all settings of our model, each new location $\bm{s}$'s neighbor set $N(\bm{s})$ and $w_{j}(\bm{s})$'s in \eqref{predZU} are obtained based on the correlations of the actual observed responses in the $T+h$ weeks between $\bm{s}$ and the $m$ reference countries/territories. 
	As delineated in \Cref{sec:simulation}, among all models in \texttt{CARBayesST}, only \texttt{ST.CARlinear}, \texttt{ST.CARanova}, and \texttt{ST.CARar} support out-of-sample predictions. 
	Hence, \texttt{ST.CARar} is the only \texttt{CARBayesST} model that fits the world COVID-19 new cases/deaths data adequately and is capable of predicting in new countries/territories and/or future weeks. It, however, performs very poorly (\Cref{covidCasesPred,covidDeathsPred}) in out-of-sample predictions. 
	
	
	\begin{table}[h!]
		{
			\begin{adjustwidth}{0cm}{-0cm}
				\begin{center}
					\scalebox{0.82}{\begin{tabular}{|*{7}{c|}}
							\hline
							\multirow{3}{*}{\backslashbox{\textbf{Method}}{\textbf{MedAE}}} & \multicolumn{4}{|c|}{$i=1:m$} & \multicolumn{2}{|c|}{$i=(m+1):(m+r)$} \\
							\cline{2-7}
							&\multirow{2}{*}{$t=T+1$} & $t=(T+1)$ &  $t=(T+1)$ & $t=(T+1)$ & \multirow{2}{*}{$t=1:T$} & \multirow{2}{*}{$t=1:(T+h)$} \\
							&   & $:(T+2)$ & $:(T+5)$ & $:(T+h)$ & & \\
							\hline
							\texttt{autoregWITS} & 3024.25 & 3651.0 & 5980.5 & 10499.6 & 600.25 & 601.00\\
							\hline
							\texttt{autoregCor} & 2933.86 & 3672.5 & 6031.9 & 10468.6 & 598.19 & 604.88 \\
							\hline
							\texttt{noselfWITS} & 15065.6 & 15685.7 & 16437.2 & 15415.3 & 733.73 & 723.05\\
							\hline
							\texttt{noselfCor} & 9878 & 10160 & 12211 & 12128 & 470.67 & 471.36\\
							\hline
							\texttt{INLA.STint} & 1362.1 & 1435.3 & 1727.7 & 2764.3 & 183.83 & 211.84\\
							\hline                      
							\texttt{ST.CARarWbinary} & $4.547\times 10^8$ & $4.610\times 10^8$ & $4.887\times 10^8$ & $4.718\times 10^8$ & $8.267\times 10^7$ & $9.014\times 10^7$ \\
							\hline                      
							\texttt{ST.CARarWcorr} & $5.382\times 10^8$ & $5.364\times 10^8$ & $5.392\times 10^8$ & $4.958\times 10^8$ & $7.874\times 10^7$ & $7.766\times 10^7$\\
							\hline
					\end{tabular}}
				\end{center}
			\end{adjustwidth}
			\caption{World weekly COVID-19 new cases: MedAE for out-of-sample predictions from 7 adequate model settings capable of estimating at new time intervals and spatial locations.}
			\label{covidCasesPred}
		}
	\end{table}
	\begin{table}[h!]
		{
			\begin{adjustwidth}{0cm}{-0cm}
				\begin{center}
					\scalebox{0.86}{\begin{tabular}{|*{7}{c|}}
							\hline
							\multirow{3}{*}{\backslashbox{\textbf{Method}}{\textbf{MedAE}}} & \multicolumn{4}{|c|}{$i=1:m$} & \multicolumn{2}{|c|}{$i=(m+1):(m+r)$} \\
							\cline{2-7}
							&\multirow{2}{*}{$t=T+1$} & $t=(T+1)$ &  $t=(T+1)$ & $t=(T+1)$ & \multirow{2}{*}{$t=1:T$} &  \multirow{2}{*}{$t=1:(T+h)$} \\
							&   & $:(T+2)$ & $:(T+5)$ & $:(T+h)$ & & \\
							\hline
							\texttt{autoregWITS} & 17.257 & 20.798 & 31.278 & 56.406 & 5.2839 & 5.2839\\
							\hline
							\texttt{autoregCor} & 17.4479 & 20.1491 & 31.547 & 56.599 & 5.35063 & 5.35063\\
							\hline
							\texttt{noselfWITS} & 81.698 & 100.437 & 122.267 & 127.852 & 6.7714 & 4.7059 \\
							\hline
							\texttt{noselfCor} & 74.75 & 86.27 & 99.818 & 103.541 & 6.7408 & 5.8654\\
							\hline
							\texttt{INLA.STint} & 8.898 & 8.630 & 8.536 & 10.857 & 2.1010 & 2.3678\\
							\hline                      
							\texttt{ST.CARarWbinary} & 3632411 & 3537483 & 3705387 & 3758511 & 424443 & 424443\\
							\hline                      
							\texttt{ST.CARarWcorr} & 4377532 & 4426582 & 4361354 & 4310342 & 563653 & 563653\\
							\hline
					\end{tabular}}
				\end{center}
			\end{adjustwidth}
			\caption{World weekly COVID-19 new deaths: MedAE for out-of-sample predictions from 7 adequate model settings capable of estimating at new time intervals and spatial locations.}
			\label{covidDeathsPred}
		}
	\end{table}
	
	\Cref{covidCasesPred,covidDeathsPred} present the Median Absolute Error $\text{MedAE} = \text{median} \big(\big|y_t(\bm{s}_i)-\hat{y}_t(\bm{s}_i)\big|\big)$ for out-of-sample predictions in future weeks and/or new countries/territories. Although our model underperforms relative to \texttt{INLA.STint}, the performance remains comparable in scale. Our model beats \texttt{CARBayesST}, as each of its functions either cannot conduct or performs very unsatisfactorily in out-of-sample predictions. \Cref{covidCasesPredm127NewT1,covidDeathsPredm127NewT1} demonstrate our model setting \texttt{autoregWITS}'s adequate out-of-sample prediction performance in all 127 training and testing countries/territories in the first future week with reporting date May 29, 2022. The other three settings of our model produce satisfactory results as well.
	


	\section{Discussion} \label{sec:discussion}
	
	The current work can be extended in several directions. First, although we have modeled the count data using Poisson random variables, the proposed Bayesian framework and sampling strategies are readily applicable to related distributions, such as the zero-inflated Poisson distribution for datasets with many zero responses and the negative binomial distribution viewed as a gamma mixture of Poisson distributions. Second, our theory provides sufficient conditions on the weighting matrix to ensure stationarity of the frailty process. So far, we have investigated the empirical performance of only one such graph built upon nearest neighbors. In general, one can specify other types of sparse graphs for the frailties across spatial and temporal dimensions, as long as the process remains stationary. Third, our current spatiotemporal model is defined on a finite set of time intervals and spatial locations connected through a graph of weights,  which does not directly extend to a valid stochastic process over a continuous spatial domain. To achieve such an extension, one possible approach is to represent the latent gamma frailties as the sum of squares of several independent latent Gaussian processes, as in the recent work on Poisson point processes by \citet{Moretal24}. We leave these directions to future research.

    	\appendix
    	\section{Deductions Leading to Posterior Sampling Steps for the Model Given by \eqref{yPois}, \eqref{UmodelEq}, and \eqref{ZsepmodelEq}} \label{appenA}
    	We deduce in detail posterior sampling steps for the model given by \eqref{yPois}, \eqref{UmodelEq}, and \eqref{ZsepmodelEq}.
    	\begin{enumerate}
    		\item Sampling from the full conditional distribution of $c$ with prior $\pi(c)\sim\IG(\alpha_c,\theta_c)$:\\
    		If $m>1$ and $T>1$, then
    		\begin{align}\label{postCsupp}
    			f(c|\cdot) &\propto \pi(c) \times \prod_{i=1}^mf\left(U^1(\bm{s}_i)~|~c\right) \times\prod_{t=2}^T\prod_{i=1}^m  f\left(U^t(\bm{s}_i)~|~Z^{t-1}_{i0},Z^{t-1}_{i1},\ldots,Z^{t-1}_{i|N(\bm{s}_i)|},c\right) \times\nonumber\\
    			& \quad \prod_{t=2}^T\prod_{i=1}^m f\left(Z_{i0}^{t-1}~|~U^{t-1}(\bm{s}_i),\rho,c\right) \times\prod_{t=2}^T\prod_{i=1}^m \prod_{j=1}^{|N(\bm{s}_i)|}f\left(Z^{t-1}_{ij}~|~U^{t-1}_{N(\bm{s}_i)[j]},\kappa,c\right) \nonumber\\
    			& \propto \left(\frac{1}{c}\right)^{\alpha_c+1}\exp\left\lbrace-\frac{\theta_c}{c}\right\rbrace \times\left(\frac{1}{c}\right)^{m\alpha}\exp\left\lbrace-\frac{1}{c}\sum_{i=1}^m U^1(\bm{s}_i)\right\rbrace \times \nonumber\\
    			&\quad\prod_{t=2}^T \prod_{i=1}^m \left(\frac{1}{c}\right)^{\alpha+\sum_{j=0}^{|N(\bm{s}_i)|}Z^{t-1}_{ij}}\exp\left\lbrace -\frac{U^t(\bm{s}_i)}{c}\right\rbrace \cdot \left(\frac{1}{c}\right)^{Z^{t-1}_{i0}}\exp\left\lbrace -\frac{1}{c}\rho U^{t-1}(\bm{s}_i)\right\rbrace \times \nonumber\\
    			&\quad \prod_{t=2}^T \prod_{i=1}^m\prod_{j=1}^{|N(\bm{s}_i)|} \left(\frac{1}{c}\right)^{Z^{t-1}_{ij}}\exp\left\lbrace -\frac{1}{c}\left[\kappa w_{ij}U^{t-1}_{N(\bm{s}_i)[j]}\right]\right\rbrace \nonumber\\
    			&\propto\left(\frac{1}{c}\right)^{Tm\alpha+2\sum_{t=2}^{T}\sum_{i=1}^m \sum_{j=0}^{|N(\bm{s}_i)|} Z^{t-1}_{ij}+\alpha_c+1} \times \exp\left\lbrace -\frac{\theta_c^{\text{post}}}{c}\right\rbrace\nonumber\\
    			&\sim \IG\left(\alpha_c+ Tm\alpha+2\sum_{t=1}^{T-1}\sum_{i=1}^m \sum_{j=0}^{|N(\bm{s}_i)|} Z^t_{ij},\: \theta_c^{\text{post}}\right),
    		\end{align}
    		where $\theta_c^{\text{post}}=\theta_c+\sum\limits_{t=1}^T\sum\limits_{i=1}^m U^t(\bm{s}_i) + \rho\cdot\sum\limits_{t=1}^{T-1}\sum\limits_{i=1}^m U^t(\bm{s}_i) + \kappa\cdot\sum\limits_{i=1}^m  \sum\limits_{j=1}^{|N(\bm{s}_i)|}w_{ij}\sum\limits_{t=1}^{T-1}U^t_{N(\bm{s}_i)[j]}$.
    		
    		If $T=1$, then
    		\begin{align}
    			f(c|\cdot) & \propto \pi(c) \times \prod_{i=1}^mf\left(U^1(\bm{s}_i)~|~c\right)  \nonumber\\
    			&\propto \left(\frac{1}{c}\right)^{m\alpha}\exp\left\lbrace-\frac{1}{c}\sum_{i=1}^m U^1(\bm{s}_i)\right\rbrace\times \left(\frac{1}{c}\right)^{\alpha_c+1}\exp\left\lbrace-\frac{\theta_c}{c}\right\rbrace  \nonumber\\
    			&\propto\left(\frac{1}{c}\right)^{m\alpha+\alpha_c+1}  \exp\left\lbrace -\frac{1}{c}\left[\sum_{i=1}^m U^1(\bm{s}_i)+\theta_c\right]\right\rbrace\nonumber\\ 
    			&\sim  \IG\left(m\alpha+\alpha_c,\: \sum_{i=1}^m U^1(\bm{s}_i) +\theta_c\right). \nonumber
    		\end{align}
    		If $m=1$ and $T>1$, then
    		\begin{align*}
    			f(c|\cdot) &\propto \pi(c) \times f\left(U^1(\bm{s}_1)~|~c\right) \times\prod_{t=2}^T  f\left(U^t(\bm{s}_1)~|~Z^{t-1}_{10},c\right) \times \prod_{t=1}^{T-1} f\left(Z_{10}^{t}~|~U^{t}(\bm{s}_1),\rho,c\right)  \\
    			&\propto\left(\frac{1}{c}\right)^{T\alpha+2\sum_{t=1}^{T-1}Z_{10}^t+\alpha_c+1}  \exp\left\lbrace -\frac{1}{c}\left[\sum_{t=1}^T U^t(\bm{s}_1)+\rho\cdot\sum_{t=1}^{T-1}U^t(\bm{s}_1)+\theta_c\right]\right\rbrace \\
    			&\sim  \IG\left(T\alpha+2\sum_{t=1}^{T-1}Z_{10}^t+\alpha_c,\: \sum_{t=1}^T U^t(\bm{s}_1)+\rho\cdot\sum_{t=1}^{T-1}U^t(\bm{s}_1)+\theta_c\right).
    		\end{align*}
    		\item Sampling from the full conditional distributions of $U^{1:T}(\bm{s}_{1:m})$:\par Fix any arbitrary $i\in\{1,2,\ldots,m\}$. For each $l\in\{1,2,\ldots,m\}$ such that $\bm{s}_i\in N(\bm{s}_l)$, we denote $k_l(\bm{s}_i)$ as the positive integer less than or equal to $h_s$ such that $\bm{s}_i=N(\bm{s}_l)[k_l(\bm{s}_i)]$. If $m>1$ and $T>1$, then for any $t\in\{2,3,\ldots,T-1\}$,
    		\begin{align}\label{postUsupp}
    			&f\left(U^t(\bm{s}_i)|\cdot\right)  \propto f\left(y_t(\bm{s}_i)~|~U^t(\bm{s}_i),\bm{\beta}\right) \times f(U^t(\bm{s}_i)~|~Z^{t-1}_{i0},Z^{t-1}_{i1},\ldots,Z^{t-1}_{i|N(\bm{s}_i)|},c)\times\nonumber\\
    			&\qquad f\big(Z^t_{i0}~|~U^t(\bm{s}_i),\rho,c\big)\times\prod_{l:\bm{s}_i\in N(\bm{s}_l)}^{1\leq l\leq m} f\big(Z^t_{lk_l(\bm{s}_i)}~|~U^t(\bm{s}_i),\kappa,c\big) \nonumber\\
    			&\quad \propto U^t(\bm{s}_i)^{y_t(\bm{s}_i)}\exp\left\lbrace - U^t(\bm{s}_i)\left[ e^{\bm{x}_t(\bm{s}_i)^{\T}\bm{\beta}}\right]\right\rbrace \times U^t(\bm{s}_i)^{\alpha+\sum\limits_{j=0}^{|N(\bm{s}_i)|}Z^{t-1}_{ij}-1}\exp\left\lbrace -\frac{U^t(\bm{s}_i)}{c}\right\rbrace  \times      \nonumber \\
    			& \qquad U^t(\bm{s}_i)^{Z^t_{i0}}\exp\left\lbrace-U^t(\bm{s}_i) \cdot \frac{\rho}{c} \right\rbrace  \times \prod_{l:\bm{s}_i\in N(\bm{s}_l)}^{1\leq l\leq m} U^t(\bm{s}_i)^{Z^t_{lk_l(\bm{s}_i)}}\exp\left\lbrace-w_{lk_l(\bm{s}_i)}U^t(\bm{s}_i) \cdot \frac{\kappa}{c} \right\rbrace  \nonumber\\
    			&\quad \propto U^t(\bm{s}_i)^{y_t(\bm{s}_i)+\alpha+\sum_{j=0}^{|N(\bm{s}_i)|}Z^{t-1}_{ij}-1+ Z^t_{i0}+\sum_{l:\bm{s}_i\in N(\bm{s}_l)}^{1\leq l\leq m}Z^t_{lk_l(\bm{s}_i)}}\times\nonumber\\
    			&\qquad \exp\left\lbrace -U^t(\bm{s}_i)\left[ e^{\bm{x}_t(\bm{s}_i)^{\T}\bm{\beta}}+\frac{1}{c}\left(1+\rho+\kappa\sum_{l:\bm{s}_i\in N(\bm{s}_l)}^{1\leq l\leq m}w_{lk_l(\bm{s}_i)}\right)\right]\right\rbrace  \nonumber\\
    			&\quad \sim \Ga\left(y_t(\bm{s}_i)+\alpha+\sum_{j=0}^{|N(\bm{s}_i)|}Z^{t-1}_{ij}+ Z^t_{i0}+\sum_{l:\bm{s}_i\in N(\bm{s}_l)}^{1\leq l\leq m}Z^t_{lk_l(\bm{s}_i)},\: \text{rate}_{U^t(\bm{s}_i)}\right), 
    		\end{align}
    		where $\text{rate}_{U^t(\bm{s}_i)}= e^{\bm{x}_t(\bm{s}_i)^{\T}\bm{\beta}}+\frac{1}{c}\left(1+\rho+\kappa\sum_{l:\bm{s}_i\in N(\bm{s}_l)}^{1\leq l\leq m}w_{lk_l(\bm{s}_i)}\right)$.       
    		\begin{align*}             
    			&f\left(U^t(\bm{s}_i)|\cdot\right) \sim \Ga\left(y_t(\bm{s}_i)+\alpha+ Z^t_{i0}+\sum_{l:\bm{s}_i\in N(\bm{s}_l)}^{1\leq l\leq m}Z^t_{lk_l(\bm{s}_i)},\: \text{rate}_{U^t(\bm{s}_i)}\right)\text{  for } t=1,\text{ and}\\           
    			&f\left(U^t(\bm{s}_i)|\cdot\right) \sim \Ga\left(y_t(\bm{s}_i)+\alpha+\sum_{j=0}^{|N(\bm{s}_i)|}Z^{t-1}_{ij},\: e^{\bm{x}_t(\bm{s}_i)^{\T}\bm{\beta}}+\frac{1}{c}\right)\text{ for }t=T.\\
    			&\text{If }T=1,\text{ then }
    			f\left(U^1(\bm{s}_i)|\cdot\right) \sim \Ga\left(y_1(\bm{s}_i)+\alpha,\: e^{\bm{x}_1(\bm{s}_i)^{\T}\bm{\beta}}+\frac{1}{c}\right).\\
    			&\text{If }m=1\text{ and }T>1,\text{ then for any }t\in\{2,\ldots,T-1\},\\
    			&f\left(U^t(\bm{s}_1)|\cdot\right) \sim \Ga\left(y_t(\bm{s}_1)+\alpha+Z_{10}^{t-1}+Z_{10}^t,\: e^{\bm{x}_t(\bm{s}_1)^{\T}\bm{\beta}}+\frac{1}{c}(1+\rho)\right),\\
    			&f\left(U^t(\bm{s}_1)|\cdot\right) \sim \Ga\left(y_t(\bm{s}_1)+\alpha+Z_{10}^{t},\: e^{\bm{x}_t(\bm{s}_1)^{\T}\bm{\beta}}+\frac{1}{c}(1+\rho)\right)\text{ for }t=1,\text{ and }\\
    			&f\left(U^t(\bm{s}_1)|\cdot\right) \sim \Ga\left(y_t(\bm{s}_1)+\alpha+Z_{10}^{t-1},\: e^{\bm{x}_t(\bm{s}_1)^{\T}\bm{\beta}}+\frac{1}{c}\right)\text{ for }t=T.		
    		\end{align*}
    		\item Sampling from the full conditional distribution of $\rho$ with prior \\$\pi(\rho)\sim\tGa(a_{\rho},b_{\rho},0,1)$ when $T>1$:
    		\begin{align}\label{postRhoSupp}
    			f\left(\rho|\cdot\right) &\propto \pi(\rho) \times \prod_{t=1}^{T-1}\prod_{i=1}^m f\left(Z_{i0}^t~|~U^t(\bm{s}_i),\rho,c\right)  \nonumber\\
    			&\propto \rho^{a_{\rho}-1}\exp\{-b_{\rho}\cdot\rho\}\mathbbm{1}_{\{0<\rho<1\}} \times \prod_{t=1}^{T-1}\prod_{i=1}^m \rho^{Z^t_{i0}}\exp\left\lbrace-\frac{\rho}{c}\cdot U^t(\bm{s}_i)\right\rbrace   \nonumber \\
    			&\propto \rho^{a_{\rho}+\sum_{t=1}^{T-1}\sum_{i=1}^m Z^t_{i0}-1}\exp\left\lbrace -\rho\left[b_{\rho}+\frac{1}{c}\sum_{t=1}^{T-1}\sum_{i=1}^mU^{t}(\bm{s}_i)\right]\right\rbrace \mathbbm{1}_{\{0<\rho<1\}} \nonumber \\
    			& \sim \tGa\left(a_{\rho}+\sum_{t=1}^{T-1}\sum_{i=1}^m Z^t_{i0}, \: b_{\rho}+\frac{1}{c}\sum_{t=1}^{T-1}\sum_{i=1}^mU^{t}(\bm{s}_i),\:0,\:1\right).
    		\end{align}
    		\item Sampling from the full conditional distribution of $\kappa$ with prior \\$\pi(\kappa)\sim\tGa(a_{\kappa},b_{\kappa},0,\,1-\rho)$ when $m>1$ and $T>1$:
    		\begin{align}\label{postKappaSupp}
    			&f\left(\kappa|\cdot\right) \propto \pi(\kappa) \times \prod_{t=1}^{T-1}\prod_{i=1}^m \prod_{j=1}^{|N(\bm{s}_i)|} f\left(Z_{ij}^t~|~U^t_{N(\bm{s}_i)[j]},\kappa,c\right) \nonumber\\
    			&\propto \kappa^{a_{\kappa}-1}\exp\{-b_{\kappa}\cdot\kappa\}\mathbbm{1}_{\{0<\kappa<1-\rho\}} \times \prod_{t=1}^{T-1}\prod_{i=1}^m \prod_{j=1}^{|N(\bm{s}_i)|} \kappa^{Z^t_{ij}}\exp\left\lbrace-\frac{\kappa}{c}\cdot w_{ij}U^t_{N(\bm{s}_i)[j]}\right\rbrace   \nonumber \\
    			&\propto \kappa^{a_{\kappa}+\sum_{t=1}^{T-1}\sum_{i=1}^m \sum_{j=1}^{|N(\bm{s}_i)|}Z^t_{ij}-1}\exp\left\lbrace -\kappa\left[b_{\kappa}+\frac{1}{c}\sum_{i=1}^m\sum_{j=1}^{|N(\bm{s}_i)|}w_{ij}\sum_{t=1}^{T-1}U^{t}_{N(\bm{s}_i)[j]}\right]\right\rbrace \mathbbm{1}_{\{0<\kappa<1-\rho\}}\nonumber \\
    			& \sim \tGa\left(a_{\kappa}+\sum_{t=1}^{T-1}\sum_{i=1}^m \sum_{j=1}^{|N(\bm{s}_i)|}Z^t_{ij}, \: b_{\kappa}+\frac{1}{c}\sum_{i=1}^m\sum_{j=1}^{|N(\bm{s}_i)|}w_{ij}\sum_{t=1}^{T-1}U^t_{N(\bm{s}_i)[j]},\:0,\:1-\rho\right). 
    		\end{align}
    		\item Sampling from the full conditional distributions of $Z^t_{ij}$'s when $T>1$:\par
    		Fix any arbitrary $t\in\{1,2,\ldots,T-1\}$, $i\in\{1,2,\ldots,m\}$. Then 
    		\begin{align}\label{postZi0Supp}
    			f\left(Z^t_{i0}|\cdot\right)&\propto f\left(U^{t+1}(\bm{s}_i)~|~Z^t_{i0},Z^t_{i1},\ldots,Z^t_{i|N(\bm{s}_i)|},c\right) \times  f\left(Z^t_{i0}~|~U^t(\bm{s}_i),\rho,c\right)\nonumber\\
    			&\propto\frac{\left(\frac{U^{t+1}(\bm{s}_i)}{c}\right)^{Z^t_{i0}}}{\Gamma\left(\alpha+\sum_{k=0}^{|N(\bm{s}_i)|}Z^t_{ik}\right)} \times \frac{\left[\frac{\rho}{c}\cdot U^t(\bm{s}_i)\right]^{Z^t_{i0}}}{Z^t_{i0}!}\nonumber\\
    			&\propto \frac{\left(\frac{\rho U^{t+1}(\bm{s}_i)U^{t}(\bm{s}_i)}{c^2}\right)^{Z^t_{i0}}}{\Gamma\left(\alpha+\sum_{k=0}^{|N(\bm{s}_i)|}Z^t_{ik}\right)\times Z^t_{i0}!}\nonumber\\ &\sim \Bessel\left(\alpha+\sum_{k=1}^{|N(\bm{s}_i)|}Z^t_{ik}-1, \frac{2}{c}\sqrt{\rho \cdot U^{t+1}(\bm{s}_i)U^t(\bm{s}_i)}\right). 
    		\end{align}
    		When $m>1$, we further fix any arbitrary $j\in\{1,2,\ldots,|N(\bm{s}_i)|\}$. Then 
    		\begin{align}\label{postZiSupp}
    			f\left(Z^t_{ij}|\cdot\right)&\propto f\left(U^{t+1}(\bm{s}_i)~|~Z^t_{i0},Z^t_{i1},\ldots,Z^t_{i|N(\bm{s}_i)|},c\right) \times  f\left(Z^t_{ij}~|~U^t_{N(\bm{s}_i)[j]},\kappa,c\right)\nonumber\\
    			&\propto\frac{\left(\frac{U^{t+1}(\bm{s}_i)}{c}\right)^{Z^t_{ij}}}{\Gamma\left(\alpha+\sum_{k=0}^{|N(\bm{s}_i)|}Z^t_{ik}\right)} \times \frac{\left(\frac{\kappa}{c}\cdot w_{ij}U^t_{N(\bm{s}_i)[j]}\right)^{Z^t_{ij}}}{Z^t_{ij}!}\nonumber\\
    			&\propto \frac{\left(\frac{\kappa U^{t+1}(\bm{s}_i)}{c^2}\cdot w_{ij}U^t_{N(\bm{s}_i)[j]}\right)^{Z^t_{ij}}}{\Gamma\left(\alpha+\sum_{k=0}^{|N(\bm{s}_i)|}Z^t_{ik}\right)\times Z^t_{ij}!} \nonumber\\
    			&\sim \Bessel\left(\alpha+\sum_{k=0,k\neq j}^{|N(\bm{s}_i)|}Z^t_{ik}-1,\: \frac{2}{c}\sqrt{\kappa w_{ij}U^{t+1}(\bm{s}_i)U^t_{N(\bm{s}_i)[j]}}\right), 
    		\end{align}
    		where $X\sim  \Bessel(\nu,a)$ has the probability mass function $\mathbb{P}(X=n\mid a,\nu)=\frac{(a/2)^{2n+\nu}}{I_{\nu}(a)\Gamma(n+\nu+1)n!}$ for $n\in \NN$, $a>0,\nu>-1$, and $I_{\nu}(\cdot)$ is the modified Bessel function of the first kind. We generate Bessel random variables via the rejection algorithms in Section 3 of \citet{Devroye2002}.
    		\item Sampling $\bm{\beta}_{p\times 1}$ via a Metropolis step with prior $\pi(\bm{\beta})\sim N_p\left(\bm{\mu}_{0\bm{\beta}}, {\Sigma}_{0\bm{\beta}}\right)$:
    		\begin{align}\label{postBetaSupp}
    			&f\left(\bm{\beta}|\cdot\right) \propto \pi(\bm{\beta}) \times \prod_{i=1}^m\prod_{t=1}^T f\left(y_t(\bm{s}_i)~|~U^t(\bm{s}_i),\bm{\beta}\right) \\
    			&\;\propto \exp\left\lbrace -\frac{1}{2}\left(\bm{\beta}-\bm{\mu}_{0\bm{\beta}}\right)^{\T}\Sigma_{0\bm{\beta}}^{-1}\left(\bm{\beta}-\bm{\mu}_{0\bm{\beta}}\right)\right\rbrace   \times\nonumber \\
    			&\quad\prod_{i=1}^m\prod_{t=1}^T \exp\left\lbrace - U^t(\bm{s}_i) e^{\bm{x}_t(\bm{s}_i)^{\T}\bm{\beta}}\right\rbrace \times \left[ U^t(\bm{s}_i) e^{\bm{x}_t(\bm{s}_i)^{\T}\bm{\beta}}\right]^{y_t(\bm{s}_i)} \nonumber \\
    			&\;\propto \exp\left\lbrace -\frac{1}{2}\left(\bm{\beta}-\bm{\mu}_{0\bm{\beta}}\right)^{\T}\Sigma_{0\bm{\beta}}^{-1}\left(\bm{\beta}-\bm{\mu}_{0\bm{\beta}}\right) + \sum_{i=1}^m\sum_{t=1}^T \left[y_t(\bm{s}_i)\cdot \bm{x}_t(\bm{s}_i)^{\T}\bm{\beta} -  U^t(\bm{s}_i) e^{\bm{x}_t(\bm{s}_i)^{\T}\bm{\beta}} \right]\right\rbrace  \nonumber 
    		\end{align}
    		is not a standard parametric family, and we resort to a Metropolis step for sampling $\bm{\beta}$ from its full conditional distribution.
    		\par We consider a symmetric Markov kernel $Q:\mathbb{R}^p \times 
    		\mathbb{R}^p\rightarrow \mathbb{R}^+$ with $
    		Q\left(\bm{\beta}_a, \bm{\beta}_b\right)=Q(\bm{\beta}_b,\bm{\beta}_a)$ for all $\bm{\beta}_a,\bm{\beta}_b\in\mathbb{R}^p$. $
    		Q\left(\bm{\beta}_a, \bm{\beta}_b\right)$ equals the density of a $N_{p}(\bm{\beta}_a, V)$ random variable evaluated at $\bm{\beta}_b$ with probability $p$ and the density of a $N_{p}(\bm{\beta}_a, 100V)$ random variable evaluated at $\bm{\beta}_b$ with probability $1-p$, 
    		where $V$ is an estimate 
    		of the inverse Hessian of $\ln\left[f(\bm{\beta}|\cdot)\right]$ and $p\in(0,1)$ should be quite large to give less probability to big moves in the parameter space. 
    		At each MCMC iteration $w\in\mathbb{N}$, we let the current parameter estimate for $\bm{\beta}$ be $\bm{\beta}^{(w)}$. Since 
    		\begin{align}\label{sampleBetaHessian}
    			\frac{\partial^2}{\partial \bm{\beta}\partial \bm{\beta}^{\T}}\ln f(\bm{\beta}|\cdot) = -\Sigma_{0\bm{\beta}}^{-1} -  \sum_{i=1}^m\sum_{t=1}^T U^t(\bm{s}_i) e^{\bm{x}_t(\bm{s}_i)^{\T}\bm{\beta}}\bm{x}_t(\bm{s}_i)\bm{x}_t(\bm{s}_i)^{\T},
    		\end{align}
    		we set $V^{(w)} = \left[\Sigma_{0\bm{\beta}}^{-1} +  \sum_{i=1}^m\sum_{t=1}^T U^t(\bm{s}_i) e^{\bm{x}_t(\bm{s}_i)^{\T}\bm{\beta}^{(w)}}\bm{x}_t(\bm{s}_i)\bm{x}_t(\bm{s}_i)^{\T}\right]^{-1}$. We propose a new parameter value $\bm{\beta}^*$ from the density $q(\bm{\beta}|\bm{\beta}^{(w)})=Q\left(\bm{\beta}^{(w)},\bm{\beta}\right)$ and then set 
    		\begin{equation*}
    			\bm{\beta}^{(w+1)}=
    			\begin{cases}
    				\bm{\beta}^*,&\text{ with probability }\alpha\left(\bm{\beta}^{(w)},\bm{\beta}^*\right)\\ 
    				\bm{\beta}^{(w)},&\text{ with probability }1-\alpha\left(\bm{\beta}^{(w)},\bm{\beta}^*\right)
    			\end{cases},
    		\end{equation*}
    		where $\alpha\left(\bm{\beta}^{(w)},\bm{\beta}^*\right)=\min\left\lbrace\frac{f(\bm{\beta}^*|\cdot)}{f(\bm{\beta}^{(w)}|\cdot)},1\right\rbrace$ is the acceptance probability for this multivariate random walk Metropolis algorithm. 
    	\end{enumerate}

    	\section{Spatiotemporal Frailty Process Stationarity Condition for the Third Simulation Group Mentioned at the Start of \Cref{sec:simulation}}\label{appenB}
    	\begin{theorem} \label{thm:stationary2}
    		Fix any arbitrary $m\in \NN$ and spatial locations $\{\bm{s}_1,\ldots,\bm{s}_m\}$. For each $t=2,\ldots,T$ and $\bm{s}_i$ with $i=2,\ldots,m$, we assume
    		\begin{align} \label{eq:ARGsupp}
    			& U^{t}(\bm{s}_i) ~|~ Z^{t-1}(\bm{s}_i) \overset{\ind}{\sim} \Ga\left(\alpha + Z^{t-1}(\bm{s}_i), \frac{1}{c} \right), \nonumber \\
    			& Z^{t-1}(\bm{s}_i) ~|~  \left\{U^{t-1}(\bm{s}_j):~ j=1,\ldots,m\right\} \overset{\ind}{\sim} \Pois \left(\frac{1}{c} \sum_{j=1}^m v_{ij} U^{t-1}(\bm{s}_j)\right) 
    		\end{align}
    		independent of $U^t(\bm{s}_1)\sim \Ga(\alpha,\frac{1}{c})$ for some hyperparameter $\alpha>1$, parameter $c>0$, and nonnegative weights $\bfV=\{v_{ij}\geq 0:i,j=1,\ldots,m\}$ such that $\sum_{j=1}^m v_{ij}>0$ for all $i=2,\ldots,m$ and and $v_{1j}=0$ for all $j=1,\ldots,m$. 
    		
    		Assume $\bfV$ satisfies at least one of the two following conditions.
    		\begin{itemize}
    			\item \textbf{Condition 1:} $\sum_{i=1}^m v_{ij}\leq 1$ for all $j=1,\ldots,m$;  
    			\item \textbf{Condition 2:} $\sum_{j=1}^m v_{ij}\leq 1$ for all $i=1,\ldots,m$. 
    		\end{itemize}
    		Then the process $\{U^t(\bm{s}_{1:m}):t=1,2,\ldots,T\}$ defined above in \eqref{eq:ARGsupp} is a stationary process with an invariant distribution as $T\to\infty$.
    	\end{theorem}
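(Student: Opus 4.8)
The plan is to mirror the proof of \Cref{thm:stationary1}: exhibit $\{U^t(\bm{s}_{1:m})\}$ as a compound autoregressive process of order 1 through its conditional Laplace transform, then invoke Proposition 6 of \citet{Daretal06} together with the monotone-sequence estimates already developed there. The only structural novelty is that the first coordinate $U^t(\bm{s}_1)$ is now an i.i.d.\ $\Ga(\alpha, 1/c)$ sequence independent of the past, and the first row of $\bfV$ vanishes; I must check that these features are absorbed cleanly into the existing framework.

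First I would recompute the conditional Laplace transform $\mathbb{E}\left[\exp\{-\bm{x}^\T U^t(\bm{s}_{1:m})\} \mid U^{t-1}(\bm{s}_{1:m})\right]$ for $\bm{x} \geq \bm{0}$. The first coordinate contributes the factor $\mathbb{E}[\exp\{-x_1 U^t(\bm{s}_1)\}] = (cx_1+1)^{-\alpha}$, which carries no dependence on $U^{t-1}(\bm{s}_{1:m})$, while the factors for $i = 2, \ldots, m$ are exactly those in \eqref{eq:CAR1.1}. Because $v_{1j} = 0$ for every $j$, we have $\sum_{i=2}^m x_i v_{ij}/(cx_i+1) = \sum_{i=1}^m x_i v_{ij}/(cx_i+1)$, so the affine exponent is governed by the same functions $a(\bm{x})$ and $b(\bm{x}) = -\alpha\sum_{i=1}^m \log(cx_i+1)$ as in \Cref{thm:stationary1}, yielding $\mathbb{E}\left[\exp\{-\bm{x}^\T U^t(\bm{s}_{1:m})\} \mid U^{t-1}(\bm{s}_{1:m})\right] = \exp\{-a(\bm{x})^\T U^{t-1}(\bm{s}_{1:m}) + b(\bm{x})\}$.

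The step I expect to demand the most care is the nondegeneracy invoked in \Cref{thm:stationary1}: when $\bm{x}$ is supported only on the first coordinate, $a(\bm{x}) = \bm{0}$, reflecting that $U^t(\bm{s}_1)$ is i.i.d.\ and transmits nothing to the future. I would resolve this by noting that $a$ is not identically zero (for any $\bm{x}$ with $x_i > 0$ for some $i \geq 2$ and some $v_{ij} > 0$ one has $a_j(\bm{x}) > 0$), so the process is genuinely a compound autoregressive process of order 1 and Proposition 6 of \citet{Daretal06} applies; for the degenerate directions one has $a^{\circ h}(\bm{x}) = \bm{0}$ for all $h \geq 1$, so the required limit holds trivially there.

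Finally I would verify $\lim_{h\to\infty} a^{\circ h}(\bm{x}) = \bm{0}_{m\times 1}$ under either condition by reusing the monotone-sequence arguments almost verbatim. Under Condition 1 the bound $a_j^{\circ h}(\bm{x}) \leq \left(\sum_{i=1}^m v_{ij}\right) K_{i_{\max}^{(h-1)}} \leq f\left(a_{\max}^{h-1}(\bm{x})\right)$ is unaffected, since $\sum_{i=1}^m v_{ij} = \sum_{i=2}^m v_{ij} \leq 1$. Under Condition 2 the vanishing first row merely zeroes the $i=1$ summand of $\frac{1}{m}\sum_{i=1}^m \left(\sum_j v_{ij}\right) K_i$; bounding the remaining summands by $\sum_j v_{ij} \leq 1$ and reinstating the nonnegative $K_1$ term to apply Jensen to the concave $f(x) = x/(cx+1)$ recovers $a_{\mean}^h(\bm{x}) \leq f\left(a_{\mean}^{h-1}(\bm{x})\right)$ exactly as in \eqref{eq:a.upper2}. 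In both cases iterating $f$ yields the decay recorded in \eqref{eq:a.decrease12} and \eqref{eq:a.decrease22}, so $a^{\circ h}(\bm{x}) \to \bm{0}$ and Proposition 6 delivers the claimed stationarity with an invariant distribution.
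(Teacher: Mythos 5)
Your proposal is correct and follows essentially the same route as the paper's own proof: recompute the conditional Laplace transform (with the i.i.d.\ first coordinate contributing the past-independent factor $(cx_1+1)^{-\alpha}$), identify the same $a(\bm{x})$ and $b(\bm{x})$ as in \Cref{thm:stationary1} since $v_{1j}=0$, invoke Proposition~6 of \citet{Daretal06}, and rerun the Condition~1/Condition~2 decay estimates unchanged. Your explicit treatment of the degenerate direction ($\bm{x}$ supported on the first coordinate, where $a^{\circ h}(\bm{x})=\bm{0}$ trivially) is a point the paper handles only implicitly by restricting the quantifier to $\sum_{i=2}^m x_i>0$, but this is a refinement of the same argument, not a different one.
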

    	
    	\begin{proof}[Proof of \Cref{thm:stationary2}]
    		For any vector $\bm{x}=(x_1,\ldots,x_m)^\T \in \RR^{m}$ such that $x_i\geq 0$ for all $i=1,\ldots,m$, we have 
    		\begin{align} \label{eq:CAR1.1supp}
    			&~\quad \mathbb{E}\left[\exp\left\{-\bm{x}^\T U^t(\bm{s}_{1:m})\right\}  ~|~ U^{t-1}(\bm{s}_{1:m}) \right] \nonumber \\
    			& = \mathbb{E}\left(\mathbb{E}\left[\exp\left\{-\bm{x}^\T U^t(\bm{s}_{1:m})\right\} ~|~ Z^{t-1}(\bm{s}_{2:m})\right] ~|~ U^{t-1}(\bm{s}_{1:m})  \right) \nonumber \\
    			&= \mathbb{E}\left( \mathbb{E}\left[\exp\left\{-x_1 U^t(\bm{s}_1)\right\}\right]\times\prod_{i=2}^m \mathbb{E}\left[\exp\left\{-x_i U^t(\bm{s}_i)\right\} ~|~ Z^{t-1}(\bm{s}_i)\right] ~\Big|~ U^{t-1}(\bm{s}_{1:m})  \right) \nonumber \\
    			&= \mathbb{E}\left( \frac{1}{(cx_1+1)^{\alpha}}\times\prod_{i=2}^m \frac{1}{(cx_i+1)^{\alpha+Z^{t-1}(\bm{s}_i)}} ~\Big|~ U^{t-1}(\bm{s}_{1:m})  \right) \nonumber \\
    			&= \frac{1}{(cx_1+1)^{\alpha}}\times\prod_{i=2}^m\mathbb{E}\left(\frac{1}{(cx_i+1)^{\alpha+Z^{t-1}(\bm{s}_i)}} ~\Big|~ U^{t-1}(\bm{s}_{1:m})  \right) \nonumber \\
    			&= \frac{1}{(cx_1+1)^{\alpha}}\times\prod_{i=2}^m \sum_{k=0}^{\infty} \frac{\left(\frac{1}{c} \sum_{j=1}^m v_{ij} U^{t-1}(\bm{s}_j)\right)^k}{k!} \exp\left\{-\frac{1}{c} \sum_{j=1}^m v_{ij} U^{t-1}(\bm{s}_j)\right\} \cdot \frac{1}{(cx_i+1)^{\alpha+k}}   \nonumber \\
    			&= \prod_{i=1}^m \frac{1}{(cx_i+1)^{\alpha}}\times \exp\nonumber \left\{-\frac{1}{c} \sum_{j=1}^m v_{ij} U^{t-1}(\bm{s}_j)\left[1 - \frac{1}{cx_i+1}\right]\right\}\\
    			&= \exp\left\{-\sum_{j=1}^m \left(\sum_{i=1}^m \frac{x_iv_{ij}}{cx_i+1} \right) U^{t-1}(\bm{s}_j) -\alpha\sum_{i=1}^m \log (cx_i+1)\right\}.
    		\end{align}
    		If we define vector-valued function $a(\bm{x}) = \left(\sum_{i=1}^m \frac{x_iv_{i1}}{cx_i+1},\ldots,\sum_{i=1}^m \frac{x_iv_{im}}{cx_i+1}\right)^\T \in \RR^m$ and scalar-valued function $b(\bm{x}) = -\alpha\sum_{i=1}^m \log (cx_i+1) \in \RR$, then \eqref{eq:CAR1.1} implies 
    		\begin{align*}
    			\mathbb{E}\left[\exp\left\{-\bm{x}^\T U^t(\bm{s}_{1:m})\right\}  ~|~ U^{t-1}(\bm{s}_{1:m}) \right] &= \exp\left\{- a(\bm{x})^\T U^{t-1}(\bm{s}_{1:m}) + b(\bm{x})\right\}.
    		\end{align*}
    		Since $v_{ij}\geq 0$ for all $(i,j)$ and $\sum_{j=1}^m v_{ij}>0$ for all $i=2,\ldots,m$, $a(\bm{x})\neq \bm{0}_{m\times 1}$ for all $\bm{x}\in\mathbb{R}^m$ with $x_i\geq 0$ for all $i$ and $\sum_{i=2}^m x_i>0$. Hence, based on Definition 1 in \citet{Daretal06}, the vector process $\{U^t(\bm{s}_{1:m}):t=1,2,\ldots\}$ is a temporal compound autoregressive process of order 1 for a fixed $m$ and given spatial locations $\{\bm{s}_1,\ldots,\bm{s}_m\}$. 	
    		By Proposition 6 in \citet{Daretal06}, the process $\{U^t(\bm{s}_{1:m}):t=1,2,\ldots\}$ converges to a stationary distribution as $T\to\infty$ if and only if $\lim_{h\to\infty} a^{\circ h}(\bm{x}) =\bm{0}_{m\times 1}$ for all $\bm{x}\in\RR^m$ with $x_i\geq 0$ for all $i=1,\ldots,m$ and $\sum_{i=2}^m x_i>0$, where $a^{\circ h}(\bm{x})$ represents the function $a(\bm{x})$ compounded $h$ times with itself. The rest of the proof is exactly the same as that in \Cref{thm:stationary1}.
    	\end{proof}
    	
    	\clearpage
    	\section{Simulation Results Complementary to 
    		\Cref{sec:simulation}}
    	\begin{figure}[h!]
    		\centering
    		\includegraphics[width=0.9\textwidth]{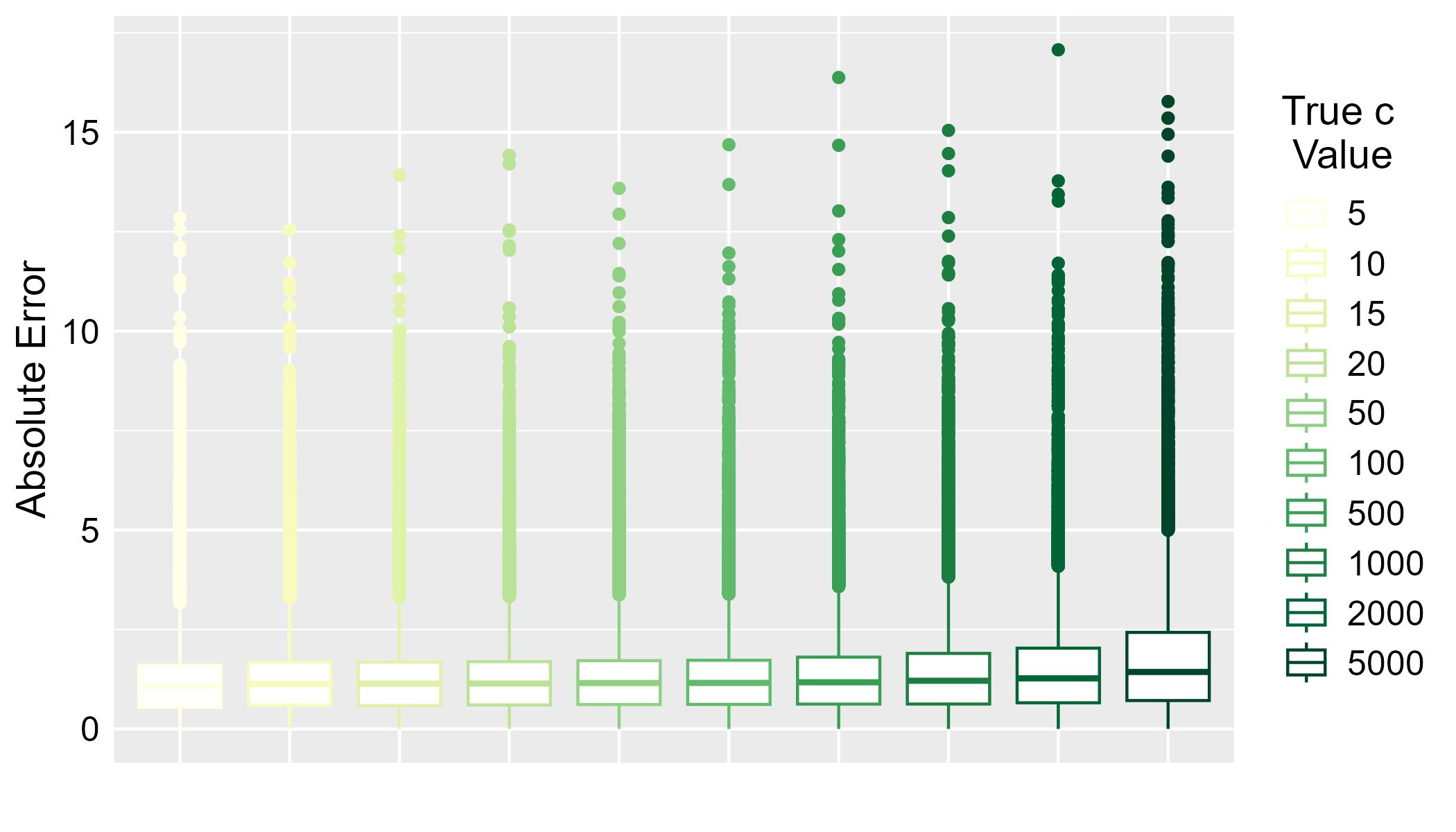}     
    		\caption{Boxplot: distribution of the $mT$ absolute errors $\big|y_t(\bm{s}_i)-\hat{y}_t(\bm{s}_i)\big|$, where $\hat{y}_t(\bm{s}_i)$ denotes the posterior mean in time interval $t$ and spatial location $\bm{s}_i$ for each $(t,i)$, for our model fit with \texttt{hypara1} to data simulated from the first simulation group (mentioned at the start of \Cref{sec:simulation}) and each of the ten true $c$ parameter values.}
    		\label{plot:absErrorBoxplot_hyparaDefault} 
    	\end{figure}
    	\begin{figure}[h!]
    		\centering
    		\includegraphics[width=0.9\textwidth]{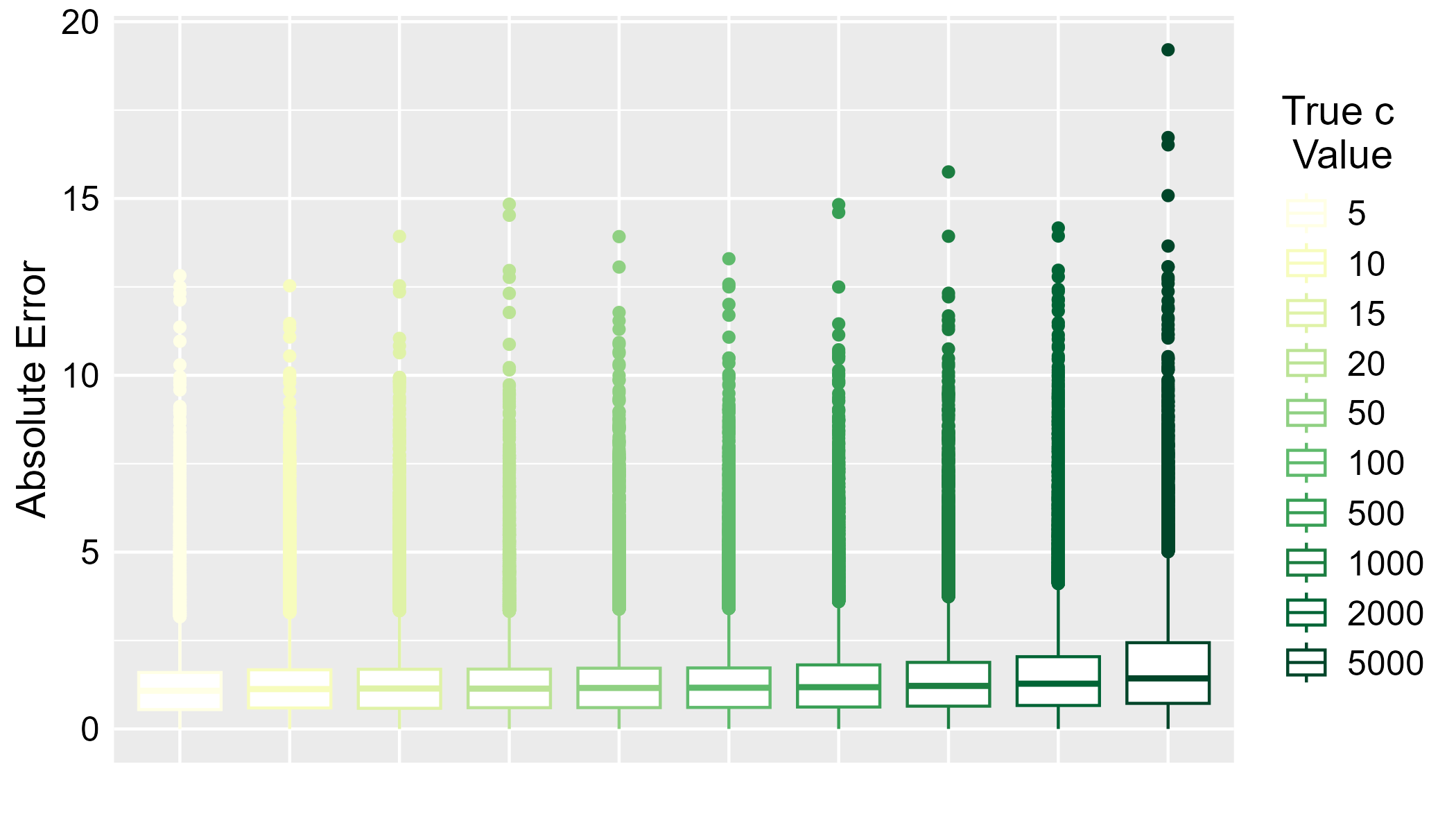}     
    		\caption{Boxplot: distribution of the $mT$ absolute errors $\big|y_t(\bm{s}_i)-\hat{y}_t(\bm{s}_i)\big|$, where $\hat{y}_t(\bm{s}_i)$ denotes the posterior mean in time interval $t$ and spatial location $\bm{s}_i$ for each $(t,i)$, for our model fit with \texttt{hypara2} to data simulated from the first simulation group (mentioned at the start of \Cref{sec:simulation}) and each of the ten true $c$ parameter values.}
    		\label{plot:absErrorBoxplot_hyparaMiddle} 
    	\end{figure}
    	
    	\begin{table}[H]
    		{
    			\begin{adjustwidth}{-0cm}{-0cm}
    				\begin{center}
    					\scalebox{0.85}{\begin{tabular}{|*{9}{c|}}
    							\hline
    							\multirow{3}{*}{\textbf{True $c$}} & \multicolumn{4}{|c|}{\textbf{True $\kappa=0.35$}} & \multicolumn{4}{|c|}{\textbf{True $\kappa=0.7$}}\\
    							\cline{2-9}
    							& \multicolumn{2}{|c|}{{$2^{\text{nd}}$ simulation group}} & \multicolumn{2}{|c|}{{$3^{\text{rd}}$ simulation group}} & \multicolumn{2}{|c|}{{$2^{\text{nd}}$ simulation group}} &
    							\multicolumn{2}{|c|}{{$3^{\text{rd}}$ simulation group}}\\
    							\cline{2-9}
    							\textbf{Value}  & \texttt{hypara3} & \texttt{hypara4} & \texttt{hypara3} & \texttt{hypara4} & \texttt{hypara3} & \texttt{hypara4} & \texttt{hypara3} & \texttt{hypara4}\\
    							\hline
    							$c=5$ & 0.739245 & 0.738330 & 0.723956 & 0.721725 & 1.009475 & 1.007144 & 0.942227 & 0.939763\\
    							\hline
    							$c=10$ & 0.766734 & 0.76466 & 0.750299 & 0.748261 & 1.041290 & 1.040639 & 0.976482 & 0.974455\\
    							\hline
    							$c=15$ & 0.775026 & 0.773128 & 0.757853 & 0.756802 & 1.049664 & 1.049141 & 0.979377 & 0.978869\\
    							\hline
    							$c=20$ & 0.779227 & 0.778088 & 0.762157 & 0.760419 & 1.052398 & 1.052027 & 0.987071 & 0.985950\\
    							\hline
    							$c=50$ & 0.789033 & 0.788211 & 0.774082 & 0.773347 & 1.067943 & 1.06694 & 1.00299 & 1.00351\\
    							\hline
    							$c=100$ & 0.794015 & 0.794397 & 0.779964 & 0.781459 & 1.074548 & 1.079313 & 1.006237 & 1.007725\\
    							\hline
    							$c=500$ & 0.818645 & 0.818356 & 0.805213 & 0.800317 & 1.114351 & 1.11053 & 1.039833 & 1.029073\\
    							\hline
    							$c=1000$ & 0.847104 & 0.847053 & 0.825729 & 0.831444 & 1.160870 & 1.15175 & 1.08213 & 1.076918\\
    							\hline
    							$c=2000$ & 0.894481 & 0.891117 & 0.874451 & 0.880667 & 1.228211 & 1.23757 & 1.139403 & 1.149787\\
    							\hline
    							$c=5000$ & 1.018098 & 1.016191 & 1.004142 & 1.00886 & 1.426702 & 1.450810 & 1.324379 & 1.326022\\
    							\hline
    					\end{tabular}}
    				\end{center}
    			\end{adjustwidth}
    			\caption{Mean Absolute Error (MAE) for each of the $2\times 2\times 2\times 10=80$ simulation settings for the second and the third simulation groups.}
    			\label{MAE}
    		}
    	\end{table}    
    	
    	\begin{table}[H]
    		{
    			\begin{adjustwidth}{-0cm}{-0cm}
    				\begin{center}
    					\scalebox{0.84}{\begin{tabular}{|*{9}{c|}}
    							\hline
    							\multirow{3}{*}{\textbf{True $c$}} & \multicolumn{4}{|c|}{\textbf{True $\kappa=0.35$}} & \multicolumn{4}{|c|}{\textbf{True $\kappa=0.7$}}\\
    							\cline{2-9}
    							& \multicolumn{2}{|c|}{{$2^{\text{nd}}$ simulation group}} & \multicolumn{2}{|c|}{{$3^{\text{rd}}$ simulation group}} & \multicolumn{2}{|c|}{{$2^{\text{nd}}$ simulation group}} &
    							\multicolumn{2}{|c|}{{$3^{\text{rd}}$ simulation group}}\\
    							\cline{2-9}
    							\textbf{Value}  & \texttt{hypara3} & \texttt{hypara4} & \texttt{hypara3} & \texttt{hypara4} & \texttt{hypara3} & \texttt{hypara4} & \texttt{hypara3} & \texttt{hypara4}\\
    							\hline
    							$c=5$ & 0.3722 & 0.3696 & 0.3399 & 0.3343 & 0.7010 & 0.7003 & 0.7175 & 0.7164\\
    							\hline
    							$c=10$ & 0.3711 & 0.3680 & 0.3408 & 0.3379 & 0.6990 & 0.6989 & 0.7175 & 0.7171\\
    							\hline
    							$c=15$ & 0.3699 & 0.3684 & 0.3371 & 0.3357 & 0.7010 & 0.7012 & 0.7103 & 0.7099 \\
    							\hline
    							$c=20$ & 0.3677 & 0.3674 & 0.3431 & 0.3389 & 0.6995 & 0.6991 & 0.7149 & 0.7147\\
    							\hline
    							$c=50$ & 0.3690 & 0.3662 & 0.3416 & 0.3407 & 0.7000 & 0.6998 & 0.7147 & 0.7152\\
    							\hline
    							$c=100$ & 0.3696 & 0.3675 & 0.3468 & 0.3468 & 0.6988 & 0.6986 & 0.7142 & 0.7137\\
    							\hline
    							$c=500$ & 0.3679 & 0.3685 & 0.3426 & 0.3425 & 0.6992 & 0.6992 & 0.7141 & 0.7138\\
    							\hline
    							$c=1000$ & 0.3685 & 0.3688 & 0.3417 & 0.3404 & 0.6990 & 0.6993 & 0.7137 & 0.7136\\
    							\hline
    							$c=2000$ & 0.3672 & 0.3672 & 0.3399 & 0.3412 & 0.6993 & 0.6993 & 0.7141 & 0.7139\\
    							\hline
    							$c=5000$ & 0.3674 & 0.3682 & 0.3410 & 0.3406 & 0.6996 & 0.6993 & 0.7143 & 0.7138\\
    							\hline
    					\end{tabular}}
    				\end{center}
    			\end{adjustwidth}
    			\caption{Mean value of the kept $10^4$ posterior samples of the parameter $\kappa$ for each of the $2\times 2\times 2\times 10=80$ simulation settings for the second and the third simulation groups.}
    			\label{postMeanKappa}
    		}
    	\end{table}
    	
    	\begin{table}[H]
    		{
    			\begin{adjustwidth}{-0cm}{-0cm}
    				\begin{center}
    					\scalebox{0.84}{\begin{tabular}{|*{9}{c|}}
    							\hline
    							\multirow{3}{*}{\textbf{True $c$}} & \multicolumn{4}{|c|}{\textbf{True $\kappa=0.35$}} & \multicolumn{4}{|c|}{\textbf{True $\kappa=0.7$}}\\
    							\cline{2-9}
    							& \multicolumn{2}{|c|}{{$2^{\text{nd}}$ simulation group}} & \multicolumn{2}{|c|}{{$3^{\text{rd}}$ simulation group}} & \multicolumn{2}{|c|}{{$2^{\text{nd}}$ simulation group}} &
    							\multicolumn{2}{|c|}{{$3^{\text{rd}}$ simulation group}}\\
    							\cline{2-9}
    							\textbf{Value}  & \texttt{hypara3} & \texttt{hypara4} & \texttt{hypara3} & \texttt{hypara4} & \texttt{hypara3} & \texttt{hypara4} & \texttt{hypara3} & \texttt{hypara4}\\
    							\hline
    							$c=5$ & 4.941 & 4.964 & 5.217 & 5.257 & 5.067 & 5.081 & 4.808 & 4.828 \\
    							\hline
    							$c=10$ & 9.849 & 9.901 & 10.414 & 10.459 & 10.164 & 10.170 & 9.725 & 9.747 \\
    							\hline
    							$c=15$ & 14.83 & 14.87 & 15.75 & 15.78 & 15.21 & 15.20 & 14.74 & 14.75 \\
    							\hline
    							$c=20$ & 19.86 & 19.87 & 20.85 & 20.97 & 20.35 & 20.38 & 19.49 & 19.50\\
    							\hline
    							$c=50$ & 49.58 & 49.79 & 52.20 & 52.25 & 50.80 & 50.87 & 48.72 & 48.64\\
    							\hline
    							$c=100$ & 99.07 & 99.38 & 103.31 & 103.26 & 102.02 & 102.1 & 97.48 & 97.54\\
    							\hline
    							$c=500$ & 496.2 & 495.6 & 520.4 & 520.6 & 509.9 & 509.8 & 488.1 & 488.6\\
    							\hline
    							$c=1000$ & 991.2 & 990.9 & 1042.3 & 1043.9 & 1019.9 & 1019.4 & 978.1 & 978.0\\
    							\hline
    							$c=2000$ & 1987 & 1987 & 2089 & 2086 & 2037 & 2038 & 1954 & 1955\\
    							\hline
    							$c=5000$ & 4967 & 4959  & 5215 & 5218 & 5087 & 5093 & 4884 & 4889\\
    							\hline
    					\end{tabular}}
    				\end{center}
    			\end{adjustwidth}
    			\caption{Mean value of the kept $10^4$ posterior samples of the parameter $c$ for each of the $2\times 2\times 2\times 10=80$ simulation settings for the second and the third simulation groups.}
    			\label{postMeanC}
    		}
    	\end{table}
    	\clearpage   
    	
    	\begin{table}[H]
    		{
    			\begin{adjustwidth}{-0cm}{-0cm}
    				\begin{center}
    					\scalebox{0.95}{\begin{tabular}{|*{8}{c|}}
    							\hline
    							\multicolumn{2}{|c|}{} & Minimum & 1st Quantile & Median & Mean & 3rd Quantile & Maximum \\
    							\hline
    							\multicolumn{2}{|c|}{Actual $y_t(\bm{s}_i)$'s} & 0.00 &  11.00 &  21.00 &  24.72  & 34.00 & 127.00 \\
    							\hline
    							\multirow{2}{*}{$U_t(\bm{s}_i)$'s} & \texttt{hypara1} &  0.00 &  11.92 &  21.36 &  24.72  & 33.93 & 156.96  \\
    							\cline{2-8}
    							& \texttt{hypara2} &  0.00  & 11.92  & 21.36 &  24.72 &  33.93 & 160.24\\
    							\hline
    							\multirow{2}{*}{$c$} & \texttt{hypara1} & 4.743 &  4.933  & 4.986 &  4.987 &  5.041  & 5.249  \\
    							\cline{2-8}
    							& \texttt{hypara2} & 4.654 &  4.945 &  4.997 &  4.997 &  5.049 &  5.409\\
    							\hline
    							\multirow{2}{*}{$\kappa$} & \texttt{hypara1} & 0.3673 & 0.3958 & 0.4022 & 0.4023 & 0.4090 & 0.4411    \\
    							\cline{2-8}
    							& \texttt{hypara2} &  0.3682 & 0.3966 & 0.4030 & 0.4030 & 0.4092 & 0.4383\\
    							\hline
    							\multirow{2}{*}{$\rho$} & \texttt{hypara1} & 0.3662 & 0.3975 & 0.4041 & 0.4040 & 0.4104 & 0.4406     \\
    							\cline{2-8}
    							& \texttt{hypara2} & 0.3671 & 0.3970 & 0.4030 & 0.4031 & 0.4094 & 0.4387\\
    							\hline
    					\end{tabular}}
    				\end{center}
    			\end{adjustwidth}
    			\caption{Summary statistics of the actual observed counts $y_{1:T}(\bm{s}_{1:m})$ and the kept $10^4$ posterior samples of the parameters $c,\kappa, \rho,U_{1:T}(\bm{s}_{1:m})$ for hyperparameter specifications \texttt{hypara1} and \texttt{hypara2} with true parameter values $(\kappa,\rho,c) = (0.4, 0.4, 5)$ and undirected spatial locations from the first simulation group.}
    			\label{c5complete}
    		}
    	\end{table}
    	
    	\begin{table}[h]
    		{
    			\begin{adjustwidth}{-0cm}{-0cm}
    				\begin{center}
    					\scalebox{0.95}{\begin{tabular}{|*{8}{c|}}
    							\hline
    							\multicolumn{2}{|c|}{} & Minimum & 1st Quantile & Median & Mean & 3rd Quantile & Maximum \\
    							\hline
    							\multicolumn{2}{|c|}{Actual $y_t(\bm{s}_i)$'s} & 0.00  & 23.00  & 42.00 &  49.46  & 68.00 & 258.00 \\
    							\hline
    							\multirow{2}{*}{$U_t(\bm{s}_i)$'s} & \texttt{hypara1} &  0.00  & 23.88 &  42.73 &  49.46 &  67.79 & 315.09  \\
    							\cline{2-8}
    							& \texttt{hypara2} & 0.00  & 23.88  & 42.73 &  49.46 &  67.79 & 317.33 \\
    							\hline
    							\multirow{2}{*}{$c$} & \texttt{hypara1} & 9.405 &  9.834 &  9.934 &  9.936 & 10.034 & 10.656 \\
    							\cline{2-8}
    							& \texttt{hypara2} & 9.413 &  9.853 &  9.946 &  9.949 & 10.043 & 10.480\\
    							\hline
    							\multirow{2}{*}{$\kappa$} & \texttt{hypara1} & 0.3660 & 0.3970 & 0.4031 & 0.4031 & 0.4092 & 0.4396    \\
    							\cline{2-8}
    							& \texttt{hypara2} & 0.3630 & 0.3963 & 0.4025 & 0.4025 & 0.4086 & 0.4393 \\
    							\hline
    							\multirow{2}{*}{$\rho$} & \texttt{hypara1} & 0.3726 & 0.3975 & 0.4034 & 0.4035 & 0.4096 & 0.4376      \\
    							\cline{2-8}
    							& \texttt{hypara2} & 0.3636 & 0.3978 & 0.4042 & 0.4040 & 0.4102 & 0.4371 \\
    							\hline
    					\end{tabular}}
    				\end{center}
    			\end{adjustwidth}
    			\caption{Summary statistics of the actual observed counts $y_{1:T}(\bm{s}_{1:m})$ and the kept $10^4$ posterior samples of the parameters $c,\kappa, \rho,U_{1:T}(\bm{s}_{1:m})$ for hyperparameter specifications \texttt{hypara1} and \texttt{hypara2} with true parameter values $(\kappa,\rho,c) = (0.4, 0.4, 10)$ and undirected spatial locations from the first simulation group.}
    			\label{c10complete}
    		}
    	\end{table}
    	
    	\begin{table}[h]
    		{
    			\begin{adjustwidth}{-0cm}{-0cm}
    				\begin{center}
    					\scalebox{0.95}{\begin{tabular}{|*{8}{c|}}
    							\hline
    							\multicolumn{2}{|c|}{} & Minimum & 1st Quantile & Median & Mean & 3rd Quantile & Maximum \\
    							\hline
    							\multicolumn{2}{|c|}{Actual $y_t(\bm{s}_i)$'s} &  0.00  & 35.00 &  64.00  & 74.13 & 102.00 & 382.00 \\
    							\hline
    							\multirow{2}{*}{$U_t(\bm{s}_i)$'s} & \texttt{hypara1} & 0.00  & 35.87 &  64.03 &  74.13 & 101.36 & 459.63  \\
    							\cline{2-8}
    							& \texttt{hypara2} &  0.00  & 35.87  & 64.03  & 74.13 & 101.36 & 443.33 \\
    							\hline
    							\multirow{2}{*}{$c$} & \texttt{hypara1} &  14.19 &  14.75  & 14.91 &  14.91  & 15.06 &  15.69   \\
    							\cline{2-8}
    							& \texttt{hypara2} &  14.07  & 14.77 &  14.93 &  14.92 &  15.08  & 15.67 \\
    							\hline
    							\multirow{2}{*}{$\kappa$} & \texttt{hypara1} & 0.3722 & 0.4027 & 0.4088 & 0.4088 & 0.4151 & 0.4412   \\
    							\cline{2-8}
    							& \texttt{hypara2} & 0.3715 & 0.4010 & 0.4076 & 0.4074 & 0.4138 & 0.4418\\
    							\hline
    							\multirow{2}{*}{$\rho$} & \texttt{hypara1} & 0.3604 & 0.3919 & 0.3976 & 0.3979 & 0.4039 & 0.4348    \\
    							\cline{2-8}
    							& \texttt{hypara2} & 0.3645 & 0.3928 & 0.3990 & 0.3991 & 0.4054 & 0.4287 \\
    							\hline
    					\end{tabular}}
    				\end{center}
    			\end{adjustwidth}
    			\caption{Summary statistics of the actual observed counts $y_{1:T}(\bm{s}_{1:m})$ and the kept $10^4$ posterior samples of the parameters $c,\kappa, \rho,U_{1:T}(\bm{s}_{1:m})$ for hyperparameter specifications \texttt{hypara1} and \texttt{hypara2} with true parameter values $(\kappa,\rho,c) = (0.4, 0.4, 15)$ and undirected spatial locations from the first simulation group.}
    			\label{c15complete}
    		}
    	\end{table}
    	
    	\begin{table}[h]
    		{
    			\begin{adjustwidth}{-0cm}{-0cm}
    				\begin{center}
    					\scalebox{0.95}{\begin{tabular}{|*{8}{c|}}
    							\hline
    							\multicolumn{2}{|c|}{} & Minimum & 1st Quantile & Median & Mean & 3rd Quantile & Maximum \\
    							\hline
    							\multicolumn{2}{|c|}{Actual $y_t(\bm{s}_i)$'s} & 0.00  & 47.00 &  85.00 &  98.84 & 135.00 & 517.00\\
    							\hline
    							\multirow{2}{*}{$U_t(\bm{s}_i)$'s} & \texttt{hypara1} &  0.00 &  47.56 &  85.25 &  98.84 & 135.15 & 596.63 \\
    							\cline{2-8}
    							& \texttt{hypara2} & 0.00 &  47.55 &  85.25  & 98.84  & 135.15  & 604.63 \\
    							\hline
    							\multirow{2}{*}{$c$} & \texttt{hypara1} & 18.91 &  19.70 &  19.88 &  19.89 &  20.08 &  21.29   \\
    							\cline{2-8}
    							& \texttt{hypara2} &  18.90  & 19.68 &  19.87 &  19.87 & 20.06 &  20.95 \\
    							\hline
    							\multirow{2}{*}{$\kappa$} & \texttt{hypara1} & 0.3728 & 0.3998 & 0.4068 & 0.4064 & 0.4132 & 0.4402   \\
    							\cline{2-8}
    							& \texttt{hypara2} & 0.3748 & 0.4008 & 0.4070 & 0.4072 & 0.4134 & 0.4432 \\
    							\hline
    							\multirow{2}{*}{$\rho$} & \texttt{hypara1} & 0.3682 & 0.3937 & 0.3998 & 0.4001 & 0.4063 & 0.4357 \\
    							\cline{2-8}
    							& \texttt{hypara2} & 0.3642 & 0.3933 & 0.3995 & 0.3995 & 0.4058 & 0.4320  \\
    							\hline
    					\end{tabular}}
    				\end{center}
    			\end{adjustwidth}
    			\caption{Summary statistics of the actual observed counts $y_{1:T}(\bm{s}_{1:m})$ and the kept $10^4$ posterior samples of the parameters $c,\kappa, \rho,U_{1:T}(\bm{s}_{1:m})$ for hyperparameter specifications \texttt{hypara1} and \texttt{hypara2} with true parameter values $(\kappa,\rho,c) = (0.4, 0.4, 20)$ and undirected spatial locations from the first simulation group.}
    			\label{c20complete}
    		}
    	\end{table}
    	
    	\begin{table}[h]
    		{
    			\begin{adjustwidth}{-0cm}{-0cm}
    				\begin{center}
    					\scalebox{0.95}{\begin{tabular}{|*{8}{c|}}
    							\hline
    							\multicolumn{2}{|c|}{} & Minimum & 1st Quantile & Median & Mean & 3rd Quantile & Maximum \\
    							\hline
    							\multicolumn{2}{|c|}{Actual $y_t(\bm{s}_i)$'s} &  0.0 &  118.0  & 213.0  & 247.1  & 338.0 & 1282.0  \\
    							\hline
    							\multirow{2}{*}{$U_t(\bm{s}_i)$'s} & \texttt{hypara1} & 0.0  & 118.6 &  213.2 &  247.1 &  338.8 & 1405.3   \\
    							\cline{2-8}
    							& \texttt{hypara2} & 0.0 &  118.6 &  213.2 &  247.1 &  338.8 & 1418.8\\
    							\hline
    							\multirow{2}{*}{$c$} & \texttt{hypara1} & 47.18 &  49.37  & 49.84  & 49.86 &  50.33 &  52.88     \\
    							\cline{2-8}
    							& \texttt{hypara2} & 47.24 &  49.35 &  49.84 &  49.84 &  50.33  & 52.37  \\
    							\hline
    							\multirow{2}{*}{$\kappa$} & \texttt{hypara1}  & 0.3713 & 0.3999 & 0.4060 & 0.4061 & 0.4121 & 0.4390\\
    							\cline{2-8}
    							& \texttt{hypara2} & 0.3765 & 0.4007 & 0.4068 & 0.4069 & 0.4132 & 0.4390  \\
    							\hline
    							\multirow{2}{*}{$\rho$} & \texttt{hypara1} & 0.3643 & 0.3938 & 0.3999 & 0.3998 & 0.4060 & 0.4304  \\
    							\cline{2-8}
    							& \texttt{hypara2} &  0.3679 & 0.3931 & 0.3992 & 0.3991 & 0.4051 & 0.4313  \\
    							\hline
    					\end{tabular}}
    				\end{center}
    			\end{adjustwidth}
    			\caption{Summary statistics of the actual observed counts $y_{1:T}(\bm{s}_{1:m})$ and the kept $10^4$ posterior samples of the parameters $c,\kappa, \rho,U_{1:T}(\bm{s}_{1:m})$ for the two hyperparameter specifications (\texttt{hypara1} and \texttt{hypara2}) when the true parameter values are $(\kappa,\rho,c) = (0.4, 0.4, 50)$ and the spatial locations are undirected from the first simulation group.}
    			\label{c50complete}
    		}
    	\end{table}
    	
    	\begin{table}[h]
    		{
    			\begin{adjustwidth}{-0cm}{-0cm}
    				\begin{center}
    					\scalebox{0.95}{\begin{tabular}{|*{8}{c|}}
    							\hline
    							\multicolumn{2}{|c|}{} & Minimum & 1st Quantile & Median & Mean & 3rd Quantile & Maximum \\
    							\hline
    							\multicolumn{2}{|c|}{Actual $y_t(\bm{s}_i)$'s} &  0.0  & 237.0  & 426.0  & 494.2 &  674.0 & 2454.0 \\
    							\hline
    							\multirow{2}{*}{$U_t(\bm{s}_i)$'s} & \texttt{hypara1} & 0.0 &  237.9 &  426.5 &  494.2  & 675.2 & 2631.5  \\
    							\cline{2-8}
    							& \texttt{hypara2} &  0.0  & 237.9  & 426.5  & 494.2 &  675.2 & 2637.0 \\
    							\hline
    							\multirow{2}{*}{$c$} & \texttt{hypara1} &  94.82 &  98.58  & 99.56 &  99.57 & 100.55 & 105.36   \\
    							\cline{2-8}
    							& \texttt{hypara2} & 94.58 &  98.55  & 99.48  & 99.53 & 100.50 & 104.58 \\
    							\hline
    							\multirow{2}{*}{$\kappa$} & \texttt{hypara1} &  0.3696 & 0.3999 & 0.4063 & 0.4061 & 0.4125 & 0.4369   \\
    							\cline{2-8}
    							& \texttt{hypara2} & 0.3733 & 0.4008 & 0.4067 & 0.4068 & 0.4128 & 0.4387\\
    							\hline
    							\multirow{2}{*}{$\rho$} & \texttt{hypara1} & 0.3685 & 0.3941 & 0.4000 & 0.4002 & 0.4059 & 0.4332    \\
    							\cline{2-8}
    							& \texttt{hypara2} & 0.3653 & 0.3937 & 0.3995 & 0.3995 & 0.4052 & 0.4318 \\
    							\hline
    					\end{tabular}}
    				\end{center}
    			\end{adjustwidth}
    			\caption{Summary statistics of the actual observed counts $y_{1:T}(\bm{s}_{1:m})$ and the kept $10^4$ posterior samples of the parameters $c,\kappa, \rho,U_{1:T}(\bm{s}_{1:m})$ for hyperparameter specifications \texttt{hypara1} and \texttt{hypara2} with true parameter values $(\kappa,\rho,c) = (0.4, 0.4, 100)$ and undirected spatial locations from the first simulation group.}
    			\label{c100complete}
    		}
    	\end{table}
    	
    	\begin{table}[h]
    		{
    			\begin{adjustwidth}{-0cm}{-0cm}
    				\begin{center}
    					\scalebox{0.95}{\begin{tabular}{|*{8}{c|}}
    							\hline
    							\multicolumn{2}{|c|}{} & Minimum & 1st Quantile & Median & Mean & 3rd Quantile & Maximum \\
    							\hline
    							\multicolumn{2}{|c|}{Actual $y_t(\bm{s}_i)$'s} & 0  &  1188 &   2130  &  2471  &  3378  & 12125  \\
    							\hline
    							\multirow{2}{*}{$U_t(\bm{s}_i)$'s} & \texttt{hypara1} & 0  &  1189  &  2129 &   2471 &   3378  & 12547 \\
    							\cline{2-8}
    							& \texttt{hypara2} & 0  &  1189  &  2129  &  2471  &  3378 &  12556  \\
    							\hline
    							\multirow{2}{*}{$c$} & \texttt{hypara1} & 470.3 &  492.3 &  497.3 &  497.5 &  502.5  & 525.9  \\
    							\cline{2-8}
    							& \texttt{hypara2} &  466.7  & 492.8 &  497.7 &  497.7 &  502.6  & 531.9 \\
    							\hline
    							\multirow{2}{*}{$\kappa$} & \texttt{hypara1} & 0.3697 & 0.4005 & 0.4066 & 0.4065 & 0.4127 & 0.4417\\
    							\cline{2-8}
    							& \texttt{hypara2} & 0.3710 & 0.4001 & 0.4062 & 0.4062 & 0.4123 & 0.4374  \\
    							\hline
    							\multirow{2}{*}{$\rho$} & \texttt{hypara1} & 0.3658 & 0.3940 & 0.3999 & 0.3999 & 0.4058 & 0.4371   \\
    							\cline{2-8}
    							& \texttt{hypara2} & 0.3715 & 0.3942 & 0.4000 & 0.4001 & 0.4058 & 0.4338 \\
    							\hline
    					\end{tabular}}
    				\end{center}
    			\end{adjustwidth}
    			\caption{Summary statistics of the actual observed counts $y_{1:T}(\bm{s}_{1:m})$ and the kept $10^4$ posterior samples of the parameters $c,\kappa, \rho,U_{1:T}(\bm{s}_{1:m})$ for hyperparameter specifications \texttt{hypara1} and \texttt{hypara2} with true parameter values $(\kappa,\rho,c) = (0.4, 0.4, 500)$ and undirected spatial locations from the first simulation group.}
    			\label{c500complete}
    		}
    	\end{table}
    	
    	\begin{table}[h]
    		{
    			\begin{adjustwidth}{-0cm}{-0cm}
    				\begin{center}
    					\scalebox{0.95}{\begin{tabular}{|*{8}{c|}}
    							\hline
    							\multicolumn{2}{|c|}{} & Minimum & 1st Quantile & Median & Mean & 3rd Quantile & Maximum \\
    							\hline
    							\multicolumn{2}{|c|}{Actual $y_t(\bm{s}_i)$'s} & 1  &  2369 &   4252  &  4941  &  6736  & 24221 \\
    							\hline
    							\multirow{2}{*}{$U_t(\bm{s}_i)$'s} & \texttt{hypara1} & 0.019 & 2373.657 & 4255.920 & 4941.258 & 6739.421 & 24875.179  \\
    							\cline{2-8}
    							& \texttt{hypara2} & 0.008 & 2373.594 & 4255.815 & 4941.266 & 6739.462 & 24775.287  \\
    							\hline
    							\multirow{2}{*}{$c$} & \texttt{hypara1} & 945.6  & 986.8  & 997.2 &  996.9 & 1007.2 & 1044.5   \\
    							\cline{2-8}
    							& \texttt{hypara2} & 937.7  & 985.6  & 994.8  & 994.9 & 1004.0 & 1046.7 \\
    							\hline
    							\multirow{2}{*}{$\kappa$} & \texttt{hypara1} & 0.3727 & 0.4007 & 0.4066 & 0.4067 & 0.4128 & 0.4399    \\
    							\cline{2-8}
    							& \texttt{hypara2} & 0.3755 & 0.4009 & 0.4072 & 0.4073 & 0.4134 & 0.4400 \\
    							\hline
    							\multirow{2}{*}{$\rho$} & \texttt{hypara1} & 0.3672 & 0.3935 & 0.3993 & 0.3993 & 0.4052 & 0.4348 \\
    							\cline{2-8}
    							& \texttt{hypara2} & 0.3696 & 0.3930 & 0.3992 & 0.3991 & 0.4053 & 0.4260 \\
    							\hline
    					\end{tabular}}
    				\end{center}
    			\end{adjustwidth}
    			\caption{Summary statistics of the actual observed counts $y_{1:T}(\bm{s}_{1:m})$ and the kept $10^4$ posterior samples of the parameters $c,\kappa, \rho,U_{1:T}(\bm{s}_{1:m})$ for hyperparameter specifications \texttt{hypara1} and \texttt{hypara2} with true parameter values $(\kappa,\rho,c) = (0.4, 0.4, 1000)$ and undirected spatial locations from the first simulation group.}
    			\label{c1000complete}
    		}
    	\end{table}
    	
    	\begin{table}[h]
    		{
    			\begin{adjustwidth}{-0cm}{-0cm}
    				\begin{center}
    					\scalebox{0.95}{\begin{tabular}{|*{8}{c|}}
    							\hline
    							\multicolumn{2}{|c|}{} & Minimum & 1st Quantile & Median & Mean & 3rd Quantile & Maximum \\
    							\hline
    							\multicolumn{2}{|c|}{Actual $y_t(\bm{s}_i)$'s} & 5   & 4767 &   8529  &  9882 &  13464 &  48245 \\
    							\hline
    							\multirow{2}{*}{$U_t(\bm{s}_i)$'s} & \texttt{hypara1} & 0.63 & 4756.08 & 8533.04 & 9882.26 & 13479.89 & 49040.28  \\
    							\cline{2-8}
    							& \texttt{hypara2} & 0.82 & 4756.05 & 8533.17 & 9882.24 & 13479.89 & 49053.01\\
    							\hline
    							\multirow{2}{*}{$c$} & \texttt{hypara1} & 1886  &  1971  &  1990 &   1990  &  2009  &  2100\\
    							\cline{2-8}
    							& \texttt{hypara2} & 1904  &  1974  &  1994  &  1994  &  2013  &  2105\\
    							\hline
    							\multirow{2}{*}{$\kappa$} & \texttt{hypara1} &  0.3673 & 0.3999 & 0.4061 & 0.4061 & 0.4123 & 0.4407  \\
    							\cline{2-8}
    							& \texttt{hypara2} & 0.3702 & 0.4006 & 0.4072 & 0.4072 & 0.4138 & 0.4389\\
    							\hline
    							\multirow{2}{*}{$\rho$} & \texttt{hypara1} & 0.3694 & 0.3941 & 0.4000 & 0.4001 & 0.4062 & 0.4398\\
    							\cline{2-8}
    							& \texttt{hypara2} & 0.3634 & 0.3927 & 0.3989 & 0.3989 & 0.4051 & 0.4328\\
    							\hline
    					\end{tabular}}
    				\end{center}
    			\end{adjustwidth}
    			\caption{Summary statistics of the actual observed counts $y_{1:T}(\bm{s}_{1:m})$ and the kept $10^4$ posterior samples of the parameters $c,\kappa, \rho,U_{1:T}(\bm{s}_{1:m})$ for hyperparameter specifications \texttt{hypara1} and \texttt{hypara2} with true parameter values $(\kappa,\rho,c) = (0.4, 0.4, 2000)$ and undirected spatial locations from the first simulation group.}
    			\label{c2000complete}
    		}
    	\end{table}
    	
    	\begin{table}[h]
    		{
    			\begin{adjustwidth}{-0cm}{-0cm}
    				\begin{center}
    					\scalebox{0.95}{\begin{tabular}{|*{8}{c|}}
    							\hline
    							\multicolumn{2}{|c|}{} & Minimum & 1st Quantile & Median & Mean & 3rd Quantile & Maximum \\
    							\hline
    							\multicolumn{2}{|c|}{Actual $y_t(\bm{s}_i)$'s} & 11  & 11893 &  21324  & 24706 &  33732 & 121817 \\
    							\hline
    							\multirow{2}{*}{$U_t(\bm{s}_i)$'s} & \texttt{hypara1} & 3.52 & 11889.96 & 21319.67 & 24706.13 & 33733.67 & 123024.56  \\
    							\cline{2-8}
    							& \texttt{hypara2} & 2.17 & 11890.13 & 21319.76 & 24706.12 & 33733.53 & 123114.89 \\
    							\hline
    							\multirow{2}{*}{$c$} & \texttt{hypara1} & 4730  &  4933 &   4981  &  4983  &  5033  &  5258\\
    							\cline{2-8}
    							& \texttt{hypara2} & 4736   & 4926 &   4972  &  4973  &  5020 &   5244\\
    							\hline
    							\multirow{2}{*}{$\kappa$} & \texttt{hypara1} &  0.3788 & 0.4012 & 0.4074 & 0.4072 & 0.4133 & 0.4432 \\
    							\cline{2-8}
    							& \texttt{hypara2} & 0.3727 & 0.4003 & 0.4064 & 0.4064 & 0.4125 & 0.4380 \\
    							\hline
    							\multirow{2}{*}{$\rho$} & \texttt{hypara1} & 0.3594 & 0.3933 & 0.3988 & 0.3989 & 0.4047 & 0.4283\\
    							\cline{2-8}
    							& \texttt{hypara2} & 0.3672 & 0.3939 & 0.4000 & 0.3999 & 0.4058 & 0.4338  \\
    							\hline
    					\end{tabular}}
    				\end{center}
    			\end{adjustwidth}
    			\caption{Summary statistics of the actual observed counts $y_{1:T}(\bm{s}_{1:m})$ and the kept $10^4$ posterior samples of the parameters $c,\kappa, \rho,U_{1:T}(\bm{s}_{1:m})$ for hyperparameter specifications \texttt{hypara1} and \texttt{hypara2} with true parameter values $(\kappa,\rho,c) = (0.4, 0.4, 5000)$ and undirected spatial locations from the first simulation group.}
    			\label{c5000complete}
    		}
    	\end{table}
    	\clearpage   
    	\begin{table}[H]
    		{
    			\begin{adjustwidth}{-0cm}{-0cm}
    				\begin{center}
    					\scalebox{0.95}{\begin{tabular}{|*{8}{c|}}
    							\hline
    							\multicolumn{2}{|c|}{} & Minimum & 1st Quantile & Median & Mean & 3rd Quantile & Maximum \\
    							\hline
    							\multicolumn{2}{|c|}{Actual $y_t(\bm{s}_i)$'s} & 0.00  &  7.00  & 14.00 &  16.56  & 23.00 &  92.00   \\
    							\hline
    							\multirow{2}{*}{$U_t(\bm{s}_i)$'s} & \texttt{hypara3} & 0.000 &  7.261 & 14.010 & 16.557 & 23.091 & 122.156  \\
    							\cline{2-8}
    							& \texttt{hypara4} & 0.000  & 7.258 & 14.009 & 16.557 & 23.093 & 116.952 \\
    							\hline
    							\multirow{2}{*}{$c$} & \texttt{hypara3} & 4.725 & 5.008 & 5.065 &   5.067 & 5.124 & 5.448   \\
    							\cline{2-8}
    							& \texttt{hypara4} & 4.743  & 5.019  & 5.080 &  5.081 &  5.143 &  5.430 \\
    							\hline
    							\multirow{2}{*}{$\kappa$} & \texttt{hypara3} &  0.6682 & 0.6961 & 0.7010 & 0.7010 & 0.7061 & 0.7317    \\
    							\cline{2-8}
    							& \texttt{hypara4} & 0.6732 & 0.6951 & 0.7002 & 0.7003 & 0.7054 & 0.7312 \\
    							\hline
    					\end{tabular}}
    				\end{center}
    			\end{adjustwidth}
    			\caption{Summary statistics of the actual observed counts $y_{1:T}(\bm{s}_{1:m})$ and the kept $10^4$ posterior samples of the parameters $c,\kappa, U_{1:T}(\bm{s}_{1:m})$ for the two hyperparameter specifications (\texttt{hypara3} and \texttt{hypara4}) when the true parameter values are $(\kappa,c) = (0.7, 5)$ and the spatial locations are undirected from the second simulation group.}
    			\label{undirectedkappa0.7c5}
    		}
    	\end{table}
    	
    	\begin{table}[H]
    		{
    			\begin{adjustwidth}{-0cm}{-0cm}
    				\begin{center}
    					\scalebox{0.95}{\begin{tabular}{|*{8}{c|}}
    							\hline
    							\multicolumn{2}{|c|}{} & Minimum & 1st Quantile & Median & Mean & 3rd Quantile & Maximum \\
    							\hline
    							\multicolumn{2}{|c|}{Actual $y_t(\bm{s}_i)$'s} & 0.00  & 14.00 &  28.00  & 33.11  & 46.00 & 181.00 \\
    							\hline
    							\multirow{2}{*}{$U_t(\bm{s}_i)$'s} & \texttt{hypara3} & 0.00  & 14.50 &  27.98  & 33.11 &  46.12 & 222.57  \\
    							\cline{2-8}
    							& \texttt{hypara4} & 0.00  & 14.50  & 27.99  & 33.11  & 46.12 & 221.75 \\
    							\hline
    							\multirow{2}{*}{$c$} & \texttt{hypara3} & 9.455 & 10.051 & 10.162 & 10.164 & 10.275 & 10.819  \\
    							\cline{2-8}
    							& \texttt{hypara4} & 9.657 & 10.053 & 10.170 & 10.170 & 10.283 & 10.791\\
    							\hline
    							\multirow{2}{*}{$\kappa$} & \texttt{hypara3} &  0.6717 & 0.6941 & 0.6990 & 0.6990 & 0.7040 & 0.7281   \\
    							\cline{2-8}
    							& \texttt{hypara4} & 0.6720 & 0.6941 & 0.6989 & 0.6989 & 0.7038 & 0.7244  \\
    							\hline
    					\end{tabular}}
    				\end{center}
    			\end{adjustwidth}
    			\caption{Summary statistics of the actual observed counts $y_{1:T}(\bm{s}_{1:m})$ and the kept $10^4$ posterior samples of the parameters $c,\kappa, U_{1:T}(\bm{s}_{1:m})$ for the two hyperparameter specifications (\texttt{hypara3} and \texttt{hypara4}) when the true parameter values are $(\kappa,c) = (0.7, 10)$ and the spatial locations are undirected from the second simulation group.}
    			\label{undirectedkappa0.7c10}
    		}
    	\end{table}
    	
    	\begin{table}[h]
    		{
    			\begin{adjustwidth}{-0cm}{-0cm}
    				\begin{center}
    					\scalebox{0.95}{\begin{tabular}{|*{8}{c|}}
    							\hline
    							\multicolumn{2}{|c|}{} & Minimum & 1st Quantile & Median & Mean & 3rd Quantile & Maximum \\
    							\hline
    							\multicolumn{2}{|c|}{Actual $y_t(\bm{s}_i)$'s} & 0.0  &  21.0  &  42.0  &  49.8  &  70.0  & 287.0    \\
    							\hline
    							\multirow{2}{*}{$U_t(\bm{s}_i)$'s} & \texttt{hypara3} & 0.00 &  21.94 &  42.29  & 49.80 &  69.36 & 342.79  \\
    							\cline{2-8}
    							& \texttt{hypara4} & 0.00 &  21.94 &  42.29  & 49.80 &  69.36 & 341.63 \\
    							\hline
    							\multirow{2}{*}{$c$} & \texttt{hypara3} & 14.30  & 15.04  & 15.21 &  15.21 &  15.38 &  16.24  \\
    							\cline{2-8}
    							& \texttt{hypara4} & 14.25  & 15.04  & 15.21  & 15.20 &  15.37 &  16.22 \\
    							\hline
    							\multirow{2}{*}{$\kappa$} & \texttt{hypara3} &  0.6732 & 0.6961 & 0.7010 & 0.7010 & 0.7058 & 0.7284    \\
    							\cline{2-8}
    							& \texttt{hypara4} & 0.6737 & 0.6964 & 0.7012 & 0.7012 & 0.7060 & 0.7265  \\
    							\hline
    					\end{tabular}}
    				\end{center}
    			\end{adjustwidth}
    			\caption{Summary statistics of the actual observed counts $y_{1:T}(\bm{s}_{1:m})$ and the kept $10^4$ posterior samples of the parameters $c,\kappa, U_{1:T}(\bm{s}_{1:m})$ for the two hyperparameter specifications (\texttt{hypara3} and \texttt{hypara4}) when the true parameter values are $(\kappa,c) = (0.7, 15)$ and the spatial locations are undirected from the second simulation group.}
    			\label{undirectedkappa0.7c15}
    		}
    	\end{table}
    	
    	\begin{table}[h]
    		{
    			\begin{adjustwidth}{-0cm}{-0cm}
    				\begin{center}
    					\scalebox{0.95}{\begin{tabular}{|*{8}{c|}}
    							\hline
    							\multicolumn{2}{|c|}{} & Minimum & 1st Quantile & Median & Mean & 3rd Quantile & Maximum \\
    							\hline
    							\multicolumn{2}{|c|}{Actual $y_t(\bm{s}_i)$'s} & 0.00  & 28.00   &56.00 &  66.32 &  92.00 & 381.00    \\
    							\hline
    							\multirow{2}{*}{$U_t(\bm{s}_i)$'s} & \texttt{hypara3} & 0.00 &  29.06  & 56.08  & 66.32 &  92.60 & 440.38   \\
    							\cline{2-8}
    							& \texttt{hypara4} & 0.00 &  29.06 &  56.08  & 66.32 &  92.61 & 444.60   \\
    							\hline
    							\multirow{2}{*}{$c$} & \texttt{hypara3} & 19.27  & 20.13   & 20.35 &  20.35  & 20.57  & 21.66  \\
    							\cline{2-8}
    							& \texttt{hypara4} & 19.13  & 20.16 &  20.38 &  20.38 &  20.60 &  21.73 \\
    							\hline
    							\multirow{2}{*}{$\kappa$} & \texttt{hypara3} & 0.6719 & 0.6947 & 0.6994 & 0.6995 & 0.7044 & 0.7262    \\
    							\cline{2-8}
    							& \texttt{hypara4} & 0.6714 & 0.6942 & 0.6991 & 0.6991 & 0.7040 & 0.7283  \\
    							\hline
    					\end{tabular}}
    				\end{center}
    			\end{adjustwidth}
    			\caption{Summary statistics of the actual observed counts $y_{1:T}(\bm{s}_{1:m})$ and the kept $10^4$ posterior samples of the parameters $c,\kappa, U_{1:T}(\bm{s}_{1:m})$ for the two hyperparameter specifications (\texttt{hypara3} and \texttt{hypara4}) when the true parameter values are $(\kappa,c) = (0.7, 20)$ and the spatial locations are undirected from the second simulation group.}
    			\label{undirectedkappa0.7c20}
    		}
    	\end{table}
    	
    	\begin{table}[h]
    		{
    			\begin{adjustwidth}{-0cm}{-0cm}
    				\begin{center}
    					\scalebox{0.95}{\begin{tabular}{|*{8}{c|}}
    							\hline
    							\multicolumn{2}{|c|}{} & Minimum & 1st Quantile & Median & Mean & 3rd Quantile & Maximum \\
    							\hline
    							\multicolumn{2}{|c|}{Actual $y_t(\bm{s}_i)$'s} & 0.0  &  72.0 &  140.0  & 165.8 &  233.0  & 976.0  \\
    							\hline
    							\multirow{2}{*}{$U_t(\bm{s}_i)$'s} & \texttt{hypara3} & 0.00  & 73.22 & 140.37 & 165.75 & 232.27 & 1081.72 \\
    							\cline{2-8}
    							& \texttt{hypara4} & 0.00  & 73.22 & 140.37 & 165.75 & 232.27 & 1102.49  \\
    							\hline
    							\multirow{2}{*}{$c$} & \texttt{hypara3} &  48.10  & 50.23 &  50.79 &  50.80  & 51.33 &  53.55 \\
    							\cline{2-8}
    							& \texttt{hypara4} & 47.44 &  50.34  & 50.87  & 50.87  & 51.40 &  54.16\\
    							\hline
    							\multirow{2}{*}{$\kappa$} & \texttt{hypara3} & 0.6718 & 0.6952 & 0.7001 & 0.7000 & 0.7048 & 0.7295 \\
    							\cline{2-8}
    							& \texttt{hypara4} & 0.6732 & 0.6949 & 0.6998 & 0.6998 & 0.7046 & 0.7304 \\
    							\hline
    					\end{tabular}}
    				\end{center}
    			\end{adjustwidth}
    			\caption{Summary statistics of the actual observed counts $y_{1:T}(\bm{s}_{1:m})$ and the kept $10^4$ posterior samples of the parameters $c,\kappa, U_{1:T}(\bm{s}_{1:m})$ for the two hyperparameter specifications (\texttt{hypara3} and \texttt{hypara4}) when the true parameter values are $(\kappa,c) = (0.7, 50)$ and the spatial locations are undirected from the second simulation group.}
    			\label{undirectedkappa0.7c50}
    		}
    	\end{table}
    	
    	\begin{table}[h]
    		{
    			\begin{adjustwidth}{-0cm}{-0cm}
    				\begin{center}
    					\scalebox{0.95}{\begin{tabular}{|*{8}{c|}}
    							\hline
    							\multicolumn{2}{|c|}{} & Minimum & 1st Quantile & Median & Mean & 3rd Quantile & Maximum \\
    							\hline
    							\multicolumn{2}{|c|}{Actual $y_t(\bm{s}_i)$'s} & 0.0  & 145.0 &  279.0  & 331.8  & 463.0 & 1936.0\\
    							\hline
    							\multirow{2}{*}{$U_t(\bm{s}_i)$'s} & \texttt{hypara3} & 0.0 & 145.7 &  280.3 & 331.8 & 462.9 & 2094.2 \\
    							\cline{2-8}
    							& \texttt{hypara4} & 0.0 &  145.7 &  280.3  & 331.8 &  462.9 & 2109.4\\
    							\hline
    							\multirow{2}{*}{$c$} & \texttt{hypara3} & 96.44 & 100.96 & 101.99 & 102.02 & 103.06 & 107.52 \\
    							\cline{2-8}
    							& \texttt{hypara4} & 95.7 &  101.0  & 102.0  & 102.1  & 103.1  & 108.6  \\
    							\hline
    							\multirow{2}{*}{$\kappa$} & \texttt{hypara3} & 0.6717 & 0.6940 & 0.6989 & 0.6988 & 0.7037 & 0.7237  \\
    							\cline{2-8}
    							& \texttt{hypara4} & 0.6722 & 0.6938 & 0.6986 & 0.6986 & 0.7035 & 0.7292 \\
    							\hline
    					\end{tabular}}
    				\end{center}
    			\end{adjustwidth}
    			\caption{Summary statistics of the actual observed counts $y_{1:T}(\bm{s}_{1:m})$ and the kept $10^4$ posterior samples of the parameters $c,\kappa, U_{1:T}(\bm{s}_{1:m})$ for the two hyperparameter specifications (\texttt{hypara3} and \texttt{hypara4}) when the true parameter values are $(\kappa,c) = (0.7, 100)$ and the spatial locations are undirected from the second simulation group.}
    			\label{undirectedkappa0.7c100}
    		}
    	\end{table}
    	
    	\begin{table}[h]
    		{
    			\begin{adjustwidth}{-0cm}{-0cm}
    				\begin{center}
    					\scalebox{0.95}{\begin{tabular}{|*{8}{c|}}
    							\hline
    							\multicolumn{2}{|c|}{} & Minimum & 1st Quantile & Median & Mean & 3rd Quantile & Maximum \\
    							\hline
    							\multicolumn{2}{|c|}{Actual $y_t(\bm{s}_i)$'s} & 0  &   725 &   1401  &  1659 &   2327  &  9047  \\
    							\hline
    							\multirow{2}{*}{$U_t(\bm{s}_i)$'s} & \texttt{hypara3} & 0.0  & 726.5 & 1402.4 & 1658.8 & 2324.5 & 9380.4 \\
    							\cline{2-8}
    							& \texttt{hypara4} & 0.0  & 726.6 & 1402.3 & 1658.8 & 2324.5 & 9428.0 \\
    							\hline
    							\multirow{2}{*}{$c$} & \texttt{hypara3} & 481.8  & 504.6 &  509.8  & 509.9  & 515.1 &  539.0  \\
    							\cline{2-8}
    							& \texttt{hypara4} & 484.4  & 504.4  & 509.8 &  509.8  & 515.0  & 541.7 \\
    							\hline
    							\multirow{2}{*}{$\kappa$} & \texttt{hypara3} & 0.6730 & 0.6944 &  0.6992 & 0.6992 & 0.7040 & 0.7293  \\
    							\cline{2-8}
    							& \texttt{hypara4} & 0.6676 & 0.6943 & 0.6992 & 0.6992 & 0.7041 & 0.7245  \\
    							\hline
    					\end{tabular}}
    				\end{center}
    			\end{adjustwidth}
    			\caption{Summary statistics of the actual observed counts $y_{1:T}(\bm{s}_{1:m})$ and the kept $10^4$ posterior samples of the parameters $c,\kappa, U_{1:T}(\bm{s}_{1:m})$ for the two hyperparameter specifications (\texttt{hypara3} and \texttt{hypara4}) when the true parameter values are $(\kappa,c) = (0.7, 500)$ and the spatial locations are undirected from the second simulation group.}
    			\label{undirectedkappa0.7c500}
    		}
    	\end{table}
    	
    	\begin{table}[h]
    		{
    			\begin{adjustwidth}{-0cm}{-0cm}
    				\begin{center}
    					\scalebox{0.95}{\begin{tabular}{|*{8}{c|}}
    							\hline
    							\multicolumn{2}{|c|}{} & Minimum & 1st Quantile & Median & Mean & 3rd Quantile & Maximum \\
    							\hline
    							\multicolumn{2}{|c|}{Actual $y_t(\bm{s}_i)$'s} & 0  &  1451 &   2814 &   3318   & 4644  & 18528  \\
    							\hline
    							\multirow{2}{*}{$U_t(\bm{s}_i)$'s} & \texttt{hypara3} & 0 &   1451  &  2809  &  3318  &  4644  & 19012  \\
    							\cline{2-8}
    							& \texttt{hypara4} & 0  &  1451  &  2809  &  3318 &   4644 &  18996 \\
    							\hline
    							\multirow{2}{*}{$c$} & \texttt{hypara3} & 964.3 & 1009.6 & 1019.7 & 1019.9 & 1029.9 & 1081.7  \\
    							\cline{2-8}
    							& \texttt{hypara4} & 964.8 & 1009.2 & 1019.2 & 1019.4 & 1029.3 & 1078.8 \\
    							\hline
    							\multirow{2}{*}{$\kappa$} & \texttt{hypara3} & 0.6718 & 0.6943 & 0.6991 & 0.6990 & 0.7038 & 0.7289  \\
    							\cline{2-8}
    							& \texttt{hypara4} & 0.6735 & 0.6944 & 0.6992 & 0.6993 & 0.7041 & 0.7235\\
    							\hline
    					\end{tabular}}
    				\end{center}
    			\end{adjustwidth}
    			\caption{Summary statistics of the actual observed counts $y_{1:T}(\bm{s}_{1:m})$ and the kept $10^4$ posterior samples of the parameters $c,\kappa, U_{1:T}(\bm{s}_{1:m})$ for the two hyperparameter specifications (\texttt{hypara3} and \texttt{hypara4}) when the true parameter values are $(\kappa,c) = (0.7, 1000)$ and the spatial locations are undirected from the second simulation group.}
    			\label{undirectedkappa0.7c1000}
    		}
    	\end{table}
    	
    	\begin{table}[h]
    		{
    			\begin{adjustwidth}{-0cm}{-0cm}
    				\begin{center}
    					\scalebox{0.95}{\begin{tabular}{|*{8}{c|}}
    							\hline
    							\multicolumn{2}{|c|}{} & Minimum & 1st Quantile & Median & Mean & 3rd Quantile & Maximum \\
    							\hline
    							\multicolumn{2}{|c|}{Actual $y_t(\bm{s}_i)$'s} & 1  &  2909  &  5608  &  6636 &   9279  & 36824 \\
    							\hline
    							\multirow{2}{*}{$U_t(\bm{s}_i)$'s} & \texttt{hypara3} & 0.02 & 2906.28 & 5611.68 & 6635.58 & 9283.75  & 37839.95   \\
    							\cline{2-8}
    							& \texttt{hypara4} & 0.01 & 2906.44 & 5611.76 & 6635.57 & 9283.54 & 37524.77 \\
    							\hline
    							\multirow{2}{*}{$c$} & \texttt{hypara3} & 1914  &  2017  &  2036  &  2037  &  2057  &  2170  \\
    							\cline{2-8}
    							& \texttt{hypara4} & 1922  &  2017   & 2038  &  2038  &  2059 &   2136  \\
    							\hline
    							\multirow{2}{*}{$\kappa$} & \texttt{hypara3} & 0.6704 & 0.6946 & 0.6994 & 0.6993 & 0.7041 & 0.7252  \\
    							\cline{2-8}
    							& \texttt{hypara4} & 0.6731 & 0.6944 & 0.6993 & 0.6993 & 0.7041 & 0.7265  \\
    							\hline
    					\end{tabular}}
    				\end{center}
    			\end{adjustwidth}
    			\caption{Summary statistics of the actual observed counts $y_{1:T}(\bm{s}_{1:m})$ and the kept $10^4$ posterior samples of the parameters $c,\kappa, U_{1:T}(\bm{s}_{1:m})$ for the two hyperparameter specifications (\texttt{hypara3} and \texttt{hypara4}) when the true parameter values are $(\kappa,c) = (0.7, 2000)$ and the spatial locations are undirected from the second simulation group.}
    			\label{undirectedkappa0.7c2000}
    		}
    	\end{table}
    	
    	\begin{table}[h]
    		{
    			\begin{adjustwidth}{-0cm}{-0cm}
    				\begin{center}
    					\scalebox{0.95}{\begin{tabular}{|*{8}{c|}}
    							\hline
    							\multicolumn{2}{|c|}{} & Minimum & 1st Quantile & Median & Mean & 3rd Quantile & Maximum \\
    							\hline
    							\multicolumn{2}{|c|}{Actual $y_t(\bm{s}_i)$'s} & 5  &  7286  & 14010  & 16589 &  23207 &  92135   \\
    							\hline
    							\multirow{2}{*}{$U_t(\bm{s}_i)$'s} & \texttt{hypara3} & 0.68 & 7274.32 & 14023.42 & 16588.77 & 23197.09 & 93506.65  \\
    							\cline{2-8}
    							& \texttt{hypara4} & 0.81 & 7274.13 & 14023.49 & 16588.73 & 23197.07 & 93122.49 \\
    							\hline
    							\multirow{2}{*}{$c$} & \texttt{hypara3} & 4803  &  5035  &  5087  &  5087  &  5139  &  5353   \\
    							\cline{2-8}
    							& \texttt{hypara4} & 4836   & 5041 &   5092  &  5093  &  5146  &  5396 \\
    							\hline
    							\multirow{2}{*}{$\kappa$} & \texttt{hypara3} &  0.6716 & 0.6949 & 0.6997 & 0.6996 & 0.7044 & 0.7246  \\
    							\cline{2-8}
    							& \texttt{hypara4} & 0.6695 & 0.6944 & 0.6993 & 0.6993 & 0.7041 & 0.7259 \\
    							\hline
    					\end{tabular}}
    				\end{center}
    			\end{adjustwidth}
    			\caption{Summary statistics of the actual observed counts $y_{1:T}(\bm{s}_{1:m})$ and the kept $10^4$ posterior samples of the parameters $c,\kappa, U_{1:T}(\bm{s}_{1:m})$ for the two hyperparameter specifications (\texttt{hypara3} and \texttt{hypara4}) when the true parameter values are $(\kappa,c) = (0.7, 5000)$ and the spatial locations are undirected from the second simulation group.}
    			\label{undirectedkappa0.7c5000}
    		}
    	\end{table}
    	\clearpage  
    	
    	\begin{table}[H]
    		{
    			\begin{adjustwidth}{-0cm}{-0cm}
    				\begin{center}
    					\scalebox{0.95}{\begin{tabular}{|*{8}{c|}}
    							\hline
    							\multicolumn{2}{|c|}{} & Minimum & 1st Quantile & Median & Mean & 3rd Quantile & Maximum \\
    							\hline
    							\multicolumn{2}{|c|}{Actual $y_t(\bm{s}_i)$'s} & 0.000  & 2.000 &  6.000  & 7.829 & 11.000 & 63.000 \\
    							\hline
    							\multirow{2}{*}{$U_t(\bm{s}_i)$'s} & \texttt{hypara3} & 0.000  & 2.431 &  5.701  & 7.829 & 10.994 & 87.799  \\
    							\cline{2-8}
    							& \texttt{hypara4} & 0.000  & 2.430 &  5.700 &  7.829 & 10.993 & 89.944  \\
    							\hline
    							\multirow{2}{*}{$c$} & \texttt{hypara3} & 4.472  & 4.850 &  4.938  & 4.941 &  5.030 &  5.416  \\
    							\cline{2-8}
    							& \texttt{hypara4} & 4.511  & 4.874  & 4.961 &  4.964  & 5.049  & 5.569 \\
    							\hline
    							\multirow{2}{*}{$\kappa$} & \texttt{hypara3} & 0.3089 & 0.3604 & 0.3722 & 0.3722 & 0.3841 & 0.4362 \\
    							\cline{2-8}
    							& \texttt{hypara4} & 0.3047 & 0.3585 & 0.3698 & 0.3696 & 0.3811 & 0.4336  \\
    							\hline
    					\end{tabular}}
    				\end{center}
    			\end{adjustwidth}
    			\caption{Summary statistics of the actual observed counts $y_{1:T}(\bm{s}_{1:m})$ and the kept $10^4$ posterior samples of the parameters $c,\kappa, U_{1:T}(\bm{s}_{1:m})$ for the two hyperparameter specifications (\texttt{hypara3} and \texttt{hypara4}) when the true parameter values are $(\kappa,c) = (0.35, 5)$ and the spatial locations are undirected from the second simulation group.}
    			\label{undirectedkappa0.35c5}
    		}
    	\end{table}
    	
    	\begin{table}[H]
    		{
    			\begin{adjustwidth}{-0cm}{-0cm}
    				\begin{center}
    					\scalebox{0.95}{\begin{tabular}{|*{8}{c|}}
    							\hline
    							\multicolumn{2}{|c|}{} & Minimum & 1st Quantile & Median & Mean & 3rd Quantile & Maximum \\
    							\hline
    							\multicolumn{2}{|c|}{Actual $y_t(\bm{s}_i)$'s} & 0.00  &  4.00 &  11.00  & 15.59  & 22.00 & 135.00 \\
    							\hline
    							\multirow{2}{*}{$U_t(\bm{s}_i)$'s} & \texttt{hypara3} & 0.000 &  4.828 & 11.338 & 15.589 & 21.872 & 175.830   \\
    							\cline{2-8}
    							& \texttt{hypara4} & 0.000  & 4.826 & 11.338 & 15.589 & 21.873 & 170.927 \\
    							\hline
    							\multirow{2}{*}{$c$} & \texttt{hypara3} & 9.005 & 9.673 &  9.844  & 9.849 & 10.024 & 10.819   \\
    							\cline{2-8}
    							& \texttt{hypara4} & 9.168  & 9.725 &  9.890 &  9.901 & 10.062 & 11.089 \\
    							\hline
    							\multirow{2}{*}{$\kappa$} & \texttt{hypara3} & 0.3100 & 0.3598 & 0.3711 & 0.3711 & 0.3824 & 0.4283  \\
    							\cline{2-8}
    							& \texttt{hypara4} & 0.3005 & 0.3575 & 0.3686 & 0.3680 & 0.3792 & 0.4226 \\
    							\hline
    					\end{tabular}}
    				\end{center}
    			\end{adjustwidth}
    			\caption{Summary statistics of the actual observed counts $y_{1:T}(\bm{s}_{1:m})$ and the kept $10^4$ posterior samples of the parameters $c,\kappa, U_{1:T}(\bm{s}_{1:m})$ for the two hyperparameter specifications (\texttt{hypara3} and \texttt{hypara4}) when the true parameter values are $(\kappa,c) = (0.35, 10)$ and the spatial locations are undirected from the second simulation group.}
    			\label{undirectedkappa0.35c10}
    		}
    	\end{table}
    	
    	\begin{table}[h]
    		{
    			\begin{adjustwidth}{-0cm}{-0cm}
    				\begin{center}
    					\scalebox{0.95}{\begin{tabular}{|*{8}{c|}}
    							\hline
    							\multicolumn{2}{|c|}{} & Minimum & 1st Quantile & Median & Mean & 3rd Quantile & Maximum \\
    							\hline
    							\multicolumn{2}{|c|}{Actual $y_t(\bm{s}_i)$'s} & 0.00  &  7.00 &  17.00 &  23.43 &  33.00 & 184.00 \\
    							\hline
    							\multirow{2}{*}{$U_t(\bm{s}_i)$'s} & \texttt{hypara3} & 0.000 &  7.277 & 17.021 & 23.430 & 32.701 & 232.022   \\
    							\cline{2-8}
    							& \texttt{hypara4} & 0.000 &  7.276 & 17.020 & 23.432 & 32.705 & 233.137 \\
    							\hline
    							\multirow{2}{*}{$c$} & \texttt{hypara3} & 13.47 &  14.58 &  14.83 &  14.83  & 15.07  & 16.38   \\
    							\cline{2-8}
    							& \texttt{hypara4} & 13.55 &  14.61  & 14.86 &  14.87 &  15.12 &  16.22  \\
    							\hline
    							\multirow{2}{*}{$\kappa$} & \texttt{hypara3} & 0.3102 & 0.3594 & 0.3704 & 0.3699 & 0.3809 & 0.4235   \\
    							\cline{2-8}
    							& \texttt{hypara4} & 0.3078 & 0.3580 & 0.3688 & 0.3684 & 0.3794 & 0.4218\\
    							\hline
    					\end{tabular}}
    				\end{center}
    			\end{adjustwidth}
    			\caption{Summary statistics of the actual observed counts $y_{1:T}(\bm{s}_{1:m})$ and the kept $10^4$ posterior samples of the parameters $c,\kappa, U_{1:T}(\bm{s}_{1:m})$ for the two hyperparameter specifications (\texttt{hypara3} and \texttt{hypara4}) when the true parameter values are $(\kappa,c) = (0.35, 15)$ and the spatial locations are undirected from the second simulation group.}
    			\label{undirectedkappa0.35c15}
    		}
    	\end{table}
    	
    	\begin{table}[h]
    		{
    			\begin{adjustwidth}{-0cm}{-0cm}
    				\begin{center}
    					\scalebox{0.95}{\begin{tabular}{|*{8}{c|}}
    							\hline
    							\multicolumn{2}{|c|}{} & Minimum & 1st Quantile & Median & Mean & 3rd Quantile & Maximum \\
    							\hline
    							\multicolumn{2}{|c|}{Actual $y_t(\bm{s}_i)$'s} &  0.00  &  9.00 &  22.00  & 31.26  & 43.00 & 246.00  \\
    							\hline
    							\multirow{2}{*}{$U_t(\bm{s}_i)$'s} & \texttt{hypara3} & 0.000  & 9.672 & 22.692 & 31.255 & 43.562 & 304.002  \\
    							\cline{2-8}
    							& \texttt{hypara4} & 0.000 &  9.669 & 22.692 & 31.255 & 43.562 &301.296 \\
    							\hline
    							\multirow{2}{*}{$c$} & \texttt{hypara3} & 17.81 &  19.52 &  19.86 &  19.86  & 20.20 &  21.58\\
    							\cline{2-8}
    							& \texttt{hypara4} & 18.30  & 19.53 &  19.85  & 19.87 &  20.19 &  21.81  \\
    							\hline
    							\multirow{2}{*}{$\kappa$} & \texttt{hypara3} & 0.3155 & 0.3568 & 0.3678 & 0.3677 & 0.3786 & 0.4340   \\
    							\cline{2-8}
    							& \texttt{hypara4} & 0.3059 & 0.3570 & 0.3679 & 0.3674 & 0.3782 & 0.4232 \\
    							\hline
    					\end{tabular}}
    				\end{center}
    			\end{adjustwidth}
    			\caption{Summary statistics of the actual observed counts $y_{1:T}(\bm{s}_{1:m})$ and the kept $10^4$ posterior samples of the parameters $c,\kappa, U_{1:T}(\bm{s}_{1:m})$ for the two hyperparameter specifications (\texttt{hypara3} and \texttt{hypara4}) when the true parameter values are $(\kappa,c) = (0.35, 20)$ and the spatial locations are undirected from the second simulation group.}
    			\label{undirectedkappa0.35c20}
    		}
    	\end{table}
    	
    	\begin{table}[h]
    		{
    			\begin{adjustwidth}{-0cm}{-0cm}
    				\begin{center}
    					\scalebox{0.95}{\begin{tabular}{|*{8}{c|}}
    							\hline
    							\multicolumn{2}{|c|}{} & Minimum & 1st Quantile & Median & Mean & 3rd Quantile & Maximum \\
    							\hline
    							\multicolumn{2}{|c|}{Actual $y_t(\bm{s}_i)$'s} & 0.00  & 24.00 &  57.00 &  78.22 & 109.00 & 604.00  \\
    							\hline
    							\multirow{2}{*}{$U_t(\bm{s}_i)$'s} & \texttt{hypara3} & 0.00 &  24.16 &  56.96 &  78.22 & 109.14 & 692.00    \\
    							\cline{2-8}
    							& \texttt{hypara4} & 0.00  & 24.16  & 56.96  & 78.22 & 109.14 & 687.09 \\
    							\hline
    							\multirow{2}{*}{$c$} & \texttt{hypara3} & 44.94  & 48.83  & 49.55 &  49.58 &  50.33  & 54.34    \\
    							\cline{2-8}
    							& \texttt{hypara4} & 45.71 &  49.00  & 49.77 &  49.79 &  50.53 &  54.50\\
    							\hline
    							\multirow{2}{*}{$\kappa$} & \texttt{hypara3} & 0.3062 & 0.3594 & 0.3694 & 0.3690 & 0.3793 & 0.4321    \\
    							\cline{2-8}
    							& \texttt{hypara4} & 0.3035 & 0.3564 & 0.3665 & 0.3662 & 0.3764 & 0.4174 \\
    							\hline
    					\end{tabular}}
    				\end{center}
    			\end{adjustwidth}
    			\caption{Summary statistics of the actual observed counts $y_{1:T}(\bm{s}_{1:m})$ and the kept $10^4$ posterior samples of the parameters $c,\kappa, U_{1:T}(\bm{s}_{1:m})$ for the two hyperparameter specifications (\texttt{hypara3} and \texttt{hypara4}) when the true parameter values are $(\kappa,c) = (0.35, 50)$ and the spatial locations are undirected from the second simulation group.}
    			\label{undirectedkappa0.35c50}
    		}
    	\end{table}
    	
    	\begin{table}[h]
    		{
    			\begin{adjustwidth}{-0cm}{-0cm}
    				\begin{center}
    					\scalebox{0.95}{\begin{tabular}{|*{8}{c|}}
    							\hline
    							\multicolumn{2}{|c|}{} & Minimum & 1st Quantile & Median & Mean & 3rd Quantile & Maximum \\
    							\hline
    							\multicolumn{2}{|c|}{Actual $y_t(\bm{s}_i)$'s} &  0.0  &  47.0 &  114.0  & 156.4 &  218.0 & 1176.0 \\
    							\hline
    							\multirow{2}{*}{$U_t(\bm{s}_i)$'s} & \texttt{hypara3} & 0.00 &  47.94 & 114.01 & 156.36 & 217.71 & 1314.87     \\
    							\cline{2-8}
    							& \texttt{hypara4} & 0.00  & 47.94 & 114.02 & 156.36 & 217.72 & 1302.91\\
    							\hline
    							\multirow{2}{*}{$c$} & \texttt{hypara3} & 91.18 & 97.44 &  99.02  & 99.07 & 100.65 & 107.99    \\
    							\cline{2-8}
    							& \texttt{hypara4} & 92.23  & 97.75  & 99.33 &  99.38 & 100.86 & 109.62\\
    							\hline
    							\multirow{2}{*}{$\kappa$} & \texttt{hypara3} & 0.3092 & 0.3593 & 0.3700 & 0.3696 & 0.3802 & 0.4243     \\
    							\cline{2-8}
    							& \texttt{hypara4} & 0.3070 & 0.3578 & 0.3680 & 0.3675 & 0.3779 & 0.4164\\
    							\hline
    					\end{tabular}}
    				\end{center}
    			\end{adjustwidth}
    			\caption{Summary statistics of the actual observed counts $y_{1:T}(\bm{s}_{1:m})$ and the kept $10^4$ posterior samples of the parameters $c,\kappa, U_{1:T}(\bm{s}_{1:m})$ for the two hyperparameter specifications (\texttt{hypara3} and \texttt{hypara4}) when the true parameter values are $(\kappa,c) = (0.35, 100)$ and the spatial locations are undirected from the second simulation group.}
    			\label{undirectedkappa0.35c100}
    		}
    	\end{table}
    	
    	\begin{table}[h]
    		{
    			\begin{adjustwidth}{-0cm}{-0cm}
    				\begin{center}
    					\scalebox{0.95}{\begin{tabular}{|*{8}{c|}}
    							\hline
    							\multicolumn{2}{|c|}{} & Minimum & 1st Quantile & Median & Mean & 3rd Quantile & Maximum \\
    							\hline
    							\multicolumn{2}{|c|}{Actual $y_t(\bm{s}_i)$'s} & 0.0 &  241.0  & 569.5 &  781.4 & 1091.0 & 5862.0\\
    							\hline
    							\multirow{2}{*}{$U_t(\bm{s}_i)$'s} & \texttt{hypara3} & 0.0  & 240.8  & 569.2  & 781.4 & 1089.8 & 6162.7      \\
    							\cline{2-8}
    							& \texttt{hypara4} & 0.0  & 240.8 &  569.2 &  781.4 & 1089.7 & 6130.4  \\
    							\hline
    							\multirow{2}{*}{$c$} & \texttt{hypara3} & 456.6 &  488.3 &  495.7 &  496.2 &  503.9  & 548.7    \\
    							\cline{2-8}
    							& \texttt{hypara4} & 457.1  & 488.1 &  495.6 &  495.6  & 502.9  & 538.8  \\
    							\hline
    							\multirow{2}{*}{$\kappa$} & \texttt{hypara3} & 0.3070 & 0.3580 & 0.3683 & 0.3679 & 0.3782 & 0.4190      \\
    							\cline{2-8}
    							& \texttt{hypara4} & 0.3148 & 0.3589 & 0.3687 & 0.3685 & 0.3781 & 0.4273   \\
    							\hline
    					\end{tabular}}
    				\end{center}
    			\end{adjustwidth}
    			\caption{Summary statistics of the actual observed counts $y_{1:T}(\bm{s}_{1:m})$ and the kept $10^4$ posterior samples of the parameters $c,\kappa, U_{1:T}(\bm{s}_{1:m})$ for the two hyperparameter specifications (\texttt{hypara3} and \texttt{hypara4}) when the true parameter values are $(\kappa,c) = (0.35, 500)$ and the spatial locations are undirected from the second simulation group.}
    			\label{undirectedkappa0.35c500}
    		}
    	\end{table}
    	
    	\begin{table}[h]
    		{
    			\begin{adjustwidth}{-0cm}{-0cm}
    				\begin{center}
    					\scalebox{0.95}{\begin{tabular}{|*{8}{c|}}
    							\hline
    							\multicolumn{2}{|c|}{} & Minimum & 1st Quantile & Median & Mean & 3rd Quantile & Maximum \\
    							\hline
    							\multicolumn{2}{|c|}{Actual $y_t(\bm{s}_i)$'s} &  0   &  480  &  1137 &   1562 &   2178  & 11699 \\
    							\hline
    							\multirow{2}{*}{$U_t(\bm{s}_i)$'s} & \texttt{hypara3} & 0.0 &  479.9 & 1138.1 & 1562.5 & 2177.0 & 12164.6      \\
    							\cline{2-8}
    							& \texttt{hypara4} & 0.0  & 479.9 & 1138.1 & 1562.5  & 2177.0 &12094.0 \\
    							\hline
    							\multirow{2}{*}{$c$} & \texttt{hypara3} & 912.7  & 975.2  & 991.0 &  991.2 & 1006.4 & 1111.3    \\
    							\cline{2-8}
    							& \texttt{hypara4} & 908.1 &  975.8 &  990.4  & 990.9 & 1005.4 & 1097.2 \\
    							\hline
    							\multirow{2}{*}{$\kappa$} & \texttt{hypara3} & 0.2967 & 0.3583 & 0.3687 & 0.3685 & 0.3790 & 0.4226      \\
    							\cline{2-8}
    							& \texttt{hypara4} & 0.3046 & 0.3594 & 0.3690 & 0.3688 & 0.3785 & 0.4316 \\
    							\hline
    					\end{tabular}}
    				\end{center}
    			\end{adjustwidth}
    			\caption{Summary statistics of the actual observed counts $y_{1:T}(\bm{s}_{1:m})$ and the kept $10^4$ posterior samples of the parameters $c,\kappa, U_{1:T}(\bm{s}_{1:m})$ for the two hyperparameter specifications (\texttt{hypara3} and \texttt{hypara4}) when the true parameter values are $(\kappa,c) = (0.35, 1000)$ and the spatial locations are undirected from the second simulation group.}
    			\label{undirectedkappa0.35c1000}
    		}
    	\end{table}
    	
    	\begin{table}[h]
    		{
    			\begin{adjustwidth}{-0cm}{-0cm}
    				\begin{center}
    					\scalebox{0.95}{\begin{tabular}{|*{8}{c|}}
    							\hline
    							\multicolumn{2}{|c|}{} & Minimum & 1st Quantile & Median & Mean & 3rd Quantile & Maximum \\
    							\hline
    							\multicolumn{2}{|c|}{Actual $y_t(\bm{s}_i)$'s} & 0  &   964 &   2279  &  3126   & 4364  & 23165  \\
    							\hline
    							\multirow{2}{*}{$U_t(\bm{s}_i)$'s} & \texttt{hypara3} & 0.0 &  961.2 & 2277.8 & 3125.6 & 4361.2 & 23735.7      \\
    							\cline{2-8}
    							& \texttt{hypara4} & 0.0  & 961.2 & 2277.7 & 3125.6 & 4361.1 & 23786.4\\
    							\hline
    							\multirow{2}{*}{$c$} & \texttt{hypara3} & 1823  &  1955 &   1985  &  1987  &  2018  &  2173    \\
    							\cline{2-8}
    							& \texttt{hypara4} & 1828  &  1957  &  1987  &  1987 &   2017   & 2142 \\
    							\hline
    							\multirow{2}{*}{$\kappa$} & \texttt{hypara3} & 0.3044 & 0.3571 & 0.3678 & 0.3672 & 0.3778 & 0.4208      \\
    							\cline{2-8}
    							& \texttt{hypara4} &  0.3181 & 0.3574 & 0.3671 & 0.3672 & 0.3768 & 0.4270 \\
    							\hline
    					\end{tabular}}
    				\end{center}
    			\end{adjustwidth}
    			\caption{Summary statistics of the actual observed counts $y_{1:T}(\bm{s}_{1:m})$ and the kept $10^4$ posterior samples of the parameters $c,\kappa, U_{1:T}(\bm{s}_{1:m})$ for the two hyperparameter specifications (\texttt{hypara3} and \texttt{hypara4}) when the true parameter values are $(\kappa,c) = (0.35, 2000)$ and the spatial locations are undirected from the second simulation group.}
    			\label{undirectedkappa0.35c2000}
    		}
    	\end{table}
    	
    	\begin{table}[h]
    		{
    			\begin{adjustwidth}{-0cm}{-0cm}
    				\begin{center}
    					\scalebox{0.95}{\begin{tabular}{|*{8}{c|}}
    							\hline
    							\multicolumn{2}{|c|}{} & Minimum & 1st Quantile & Median & Mean & 3rd Quantile & Maximum \\
    							\hline
    							\multicolumn{2}{|c|}{Actual $y_t(\bm{s}_i)$'s} & 0  &  2407 &   5694  &  7814  & 10878  & 58052  \\
    							\hline
    							\multirow{2}{*}{$U_t(\bm{s}_i)$'s} & \texttt{hypara3} & 0  &  2403  &  5695  &  7814 &  10881  & 59029     \\
    							\cline{2-8}
    							& \texttt{hypara4} & 0  &  2403  &  5695  &  7814  & 10881 &  58955 \\
    							\hline
    							\multirow{2}{*}{$c$} & \texttt{hypara3} & 4575  &  4884 &   4964  &  4967  &  5045  &  5428   \\
    							\cline{2-8}
    							& \texttt{hypara4} & 4533  &  4879  &  4958  &  4959  &  5036 &   5496\\
    							\hline
    							\multirow{2}{*}{$\kappa$} & \texttt{hypara3} & 0.3060 & 0.3574 & 0.3677 & 0.3674 & 0.3778 & 0.4168      \\
    							\cline{2-8}
    							& \texttt{hypara4} & 0.2977 & 0.3577 & 0.3681 & 0.3682 & 0.3787 & 0.4184\\
    							\hline
    					\end{tabular}}
    				\end{center}
    			\end{adjustwidth}
    			\caption{Summary statistics of the actual observed counts $y_{1:T}(\bm{s}_{1:m})$ and the kept $10^4$ posterior samples of the parameters $c,\kappa, U_{1:T}(\bm{s}_{1:m})$ for the two hyperparameter specifications (\texttt{hypara3} and \texttt{hypara4}) when the true parameter values are $(\kappa,c) = (0.35, 5000)$ and the spatial locations are undirected from the second simulation group.}
    			\label{undirectedkappa0.35c5000}
    		}
    	\end{table}
    	\clearpage    
    	
    	\begin{table}[H]
    		{
    			\begin{adjustwidth}{-0cm}{-0cm}
    				\begin{center}
    					\scalebox{0.95}{\begin{tabular}{|*{8}{c|}}
    							\hline
    							\multicolumn{2}{|c|}{} & Minimum & 1st Quantile & Median & Mean & 3rd Quantile & Maximum \\
    							\hline
    							\multicolumn{2}{|c|}{Actual $y_t(\bm{s}_i)$'s} & 0.00  &  5.00  & 11.00 &  13.73 &  19.00  & 74.00   \\
    							\hline
    							\multirow{2}{*}{$U_t(\bm{s}_i)$'s} & \texttt{hypara3} & 0.000 &  5.488 & 11.206 & 13.734 & 19.178 & 101.142      \\
    							\cline{2-8}
    							& \texttt{hypara4} & 0.000 &  5.486 & 11.206 & 13.734 & 19.180 & 100.537 \\
    							\hline
    							\multirow{2}{*}{$c$} & \texttt{hypara3} & 4.514  & 4.750  & 4.809 &  4.808  & 4.865 &  5.178    \\
    							\cline{2-8}
    							& \texttt{hypara4} & 4.535  & 4.769 &  4.826  & 4.828 &  4.885 &  5.173 \\
    							\hline
    							\multirow{2}{*}{$\kappa$} & \texttt{hypara3} & 0.6746 & 0.7113 & 0.7176 & 0.7175 & 0.7237 & 0.7505      \\
    							\cline{2-8}
    							& \texttt{hypara4} &  0.6754 & 0.7102 & 0.7165 & 0.7164 & 0.7226 & 0.7490 \\
    							\hline
    					\end{tabular}}
    				\end{center}
    			\end{adjustwidth}
    			\caption{Summary statistics of the actual observed counts $y_{1:T}(\bm{s}_{1:m})$ and the kept $10^4$ posterior samples of the parameters $c,\kappa, U_{1:T}(\bm{s}_{1:m})$ for the two hyperparameter specifications (\texttt{hypara3} and \texttt{hypara4}) when the true parameter values are $(\kappa,c) = (0.7, 5)$ and the spatial locations are directed from the third simulation group.}
    			\label{directedkappa0.7c5}
    		}
    	\end{table}
    	
    	\begin{table}[H]
    		{
    			\begin{adjustwidth}{-0cm}{-0cm}
    				\begin{center}
    					\scalebox{0.95}{\begin{tabular}{|*{8}{c|}}
    							\hline
    							\multicolumn{2}{|c|}{} & Minimum & 1st Quantile & Median & Mean & 3rd Quantile & Maximum \\
    							\hline
    							\multicolumn{2}{|c|}{Actual $y_t(\bm{s}_i)$'s} & 0.00  & 11.00 &  22.00 &  27.51 &  39.00 & 166.00 \\
    							\hline
    							\multirow{2}{*}{$U_t(\bm{s}_i)$'s} & \texttt{hypara3} & 0.00 &  10.95 &  22.37  & 27.51  & 38.53 & 208.85      \\
    							\cline{2-8}
    							& \texttt{hypara4} & 0.00  & 10.95  & 22.37 &  27.51 &  38.53 & 212.36 \\
    							\hline
    							\multirow{2}{*}{$c$} & \texttt{hypara3} & 9.175  & 9.616  & 9.729  & 9.725  & 9.833 & 10.395   \\
    							\cline{2-8}
    							& \texttt{hypara4} & 9.164 &  9.638  & 9.744 &  9.747 &  9.856 & 10.338 \\
    							\hline
    							\multirow{2}{*}{$\kappa$} & \texttt{hypara3} & 0.6868 & 0.7117 & 0.7176 & 0.7175 & 0.7234 & 0.7511      \\
    							\cline{2-8}
    							& \texttt{hypara4} & 0.6846 & 0.7111 & 0.7171 & 0.7171 & 0.7230 & 0.7488\\
    							\hline
    					\end{tabular}}
    				\end{center}
    			\end{adjustwidth}
    			\caption{Summary statistics of the actual observed counts $y_{1:T}(\bm{s}_{1:m})$ and the kept $10^4$ posterior samples of the parameters $c,\kappa, U_{1:T}(\bm{s}_{1:m})$ for the two hyperparameter specifications (\texttt{hypara3} and \texttt{hypara4}) when the true parameter values are $(\kappa,c) = (0.7, 10)$ and the spatial locations are directed from the third simulation group.}
    			\label{directedkappa0.7c10}
    		}
    	\end{table}
    	
    	\begin{table}[h]
    		{
    			\begin{adjustwidth}{-0cm}{-0cm}
    				\begin{center}
    					\scalebox{0.95}{\begin{tabular}{|*{8}{c|}}
    							\hline
    							\multicolumn{2}{|c|}{} & Minimum & 1st Quantile & Median & Mean & 3rd Quantile & Maximum \\
    							\hline
    							\multicolumn{2}{|c|}{Actual $y_t(\bm{s}_i)$'s} & 0.00 &  16.00 &  34.00  & 41.21  & 58.00 & 240.00  \\
    							\hline
    							\multirow{2}{*}{$U_t(\bm{s}_i)$'s} & \texttt{hypara3} & 0.00  & 16.30  & 33.70 &  41.21 &  57.78 & 292.69       \\
    							\cline{2-8}
    							& \texttt{hypara4} &  0.00 &  16.30 &  33.70  & 41.22 &  57.78 & 293.47 \\
    							\hline
    							\multirow{2}{*}{$c$} & \texttt{hypara3} & 13.81  & 14.58 &  14.74 &  14.74 &  14.91  & 15.75  \\
    							\cline{2-8}
    							& \texttt{hypara4} & 13.95 &  14.59  & 14.75 &  14.75 &  14.91 &  15.71 \\
    							\hline
    							\multirow{2}{*}{$\kappa$} & \texttt{hypara3} & 0.6768 & 0.7044 & 0.7104 & 0.7103 & 0.7162 & 0.7430     \\
    							\cline{2-8}
    							& \texttt{hypara4} &  0.6730 & 0.7040 & 0.7100 & 0.7099 & 0.7159 & 0.7409\\
    							\hline
    					\end{tabular}}
    				\end{center}
    			\end{adjustwidth}
    			\caption{Summary statistics of the actual observed counts $y_{1:T}(\bm{s}_{1:m})$ and the kept $10^4$ posterior samples of the parameters $c,\kappa, U_{1:T}(\bm{s}_{1:m})$ for the two hyperparameter specifications (\texttt{hypara3} and \texttt{hypara4}) when the true parameter values are $(\kappa,c) = (0.7, 15)$ and the spatial locations are directed from the third simulation group.}
    			\label{directedkappa0.7c15}
    		}
    	\end{table}
    	
    	\begin{table}[h]
    		{
    			\begin{adjustwidth}{-0cm}{-0cm}
    				\begin{center}
    					\scalebox{0.95}{\begin{tabular}{|*{8}{c|}}
    							\hline
    							\multicolumn{2}{|c|}{} & Minimum & 1st Quantile & Median & Mean & 3rd Quantile & Maximum \\
    							\hline
    							\multicolumn{2}{|c|}{Actual $y_t(\bm{s}_i)$'s} & 0.00 &  21.00 &  44.00  & 54.96  & 77.00 & 342.00 \\
    							\hline
    							\multirow{2}{*}{$U_t(\bm{s}_i)$'s} & \texttt{hypara3} & 0.00  & 21.82 &  44.82  & 54.96 &  76.92 & 413.15  \\
    							\cline{2-8}
    							& \texttt{hypara4} &  0.00 &  21.82 &  44.82 &  54.96 &  76.93 & 414.19\\
    							\hline
    							\multirow{2}{*}{$c$} & \texttt{hypara3} & 18.38  & 19.27 &  19.48  & 19.49 &  19.70  & 20.71   \\
    							\cline{2-8}
    							& \texttt{hypara4} & 18.22  & 19.29 &  19.50 &  19.50 &  19.72  & 20.60\\
    							\hline
    							\multirow{2}{*}{$\kappa$} & \texttt{hypara3} & 0.6856 & 0.7090 & 0.7150 & 0.7149 & 0.7207 & 0.7489  \\
    							\cline{2-8}
    							& \texttt{hypara4} &  0.6829 & 0.7089 & 0.7146 & 0.7147 & 0.7206 & 0.7488 \\
    							\hline
    					\end{tabular}}
    				\end{center}
    			\end{adjustwidth}
    			\caption{Summary statistics of the actual observed counts $y_{1:T}(\bm{s}_{1:m})$ and the kept $10^4$ posterior samples of the parameters $c,\kappa, U_{1:T}(\bm{s}_{1:m})$ for the two hyperparameter specifications (\texttt{hypara3} and \texttt{hypara4}) when the true parameter values are $(\kappa,c) = (0.7, 20)$ and the spatial locations are directed from the third simulation group.}
    			\label{directedkappa0.7c20}
    		}
    	\end{table}
    	
    	\begin{table}[h]
    		{
    			\begin{adjustwidth}{-0cm}{-0cm}
    				\begin{center}
    					\scalebox{0.95}{\begin{tabular}{|*{8}{c|}}
    							\hline
    							\multicolumn{2}{|c|}{} & Minimum & 1st Quantile & Median & Mean & 3rd Quantile & Maximum \\
    							\hline
    							\multicolumn{2}{|c|}{Actual $y_t(\bm{s}_i)$'s} & 0.0  &  54.0  & 112.0 &  137.5  & 192.0  & 822.0 \\
    							\hline
    							\multirow{2}{*}{$U_t(\bm{s}_i)$'s} & \texttt{hypara3} & 0.00 &  54.77 & 112.14 & 137.52 & 192.49 & 920.54      \\
    							\cline{2-8}
    							& \texttt{hypara4} & 0.00 &  54.76 & 112.13 & 137.52 & 192.49 & 918.93 \\
    							\hline
    							\multirow{2}{*}{$c$} & \texttt{hypara3} & 45.74  & 48.16  & 48.72 &  48.72 &  49.26 &  51.92  \\
    							\cline{2-8}
    							& \texttt{hypara4} & 45.70 &  48.13  & 48.63 &  48.64 &  49.17  & 51.68\\
    							\hline
    							\multirow{2}{*}{$\kappa$} & \texttt{hypara3} & 0.6830 & 0.7089 & 0.7147 & 0.7147 & 0.7205 & 0.7479     \\
    							\cline{2-8}
    							& \texttt{hypara4} & 0.6840 & 0.7093 & 0.7151 & 0.7152 & 0.7209 & 0.7528 \\
    							\hline
    					\end{tabular}}
    				\end{center}
    			\end{adjustwidth}
    			\caption{Summary statistics of the actual observed counts $y_{1:T}(\bm{s}_{1:m})$ and the kept $10^4$ posterior samples of the parameters $c,\kappa, U_{1:T}(\bm{s}_{1:m})$ for the two hyperparameter specifications (\texttt{hypara3} and \texttt{hypara4}) when the true parameter values are $(\kappa,c) = (0.7, 50)$ and the spatial locations are directed from the third simulation group.}
    			\label{directedkappa0.7c50}
    		}
    	\end{table}
    	
    	\begin{table}[h]
    		{
    			\begin{adjustwidth}{-0cm}{-0cm}
    				\begin{center}
    					\scalebox{0.95}{\begin{tabular}{|*{8}{c|}}
    							\hline
    							\multicolumn{2}{|c|}{} & Minimum & 1st Quantile & Median & Mean & 3rd Quantile & Maximum \\
    							\hline
    							\multicolumn{2}{|c|}{Actual $y_t(\bm{s}_i)$'s} & 0.0  & 108.0 &  225.0  & 274.9  & 385.0 & 1591.0 \\
    							\hline
    							\multirow{2}{*}{$U_t(\bm{s}_i)$'s} & \texttt{hypara3} & 0.0 &  108.8  & 225.3  & 274.9  & 384.7 & 1764.8      \\
    							\cline{2-8}
    							& \texttt{hypara4} & 0.0 &  108.8  & 225.3  & 274.9  & 384.7 & 1746.9 \\
    							\hline
    							\multirow{2}{*}{$c$} & \texttt{hypara3} & 92.21  & 96.38 &  97.42  & 97.48 &  98.55 & 105.09   \\
    							\cline{2-8}
    							& \texttt{hypara4} & 91.77  & 96.43 &  97.55 &  97.54 &  98.60 & 102.89 \\
    							\hline
    							\multirow{2}{*}{$\kappa$} & \texttt{hypara3} & 0.6791 & 0.7084 & 0.7143 & 0.7142 & 0.7201 & 0.7466     \\
    							\cline{2-8}
    							& \texttt{hypara4} &  0.6825 & 0.7079 & 0.7139 & 0.7137 & 0.7195 & 0.7464 \\
    							\hline
    					\end{tabular}}
    				\end{center}
    			\end{adjustwidth}
    			\caption{Summary statistics of the actual observed counts $y_{1:T}(\bm{s}_{1:m})$ and the kept $10^4$ posterior samples of the parameters $c,\kappa, U_{1:T}(\bm{s}_{1:m})$ for the two hyperparameter specifications (\texttt{hypara3} and \texttt{hypara4}) when the true parameter values are $(\kappa,c) = (0.7, 100)$ and the spatial locations are directed from the third simulation group.}
    			\label{directedkappa0.7c100}
    		}
    	\end{table}
    	
    	\begin{table}[h]
    		{
    			\begin{adjustwidth}{-0cm}{-0cm}
    				\begin{center}
    					\scalebox{0.95}{\begin{tabular}{|*{8}{c|}}
    							\hline
    							\multicolumn{2}{|c|}{} & Minimum & 1st Quantile & Median & Mean & 3rd Quantile & Maximum \\
    							\hline
    							\multicolumn{2}{|c|}{Actual $y_t(\bm{s}_i)$'s} &  0   &  550  &  1122  &  1375  &  1921  &  8447  \\
    							\hline
    							\multirow{2}{*}{$U_t(\bm{s}_i)$'s} & \texttt{hypara3} & 0.0 &  549.3 & 1122.8 & 1374.8 & 1922.7 & 8856.2       \\
    							\cline{2-8}
    							& \texttt{hypara4} & 0.0 &  549.4 & 1122.8 & 1374.8 & 1922.7 & 8787.0 \\
    							\hline
    							\multirow{2}{*}{$c$} & \texttt{hypara3} & 460.3  & 483.0 &  488.0 &  488.1  & 493.2 &  514.5   \\
    							\cline{2-8}
    							& \texttt{hypara4} & 463.1  & 483.3 &  488.4 &  488.6  & 493.9 &  519.5 \\
    							\hline
    							\multirow{2}{*}{$\kappa$} & \texttt{hypara3} & 0.6815 & 0.7084 & 0.7142 & 0.7141 & 0.7200 & 0.7504     \\
    							\cline{2-8}
    							& \texttt{hypara4} & 0.6815 & 0.7083 & 0.7139 & 0.7138 & 0.7195 & 0.7452 \\
    							\hline
    					\end{tabular}}
    				\end{center}
    			\end{adjustwidth}
    			\caption{Summary statistics of the actual observed counts $y_{1:T}(\bm{s}_{1:m})$ and the kept $10^4$ posterior samples of the parameters $c,\kappa, U_{1:T}(\bm{s}_{1:m})$ for the two hyperparameter specifications (\texttt{hypara3} and \texttt{hypara4}) when the true parameter values are $(\kappa,c) = (0.7, 500)$ and the spatial locations are directed from the third simulation group.}
    			\label{directedkappa0.7c500}
    		}
    	\end{table}
    	
    	\begin{table}[h]
    		{
    			\begin{adjustwidth}{-0cm}{-0cm}
    				\begin{center}
    					\scalebox{0.95}{\begin{tabular}{|*{8}{c|}}
    							\hline
    							\multicolumn{2}{|c|}{} & Minimum & 1st Quantile & Median & Mean & 3rd Quantile & Maximum \\
    							\hline
    							\multicolumn{2}{|c|}{Actual $y_t(\bm{s}_i)$'s} & 0  &  1095  &  2252   & 2750  &  3850  & 16684  \\
    							\hline
    							\multirow{2}{*}{$U_t(\bm{s}_i)$'s} & \texttt{hypara3} & 0 &   1096  &  2249  &  2750  &  3843 &  17186       \\
    							\cline{2-8}
    							& \texttt{hypara4} & 0  &  1096 &   2249  &  2750  &  3843 &  17235 \\
    							\hline
    							\multirow{2}{*}{$c$} & \texttt{hypara3} & 923.3  & 967.5 &  977.7  & 978.1 &  988.6 & 1036.0   \\
    							\cline{2-8}
    							& \texttt{hypara4} & 919.6  & 967.1  & 977.7  & 978.0  & 988.8 & 1033.4  \\
    							\hline
    							\multirow{2}{*}{$\kappa$} & \texttt{hypara3} & 0.6802 & 0.7080 & 0.7139 & 0.7137 & 0.7196 & 0.7459    \\
    							\cline{2-8}
    							& \texttt{hypara4} &  0.6827 & 0.7079 & 0.7137 & 0.7136 & 0.7195 & 0.7443 \\
    							\hline
    					\end{tabular}}
    				\end{center}
    			\end{adjustwidth}
    			\caption{Summary statistics of the actual observed counts $y_{1:T}(\bm{s}_{1:m})$ and the kept $10^4$ posterior samples of the parameters $c,\kappa, U_{1:T}(\bm{s}_{1:m})$ for the two hyperparameter specifications (\texttt{hypara3} and \texttt{hypara4}) when the true parameter values are $(\kappa,c) = (0.7, 1000)$ and the spatial locations are directed from the third simulation group.}
    			\label{directedkappa0.7c1000}
    		}
    	\end{table}
    	
    	\begin{table}[h]
    		{
    			\begin{adjustwidth}{-0cm}{-0cm}
    				\begin{center}
    					\scalebox{0.95}{\begin{tabular}{|*{8}{c|}}
    							\hline
    							\multicolumn{2}{|c|}{} & Minimum & 1st Quantile & Median & Mean & 3rd Quantile & Maximum \\
    							\hline
    							\multicolumn{2}{|c|}{Actual $y_t(\bm{s}_i)$'s} & 0   & 2199 &   4496 &   5499  &  7683  & 33144  \\
    							\hline
    							\multirow{2}{*}{$U_t(\bm{s}_i)$'s} & \texttt{hypara3} & 0 &   2195  &  4495  &  5499  &  7683 &  33899        \\
    							\cline{2-8}
    							& \texttt{hypara4} & 0  &  2195  &  4495  &  5499  &  7683 &  33818 \\
    							\hline
    							\multirow{2}{*}{$c$} & \texttt{hypara3} & 1848   & 1933 &   1954  &  1954  &  1976  &  2082    \\
    							\cline{2-8}
    							& \texttt{hypara4} &  1851  &  1934  &  1955  &  1955  &  1976  &  2060 \\
    							\hline
    							\multirow{2}{*}{$\kappa$} & \texttt{hypara3} & 0.6829 & 0.7083 & 0.7140 & 0.7141 & 0.7200 & 0.7442    \\
    							\cline{2-8}
    							& \texttt{hypara4} & 0.6810 & 0.7080 & 0.7139 & 0.7139 & 0.7198 & 0.7461 \\
    							\hline
    					\end{tabular}}
    				\end{center}
    			\end{adjustwidth}
    			\caption{Summary statistics of the actual observed counts $y_{1:T}(\bm{s}_{1:m})$ and the kept $10^4$ posterior samples of the parameters $c,\kappa, U_{1:T}(\bm{s}_{1:m})$ for the two hyperparameter specifications (\texttt{hypara3} and \texttt{hypara4}) when the true parameter values are $(\kappa,c) = (0.7, 2000)$ and the spatial locations are directed from the third simulation group.}
    			\label{directedkappa0.7c2000}
    		}
    	\end{table}
    	
    	\begin{table}[h]
    		{
    			\begin{adjustwidth}{-0cm}{-0cm}
    				\begin{center}
    					\scalebox{0.95}{\begin{tabular}{|*{8}{c|}}
    							\hline
    							\multicolumn{2}{|c|}{} & Minimum & 1st Quantile & Median & Mean & 3rd Quantile & Maximum \\
    							\hline
    							\multicolumn{2}{|c|}{Actual $y_t(\bm{s}_i)$'s} & 0  &  5498 &  11290 &  13749 &  19200 &  83046   \\
    							\hline
    							\multirow{2}{*}{$U_t(\bm{s}_i)$'s} & \texttt{hypara3} & 0 &   5491 &  11273 &  13749 &  19191 &  84107        \\
    							\cline{2-8}
    							& \texttt{hypara4} & 0  &  5491  & 11273  & 13749  & 19191 &  84140 \\
    							\hline
    							\multirow{2}{*}{$c$} & \texttt{hypara3} & 4627  &  4832 &   4884  &  4884   & 4935  &  5145     \\
    							\cline{2-8}
    							& \texttt{hypara4} & 4576  &  4836  &  4889  &  4889  &  4941 &   5167 \\
    							\hline
    							\multirow{2}{*}{$\kappa$} & \texttt{hypara3} & 0.6812 & 0.7086 & 0.7144 & 0.7143 & 0.7201 & 0.7430    \\
    							\cline{2-8}
    							& \texttt{hypara4} & 0.6780 & 0.7080 & 0.7138 & 0.7138 & 0.7195 & 0.7448 \\
    							\hline
    					\end{tabular}}
    				\end{center}
    			\end{adjustwidth}
    			\caption{Summary statistics of the actual observed counts $y_{1:T}(\bm{s}_{1:m})$ and the kept $10^4$ posterior samples of the parameters $c,\kappa, U_{1:T}(\bm{s}_{1:m})$ for the two hyperparameter specifications (\texttt{hypara3} and \texttt{hypara4}) when the true parameter values are $(\kappa,c) = (0.7, 5000)$ and the spatial locations are directed from the third simulation group.}
    			\label{directedkappa0.7c5000}
    		}
    	\end{table}
    	\clearpage   
    	
    	\begin{table}[H]
    		{
    			\begin{adjustwidth}{-0cm}{-0cm}
    				\begin{center}
    					\scalebox{0.95}{\begin{tabular}{|*{8}{c|}}
    							\hline
    							\multicolumn{2}{|c|}{} & Minimum & 1st Quantile & Median & Mean & 3rd Quantile & Maximum \\
    							\hline
    							\multicolumn{2}{|c|}{Actual $y_t(\bm{s}_i)$'s} & 0.000 &  2.000 &  5.000  & 7.748 & 11.000 & 76.000  \\
    							\hline
    							\multirow{2}{*}{$U_t(\bm{s}_i)$'s} & \texttt{hypara3} & 0.000 &  2.411 &  5.634  & 7.749 & 10.829 & 99.658        \\
    							\cline{2-8}
    							& \texttt{hypara4} & 0.000  & 2.409  & 5.632 &  7.748 & 10.827 & 98.062 \\
    							\hline
    							\multirow{2}{*}{$c$} & \texttt{hypara3} & 4.677 &  5.119 &  5.213  & 5.217  & 5.315 &  5.753     \\
    							\cline{2-8}
    							& \texttt{hypara4} & 4.734  & 5.148  & 5.256 &  5.257 &  5.362  & 5.923\\
    							\hline
    							\multirow{2}{*}{$\kappa$} & \texttt{hypara3} & 0.2690 & 0.3267 & 0.3405 & 0.3399 & 0.3530 & 0.4124  \\
    							\cline{2-8}
    							& \texttt{hypara4} &  0.2582 & 0.3204 & 0.3345 & 0.3343 & 0.3488 & 0.4051 \\
    							\hline
    					\end{tabular}}
    				\end{center}
    			\end{adjustwidth}
    			\caption{Summary statistics of the actual observed counts $y_{1:T}(\bm{s}_{1:m})$ and the kept $10^4$ posterior samples of the parameters $c,\kappa, U_{1:T}(\bm{s}_{1:m})$ for the two hyperparameter specifications (\texttt{hypara3} and \texttt{hypara4}) when the true parameter values are $(\kappa,c) = (0.35, 5)$ and the spatial locations are directed from the third simulation group.}
    			\label{directedkappa0.35c5}
    		}
    	\end{table}
    	
    	\begin{table}[H]
    		{
    			\begin{adjustwidth}{-0cm}{-0cm}
    				\begin{center}
    					\scalebox{0.95}{\begin{tabular}{|*{8}{c|}}
    							\hline
    							\multicolumn{2}{|c|}{} & Minimum & 1st Quantile & Median & Mean & 3rd Quantile & Maximum \\
    							\hline
    							\multicolumn{2}{|c|}{Actual $y_t(\bm{s}_i)$'s} & 0.00  &  4.00  & 11.00  & 15.49  & 22.00 & 143.00  \\
    							\hline
    							\multirow{2}{*}{$U_t(\bm{s}_i)$'s} & \texttt{hypara3} & 0.000 &  4.813 & 11.194 & 15.485 & 21.729 & 178.855       \\
    							\cline{2-8}
    							& \texttt{hypara4} &  0.000  & 4.811 & 11.193 & 15.486 
    							& 21.730 & 180.091 \\
    							\hline
    							\multirow{2}{*}{$c$} & \texttt{hypara3} & 9.513 & 10.231 & 10.410 & 10.414 & 10.589 & 11.351   \\
    							\cline{2-8}
    							& \texttt{hypara4} & 9.613 & 10.273 & 10.452 & 10.459 & 10.637 & 11.510\\
    							\hline
    							\multirow{2}{*}{$\kappa$} & \texttt{hypara3} & 0.2764 & 0.3293 & 0.3408 & 0.3408 & 0.3529 & 0.4064 \\
    							\cline{2-8}
    							& \texttt{hypara4} & 0.2733 & 0.3256 & 0.3384 & 0.3379 & 0.3508 & 0.3991 \\
    							\hline
    					\end{tabular}}
    				\end{center}
    			\end{adjustwidth}
    			\caption{Summary statistics of the actual observed counts $y_{1:T}(\bm{s}_{1:m})$ and the kept $10^4$ posterior samples of the parameters $c,\kappa, U_{1:T}(\bm{s}_{1:m})$ for the two hyperparameter specifications (\texttt{hypara3} and \texttt{hypara4}) when the true parameter values are $(\kappa,c) = (0.35, 10)$ and the spatial locations are directed from the third simulation group.}
    			\label{directedkappa0.35c10}
    		}
    	\end{table}
    	
    	\begin{table}[h]
    		{
    			\begin{adjustwidth}{-0cm}{-0cm}
    				\begin{center}
    					\scalebox{0.95}{\begin{tabular}{|*{8}{c|}}
    							\hline
    							\multicolumn{2}{|c|}{} & Minimum & 1st Quantile & Median & Mean & 3rd Quantile & Maximum \\
    							\hline
    							\multicolumn{2}{|c|}{Actual $y_t(\bm{s}_i)$'s} & 0.00  &  7.00 &  17.00 &  23.27  & 32.00 & 209.00  \\
    							\hline
    							\multirow{2}{*}{$U_t(\bm{s}_i)$'s} & \texttt{hypara3} & 0.000 &  7.196 & 16.840 & 23.275 & 32.375 & 259.715        \\
    							\cline{2-8}
    							& \texttt{hypara4} & 0.000 &  7.195 & 16.839 & 23.274 & 32.371 & 256.028\\
    							\hline
    							\multirow{2}{*}{$c$} & \texttt{hypara3} & 14.47 &  15.48 &  15.74 &  15.75 &  16.02 &  17.90   \\
    							\cline{2-8}
    							& \texttt{hypara4} & 14.05 &  15.51 &  15.77 &  15.78 &  16.04 &  17.52 \\
    							\hline
    							\multirow{2}{*}{$\kappa$} & \texttt{hypara3} & 0.2547 & 0.3253 & 0.3375 & 0.3371 & 0.3493 & 0.3912  \\
    							\cline{2-8}
    							& \texttt{hypara4} & 0.2573 & 0.3237 & 0.3362 & 0.3357 & 0.3482 & 0.4076 \\
    							\hline
    					\end{tabular}}
    				\end{center}
    			\end{adjustwidth}
    			\caption{Summary statistics of the actual observed counts $y_{1:T}(\bm{s}_{1:m})$ and the kept $10^4$ posterior samples of the parameters $c,\kappa, U_{1:T}(\bm{s}_{1:m})$ for the two hyperparameter specifications (\texttt{hypara3} and \texttt{hypara4}) when the true parameter values are $(\kappa,c) = (0.35, 15)$ and the spatial locations are directed from the third simulation group.}
    			\label{directedkappa0.35c15}
    		}
    	\end{table}
    	
    	\begin{table}[h]
    		{
    			\begin{adjustwidth}{-0cm}{-0cm}
    				\begin{center}
    					\scalebox{0.95}{\begin{tabular}{|*{8}{c|}}
    							\hline
    							\multicolumn{2}{|c|}{} & Minimum & 1st Quantile & Median & Mean & 3rd Quantile & Maximum \\
    							\hline
    							\multicolumn{2}{|c|}{Actual $y_t(\bm{s}_i)$'s} & 0.0   &  9.0 &   22.5  &  31.1  &  43.0  & 246.0 \\
    							\hline
    							\multirow{2}{*}{$U_t(\bm{s}_i)$'s} & \texttt{hypara3} & 0.000 &  9.726 & 22.584 & 31.098 & 43.238 & 312.579        \\
    							\cline{2-8}
    							& \texttt{hypara4} &  0.000 &  9.726 & 22.583 & 31.098 & 43.239 & 297.549\\
    							\hline
    							\multirow{2}{*}{$c$} & \texttt{hypara3} & 18.81  & 20.48 &  20.83 &  20.85 &  21.20  & 23.11   \\
    							\cline{2-8}
    							& \texttt{hypara4} & 19.08  & 20.59 &  20.96  & 20.97 &  21.34  & 22.92\\
    							\hline
    							\multirow{2}{*}{$\kappa$} & \texttt{hypara3} & 0.2686 & 0.3313 & 0.3438 & 0.3431 & 0.3555 & 0.4096 \\
    							\cline{2-8}
    							& \texttt{hypara4} & 0.2740 & 0.3264  & 0.3392 & 0.3389 & 0.3516 & 0.4011 \\
    							\hline
    					\end{tabular}}
    				\end{center}
    			\end{adjustwidth}
    			\caption{Summary statistics of the actual observed counts $y_{1:T}(\bm{s}_{1:m})$ and the kept $10^4$ posterior samples of the parameters $c,\kappa, U_{1:T}(\bm{s}_{1:m})$ for the two hyperparameter specifications (\texttt{hypara3} and \texttt{hypara4}) when the true parameter values are $(\kappa,c) = (0.35, 20)$ and the spatial locations are directed from the third simulation group.}
    			\label{directedkappa0.35c20}
    		}
    	\end{table}
    	
    	\begin{table}[h]
    		{
    			\begin{adjustwidth}{-0cm}{-0cm}
    				\begin{center}
    					\scalebox{0.95}{\begin{tabular}{|*{8}{c|}}
    							\hline
    							\multicolumn{2}{|c|}{} & Minimum & 1st Quantile & Median & Mean & 3rd Quantile & Maximum \\
    							\hline
    							\multicolumn{2}{|c|}{Actual $y_t(\bm{s}_i)$'s} & 0.00  & 24.00 &  56.00 &  77.69 & 108.25 & 643.00\\
    							\hline
    							\multirow{2}{*}{$U_t(\bm{s}_i)$'s} & \texttt{hypara3} & 0.00  & 24.27 &  56.22 &  77.69 & 108.40 & 734.59        \\
    							\cline{2-8}
    							& \texttt{hypara4} & 0.00 &  24.27  & 56.22 &  77.69 & 108.39 & 757.83  \\
    							\hline
    							\multirow{2}{*}{$c$} & \texttt{hypara3} & 47.11 &  51.31 &  52.16 &  52.20  & 53.03  & 58.14   \\
    							\cline{2-8}
    							& \texttt{hypara4} & 47.59  & 51.33  & 52.21 &  52.25  & 53.14  & 57.18\\
    							\hline
    							\multirow{2}{*}{$\kappa$} & \texttt{hypara3} & 0.2698 & 0.3305 & 0.3419 & 0.3416 & 0.3531 & 0.4052 \\
    							\cline{2-8}
    							& \texttt{hypara4} & 0.2731 & 0.3288 & 0.3410 & 0.3407 & 0.3529 & 0.4023 \\
    							\hline
    					\end{tabular}}
    				\end{center}
    			\end{adjustwidth}
    			\caption{Summary statistics of the actual observed counts $y_{1:T}(\bm{s}_{1:m})$ and the kept $10^4$ posterior samples of the parameters $c,\kappa, U_{1:T}(\bm{s}_{1:m})$ for the two hyperparameter specifications (\texttt{hypara3} and \texttt{hypara4}) when the true parameter values are $(\kappa,c) = (0.35, 50)$ and the spatial locations are directed from the third simulation group.}
    			\label{directedkappa0.35c50}
    		}
    	\end{table}
    	
    	\begin{table}[h]
    		{
    			\begin{adjustwidth}{-0cm}{-0cm}
    				\begin{center}
    					\scalebox{0.95}{\begin{tabular}{|*{8}{c|}}
    							\hline
    							\multicolumn{2}{|c|}{} & Minimum & 1st Quantile & Median & Mean & 3rd Quantile & Maximum \\
    							\hline
    							\multicolumn{2}{|c|}{Actual $y_t(\bm{s}_i)$'s} & 0    &  48 &    112  &   155   &  215  &  1290 \\
    							\hline
    							\multirow{2}{*}{$U_t(\bm{s}_i)$'s} & \texttt{hypara3} & 0.00 &  48.71 & 112.17 & 154.99 & 215.58 & 1434.44         \\
    							\cline{2-8}
    							& \texttt{hypara4} & 0.00 &  48.71 & 112.17 & 154.98 & 215.58 &1421.74  \\
    							\hline
    							\multirow{2}{*}{$c$} & \texttt{hypara3} & 95.59 & 101.59 & 103.24 & 103.31 & 105.02 & 112.11   \\
    							\cline{2-8}
    							& \texttt{hypara4} & 95.25 & 101.47 & 103.14 & 103.26 & 104.97 & 112.23\\
    							\hline
    							\multirow{2}{*}{$\kappa$} & \texttt{hypara3} & 0.2892 & 0.3352 & 0.3471 & 0.3468 & 0.3585 & 0.4063 \\
    							\cline{2-8}
    							& \texttt{hypara4} & 0.2826 & 0.3352 & 0.3476 & 0.3468 & 0.3585 & 0.4077 \\
    							\hline
    					\end{tabular}}
    				\end{center}
    			\end{adjustwidth}
    			\caption{Summary statistics of the actual observed counts $y_{1:T}(\bm{s}_{1:m})$ and the kept $10^4$ posterior samples of the parameters $c,\kappa, U_{1:T}(\bm{s}_{1:m})$ for the two hyperparameter specifications (\texttt{hypara3} and \texttt{hypara4}) when the true parameter values are $(\kappa,c) = (0.35, 100)$ and the spatial locations are directed from the third simulation group.}
    			\label{directedkappa0.35c100}
    		}
    	\end{table}
    	
    	\begin{table}[h]
    		{
    			\begin{adjustwidth}{-0cm}{-0cm}
    				\begin{center}
    					\scalebox{0.95}{\begin{tabular}{|*{8}{c|}}
    							\hline
    							\multicolumn{2}{|c|}{} & Minimum & 1st Quantile & Median & Mean & 3rd Quantile & Maximum \\
    							\hline
    							\multicolumn{2}{|c|}{Actual $y_t(\bm{s}_i)$'s} & 0.0 &  239.0 &  562.0  & 775.9 & 1078.2 & 6582.0 \\
    							\hline
    							\multirow{2}{*}{$U_t(\bm{s}_i)$'s} & \texttt{hypara3} & 0.0 &  240.5 &  561.0 &  775.9 & 1079.1 & 6936.0         \\
    							\cline{2-8}
    							& \texttt{hypara4} & 0.0 &  240.5 &  561.0 &  775.9 & 1079.1 & 6850.5  \\
    							\hline
    							\multirow{2}{*}{$c$} & \texttt{hypara3} & 480.3  & 512.1 &  520.2  & 520.4 &  528.4 &  565.2   \\
    							\cline{2-8}
    							& \texttt{hypara4} & 478.0  & 511.7 &  520.5 &  520.6 &  529.2  & 572.9 \\
    							\hline
    							\multirow{2}{*}{$\kappa$} & \texttt{hypara3} & 0.2761 & 0.3317 & 0.3425 & 0.3426 & 0.3538 & 0.4001 \\
    							\cline{2-8}
    							& \texttt{hypara4} & 0.2785 & 0.3309 & 0.3427 & 0.3425 & 0.3543 & 0.3981 \\
    							\hline
    					\end{tabular}}
    				\end{center}
    			\end{adjustwidth}
    			\caption{Summary statistics of the actual observed counts $y_{1:T}(\bm{s}_{1:m})$ and the kept $10^4$ posterior samples of the parameters $c,\kappa, U_{1:T}(\bm{s}_{1:m})$ for the two hyperparameter specifications (\texttt{hypara3} and \texttt{hypara4}) when the true parameter values are $(\kappa,c) = (0.35, 500)$ and the spatial locations are directed from the third simulation group.}
    			\label{directedkappa0.35c500}
    		}
    	\end{table}
    	
    	\begin{table}[h]
    		{
    			\begin{adjustwidth}{-0cm}{-0cm}
    				\begin{center}
    					\scalebox{0.95}{\begin{tabular}{|*{8}{c|}}
    							\hline
    							\multicolumn{2}{|c|}{} & Minimum & 1st Quantile & Median & Mean & 3rd Quantile & Maximum \\
    							\hline
    							\multicolumn{2}{|c|}{Actual $y_t(\bm{s}_i)$'s} & 0   &  483  &  1120  &  1552  &  2155  & 13105  \\
    							\hline
    							\multirow{2}{*}{$U_t(\bm{s}_i)$'s} & \texttt{hypara3} & 0.0 &  482.4 & 1120.6 & 1551.5 & 2153.7 & 13526.8        \\
    							\cline{2-8}
    							& \texttt{hypara4} & 0.0  & 482.4 & 1120.6 & 1551.5 & 2153.6 & 13522.2 \\
    							\hline
    							\multirow{2}{*}{$c$} & \texttt{hypara3} & 950.2 & 1024.3 & 1041.2 & 1042.3 & 1060.0 & 1136.8   \\
    							\cline{2-8}
    							& \texttt{hypara4} & 951.4 & 1025.5 & 1043.2 & 1043.9 & 1061.9 & 1147.9\\
    							\hline
    							\multirow{2}{*}{$\kappa$} & \texttt{hypara3} & 0.2836 & 0.3299 & 0.3423 & 0.3417 & 0.3538 & 0.4044 \\
    							\cline{2-8}
    							& \texttt{hypara4} & 0.2788 & 0.3283 & 0.3407 & 0.3404 & 0.3527 & 0.4006\\
    							\hline
    					\end{tabular}}
    				\end{center}
    			\end{adjustwidth}
    			\caption{Summary statistics of the actual observed counts $y_{1:T}(\bm{s}_{1:m})$ and the kept $10^4$ posterior samples of the parameters $c,\kappa, U_{1:T}(\bm{s}_{1:m})$ for the two hyperparameter specifications (\texttt{hypara3} and \texttt{hypara4}) when the true parameter values are $(\kappa,c) = (0.35, 1000)$ and the spatial locations are directed from the third simulation group.}
    			\label{directedkappa0.35c1000}
    		}
    	\end{table}
    	
    	\begin{table}[h]
    		{
    			\begin{adjustwidth}{-0cm}{-0cm}
    				\begin{center}
    					\scalebox{0.95}{\begin{tabular}{|*{8}{c|}}
    							\hline
    							\multicolumn{2}{|c|}{} & Minimum & 1st Quantile & Median & Mean & 3rd Quantile & Maximum \\
    							\hline
    							\multicolumn{2}{|c|}{Actual $y_t(\bm{s}_i)$'s} & 0  &   964 &   2236 &   3103 &   4302 &  26189   \\
    							\hline
    							\multirow{2}{*}{$U_t(\bm{s}_i)$'s} & \texttt{hypara3} & 0.0 &  964.3 & 2237.2 & 3103.5 & 4305.6 & 26774.4         \\
    							\cline{2-8}
    							& \texttt{hypara4} & 0.0  & 964.4 & 2237.2 & 3103.4 & 4305.5 & 26744.6  \\
    							\hline
    							\multirow{2}{*}{$c$} & \texttt{hypara3} & 1887  &  2054 &   2088 &   2089  &  2124  &  2310    \\
    							\cline{2-8}
    							& \texttt{hypara4} & 1904  &  2050  &  2084  &  2086   & 2120  &  2309 \\
    							\hline
    							\multirow{2}{*}{$\kappa$} & \texttt{hypara3} & 0.2706 & 0.3283 & 0.3402 & 0.3399 & 0.3521 & 0.4062  \\
    							\cline{2-8}
    							& \texttt{hypara4} & 0.2638 & 0.3296 & 0.3414 & 0.3412 & 0.3527 & 0.4028 \\
    							\hline
    					\end{tabular}}
    				\end{center}
    			\end{adjustwidth}
    			\caption{Summary statistics of the actual observed counts $y_{1:T}(\bm{s}_{1:m})$ and the kept $10^4$ posterior samples of the parameters $c,\kappa, U_{1:T}(\bm{s}_{1:m})$ for hyperparameter specifications \texttt{hypara3} and \texttt{hypara4} with true parameter values $(\kappa,c) = (0.35, 2000)$ and directed spatial locations from the third simulation group.}
    			\label{directedkappa0.35c2000}
    		}
    	\end{table}
    	
    	\begin{table}[h]
    		{
    			\begin{adjustwidth}{-0cm}{-0cm}
    				\begin{center}
    					\scalebox{0.95}{\begin{tabular}{|*{8}{c|}}
    							\hline
    							\multicolumn{2}{|c|}{} & Minimum & 1st Quantile & Median & Mean & 3rd Quantile & Maximum \\
    							\hline
    							\multicolumn{2}{|c|}{Actual $y_t(\bm{s}_i)$'s} &  0   & 2417 &    5602  &  7758 &  10756  & 65568  \\
    							\hline
    							\multirow{2}{*}{$U_t(\bm{s}_i)$'s} & \texttt{hypara3} & 0  &  2412 &   5604  &  7758  & 10757 &  66557          \\
    							\cline{2-8}
    							& \texttt{hypara4} & 0  &  2412  &  5604  &  7758  & 10757 &  66657 \\
    							\hline
    							\multirow{2}{*}{$c$} & \texttt{hypara3} & 4776   & 5125   & 5213  &  5215 &   5301  &  5720  \\
    							\cline{2-8}
    							& \texttt{hypara4} &  4710  &  5130 &   5214  &  5218  &  5302  &  5710  \\
    							\hline
    							\multirow{2}{*}{$\kappa$} & \texttt{hypara3} & 0.2712 & 0.3294 & 0.3412 & 0.3410 & 0.3536 & 0.4028  \\
    							\cline{2-8}
    							& \texttt{hypara4} & 0.2734 & 0.3294 & 0.3408 & 0.3406 & 0.3524 & 0.3999  \\
    							\hline
    					\end{tabular}}
    				\end{center}
    			\end{adjustwidth}
    			\caption{Summary statistics of the actual observed counts $y_{1:T}(\bm{s}_{1:m})$ and the kept $10^4$ posterior samples of the parameters $c,\kappa, U_{1:T}(\bm{s}_{1:m})$ for hyperparameter specifications \texttt{hypara3} and \texttt{hypara4} with true parameter values $(\kappa,c) = (0.35, 5000)$ and directed spatial locations from the third simulation group.}
    			\label{directedkappa0.35c5000}
    		}
    	\end{table}
    	\clearpage

    	\begin{table}[H]
    		{
    			\begin{adjustwidth}{-0cm}{-0cm}
    				\begin{center}
    					\scalebox{0.92}{\begin{tabular}{|*{7}{c|}}
    							\hline
    							& Minimum & 1st Quantile & Median & Mean & 3rd Quantile & Maximum \\
    							\hline
    							$y_t(\bm{s}_i)$'s &  0.0  &  32.0  &  73.0  &  97.6 &  138.0 &  877.0 \\
    							\hline
    							$U_t(\bm{s}_i)$'s &  0.00  & 32.14  & 73.47 &  97.60 & 137.74 & 975.15 \\
    							\hline
    							$\big|y_t(\bm{s}_i)-U_t(\bm{s}_i)\big|$'s & $1.910\times 10^{-5}$ & 0.4108 & 0.7546 & 0.8700 & 0.9926 & 13.30 \\
    							\hline
    							$c$ &48.80  & 49.76 &  50.00 &  50.00 &  50.25  & 51.45 \\
    							\hline
    							$\kappa$  & 0.3799 & 0.3964 & 0.3996 & 0.3996 & 0.4027 & 0.4139\\
    							\hline
    							$\rho$ & 0.08676 & 0.09662 & 0.09869 & 0.09880 & 0.10089 &0.11156 \\
    							\hline
    					\end{tabular}}
    				\end{center}
    			\end{adjustwidth}
    			\caption{Data simulated from our model with $T=50$ and $m=1600$: summary statistics of the actual observed counts $y_{1:T}(\bm{s}_{1:m})$, absolute error $\big|y_t(\bm{s}_i)-\hat{y}_t(\bm{s}_i)\big|$ for all $(t,i)$, and the kept $5000$ posterior samples of the parameters $c,\kappa, \rho,U_{1:T}(\bm{s}_{1:m})$ for hyperparameter specification \texttt{hypara1}. The true parameter values are $(\kappa,\rho,c) = (0.4, 0.1, 50)$ and the spatial locations are undirected.}
    			\label{m1600T50poisGammaParaEst}
    		}
    	\end{table}
    	
    	\begin{table}[H]
    		{
    			\begin{adjustwidth}{-0cm}{-0cm}
    				\begin{center}
    					\scalebox{0.92}{\begin{tabular}{|*{7}{c|}}
    							\hline
    							& Minimum & 1st Quantile & Median & Mean & 3rd Quantile & Maximum \\
    							\hline
    							$y_t(\bm{s}_i)$'s &   0.00  & 32.00  & 73.00 &  97.95  &138.00 & 827.00 \\
    							\hline
    							$U_t(\bm{s}_i)$'s & 0.00  & 32.39 &  73.24  & 97.95 & 137.65 & 920.08 \\
    							\hline
    							$\big|y_t(\bm{s}_i)-U_t(\bm{s}_i)\big|$'s & 0.000032 & 0.409718 & 0.752823 & 0.871786 & 0.990895 & 11.581919 \\
    							\hline
    							$c$ & 48.59 &  49.97 &  50.30  & 50.30 &  50.61  & 52.00 \\
    							\hline
    							$\kappa$  & 0.3669 & 0.3857 & 0.3900 & 0.3900 & 0.3943 & 0.4112 \\
    							\hline
    							$\rho$ & 0.08974 & 0.10334 & 0.10647 & 0.10644 & 0.10961 & 0.12243  \\
    							\hline
    					\end{tabular}}
    				\end{center}
    			\end{adjustwidth}
    			\caption{Data simulated from our model with $T=50$ and $m=900$: summary statistics of the actual observed counts $y_{1:T}(\bm{s}_{1:m})$, absolute error $\big|y_t(\bm{s}_i)-\hat{y}_t(\bm{s}_i)\big|$ for all $(t,i)$, and the kept $5000$ posterior samples of the parameters $c,\kappa, \rho,U_{1:T}(\bm{s}_{1:m})$ for hyperparameter specification \texttt{hypara1}. The true parameter values are $(\kappa,\rho,c) = (0.4, 0.1, 50)$ and the spatial locations are undirected.}
    			\label{m900T50poisGammaParaEst}
    		}
    	\end{table}
    	\clearpage
        \begin{table}[htb]
        	{
        		\begin{adjustwidth}{-0cm}{-0cm}
        			\begin{center}
        				\scalebox{0.83}{\begin{tabular}{|*{8}{c|}} 
        						\hline
        						\backslashbox{\textbf{Model}}{\textbf{Measure}} & \texttt{runtime} & $\overline{\text{ESS}}$ & MAE & DIC & p.d. & WAIC & p.w. \\
        						\hline
        						\texttt{ST.CARlinear} & 198.8 & 1744.6 & 0.8895834 & 130032.4521 & 726.3246 & 130041.1649 & 721.4034 \\ 
        						\hline
        						our model & 3960 & 1796 & 0.6724177 & 132245.9 & 10784.88 & 146306.7 & 12800.52\\
        						\hline     
        						\texttt{INLA.ST1} & 0.06 & N.A. & 0.8152582 & 121443.7 & 644.4155 & 121442.9 & 634.4881\\
        						\hline     
        						\texttt{INLA.STint} & 0.17 & N.A. & 0.8029276 & 121437.9 & 1296.044 & 121447.7 & 1268.957\\
        						\hline
        						\hline
        						\texttt{ST.CARsepspatial} & 413.4 & 1763.44 & 0.8791573 & 130912.3594  & 584.5215 & 130905.9726 & 591.2164 \\   
        						\hline
        						our model & 2811.6 & $\approx 5000$ & 0.4222 & 137704.1 & 20851.72 & 142312.5 & 13164.35\\
        						\hline     
        						\texttt{INLA.ST1} & 0.05 & N.A. & 0.8629758 & 127409.9 & 332.8639 & 127414.2 & 334.6194\\
        						\hline    
        						\texttt{INLA.STint} & 0.16 & N.A. & 0.8458735 & 127388.5 & 1196.434 & 127411.9 & 1188.314\\
        						\hline
        						\hline
        						\texttt{ST.CARanova} & 340 & 1591.8 & 0.7733028 &  116189.0105 & 392.9050 & 116188.5211 & 388.8913  \\
        						\hline
        						our model & 3888 & 1690.4 & 0.5987155 & 117326.7 & 9040.101 & 129505.7 & 10961.21\\
        						\hline     
        						\texttt{INLA.ST1} & 0.05 & N.A. & 0.7334602 & 110367.8 & 340.7747 & 110369.1 & 339.5231\\
        						\hline 
        						\texttt{INLA.STint} & 0.15 & N.A. & 0.7221786 & 110356.6 & 997.812 & 110371.4 & 991.0257\\
        						\hline
        						\hline
        						\texttt{ST.CARar} & 438.5 & 160.81 & 0.7643704 & 118770.380 & 3364.181 & 118963.927 & 3286.991 \\    
        						\hline
        						our model & 3600 & 2059 & 0.587558 & 118839.3 & 10885.02 & 134954.3 & 13914.49\\
        						\hline     
        						\texttt{INLA.ST1} & 0.05 & N.A. & 0.9683906 & 134942 & 524.072 & 135072.5 & 646.7803\\
        						\hline  
        						\texttt{INLA.STint} & 0.21 & N.A. & 0.7966171 & 133244.5 &  8108.409 & 133422.4 & 7083.456\\
        						\hline
        						\hline
        						\texttt{ST.CARadaptive} & 8394.6 & 553.73 & 2.499 & 257256.44 & 15379.43 & 257808.92 & 12680.56\\     
        						\hline
        						\texttt{ST.CARlocalised} & 2278.8 & 1052.194 & 2.086 & 256728.10  & 16717.58 & 259627.62 & 15125.71\\   
        						\hline
        						our model & 3708 & 4161 & 1.568243 & 268834.9 & 32390.91 & 273852.9 & 19207.85\\ 
        						\hline   
        						\texttt{INLA.STint} & 0.24 & N.A. & 1.578 & 276148.6 & 30308.05 & 272842.8 & 19945.11\\
        						\hline
        				\end{tabular}}
        			\end{center}
        		\end{adjustwidth}
        		\caption{Data simulated from five \texttt{CARBayesST} settings (\texttt{ST.CARadaptive} and \texttt{ST.CARlocalised} share the same simulated data set) with $T=50$ and $m=900$: overall computation time (\texttt{runtime}) in seconds, mean response ESS ($\overline{\text{ESS}}$) at the 5000 kept post-burn-in MCMC iterations, and diagnostics metrics from the corresponding \texttt{CARBayesST} models, our model, and \texttt{INLA.STint}. \texttt{INLA.ST1} performs poorly on the last simulated data set, and that row is thus excluded from the table.}
        		\label{m900T50CARBayesSTtimeDiags}
        	}
        \end{table}
        
        \begin{table}[htb]
        	{
        		\begin{adjustwidth}{-0cm}{-0cm}
        			\begin{center}
        				\scalebox{0.82}{\begin{tabular}{|*{8}{c|}} 
        						\hline
        						\backslashbox{\textbf{Model}}{\textbf{Measure}} & \texttt{runtime} & $\overline{\text{ESS}}$ & MAE & DIC & p.d. & WAIC & p.w. \\
        						\hline
        						our model & 3105 & $\approx 5000$ & 0.871786 & 354027.7 &  43160.52 & 356793.3 & 23600.63\\
        						\hline                     
        						\texttt{ST.CARar} & 454.3 & 1836.4 & 0.726032 & 357016.49 &     43727.37  &   346145.25  &    23725.93 \\
        						\hline                      
        						\texttt{ST.CARsepspatial} & 438.8 & 1837.5 &  0.721146 & 357247.67   &   43862.85  &   346378.54   &   23818.88 \\
        						\hline
        						\texttt{ST.CARadaptive} & 8455.8 & 1832.1 & 0.72530 & 357026.62   &   43734.17  &   346157.38  &    23730.59\\
        						\hline
        						\texttt{ST.CARlocalised} & 2376.7 & 1821.0 & 1.386163 & 330143.28  &    14034.64  &   351824.76   &   25731.95\\
        						\hline
        						\texttt{INLA.STint} & 0.30 & N.A. & 0.7423270 & 358961 & 44208.66 & 352875 & 26861.97\\
        						\hline 
        				\end{tabular}}
        			\end{center}
        		\end{adjustwidth}
        		\caption{Data simulated from our model with $T=50$ and $m=900$: overall computation time (\texttt{runtime}) in seconds, mean response ESS ($\overline{\text{ESS}}$) at the 5000 kept post-burn-in MCMC iterations, and diagnostics metrics from our model and adequate \texttt{CARBayesST}, \texttt{INLA} models. \texttt{ST.CARlinear}, \texttt{ST.CARanova}, and \texttt{INLA.ST1} perform very poorly and are thus not included in the table.}
        		\label{m900T50poisGammaTimeDiags}
        	}
        \end{table}
        \begin{table}[htb]
        	{
        		\begin{adjustwidth}{-0cm}{-0cm}
        			\begin{center}
        				\scalebox{0.8}{\begin{tabular}{|*{7}{c|}} 
        						\hline
        						\backslashbox{\textbf{Model}}{\textbf{Measure}} & \texttt{runtime} & $\overline{\text{ESS}}$ & DIC & p.d. & WAIC & p.w. \\
        						\hline
        						\texttt{ST.CARlinear} & 354.1 & 1654.5 & 180652.004 &     1087.128  &  180662.097   &   1077.577 \\ 
        						\hline
        						our model & 5832 & 1834.3 & 182552.5 & 15187.24 & 203970.8 & 18801.85\\ 
        						\hline     
        						\texttt{INLA.ST1} & 0.11 & N.A. & 180806.4 & 981.5026 & 180809.1 & 970.7971\\
        						\hline     
        						\texttt{INLA.STint} & 0.34 & N.A. & 180785.6 & 1951.996 & 180809.3 & 1922.594\\
        						\hline
        						\hline
        						\texttt{ST.CARanova} & 565.7 & 1342.47 & 185764.3132 &     523.3728 &  185769.0505    &  524.0533\\
        						\hline
        						our model & 5724 & 1990.3 & 188700.5 & 17032.12 & 212612 & 20975.82\\ 
        						\hline     
        						\texttt{INLA.ST1} & 0.08 & N.A. & 185803.7 & 477.9899 & 185806.7 & 477.7393\\
        						\hline 
        						\texttt{INLA.STint} & 0.26 & N.A. & 185784.7 & 1498.866 & 185807.9 & 1491.158\\
        						\hline
        						\hline
        						\texttt{ST.CARar} & 705.5 & 153.72 & 194680.446    &  5827.903 &   194951.276  &    5594.397 \\    
        						\hline
        						our model & 5904 & 1929.8 & 195367.1 & 17007.77 & 220530.1 & 21572.76\\ 
        						\hline     
        						\texttt{INLA.ST1} & 0.08 & N.A. & 204053.4 & 838.319 & 204242.3 & 1015.105\\
        						\hline  
        						\texttt{INLA.STint} & 0.34 & N.A. & 201731.7 & 12100.4 & 202004.7 & 10637.6\\
        						\hline
        				\end{tabular}}
        			\end{center}
        		\end{adjustwidth}
        		\caption{Data simulated from \texttt{ST.CARlinear}, \texttt{ST.CARanova}, and \texttt{ST.CARar} with $T+h=48+2=50$ consecutive time periods and $m+r=1500+21=1521=39^2$ spatial locations: overall computation time (\texttt{runtime}) in seconds, mean response ESS ($\overline{\text{ESS}}$) at the 5000 kept post-burn-in MCMC iterations, and diagnostics metrics from the corresponding \texttt{CARBayesST} models, our model, \texttt{INLA.ST1}, and \texttt{INLA.STint}. }
        		\label{K1521N50CARBayesSTtimeDiags}
        	}
        \end{table}
        \clearpage
    	\begin{figure}[h!]
    		\centering
    		\includegraphics[width=0.9\textwidth]{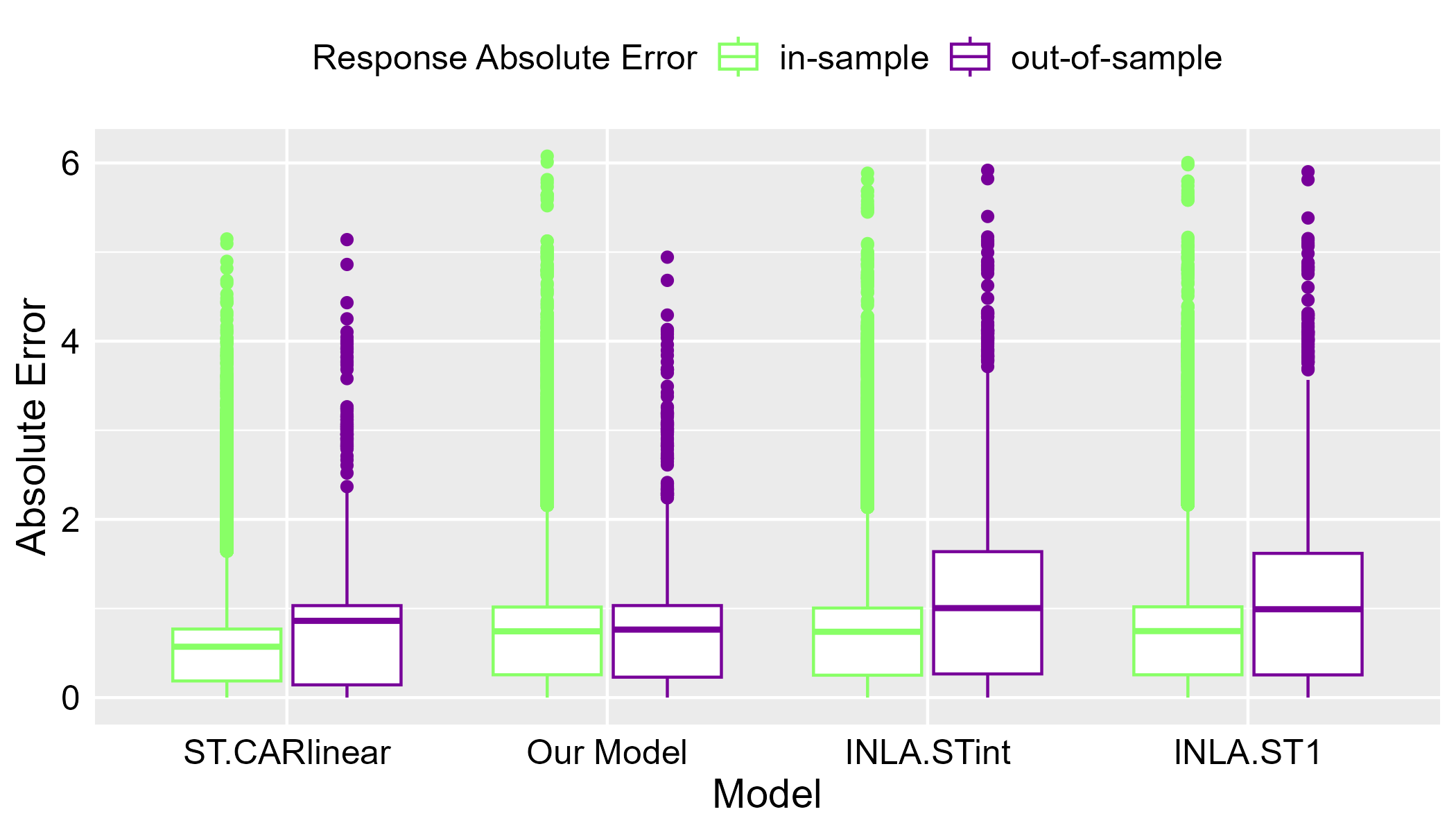}     
    		\caption{Boxplot for data simulated from \texttt{ST.CARlinear} with $50$ consecutive time periods and $m+r=1521=39^2$ spatial locations: distribution of the absolute errors $\big|y_t(\bm{s}_i)-\hat{y}_t(\bm{s}_i)\big|$, where $\hat{y}_t(\bm{s}_i)$ is the posterior mean for each $(t,i)$.}
    		\label{plot:absErrorBoxplotSTCARlinear} 
    	\end{figure}
    	\begin{figure}[h!]
    		\centering
    		\includegraphics[width=0.9\textwidth]{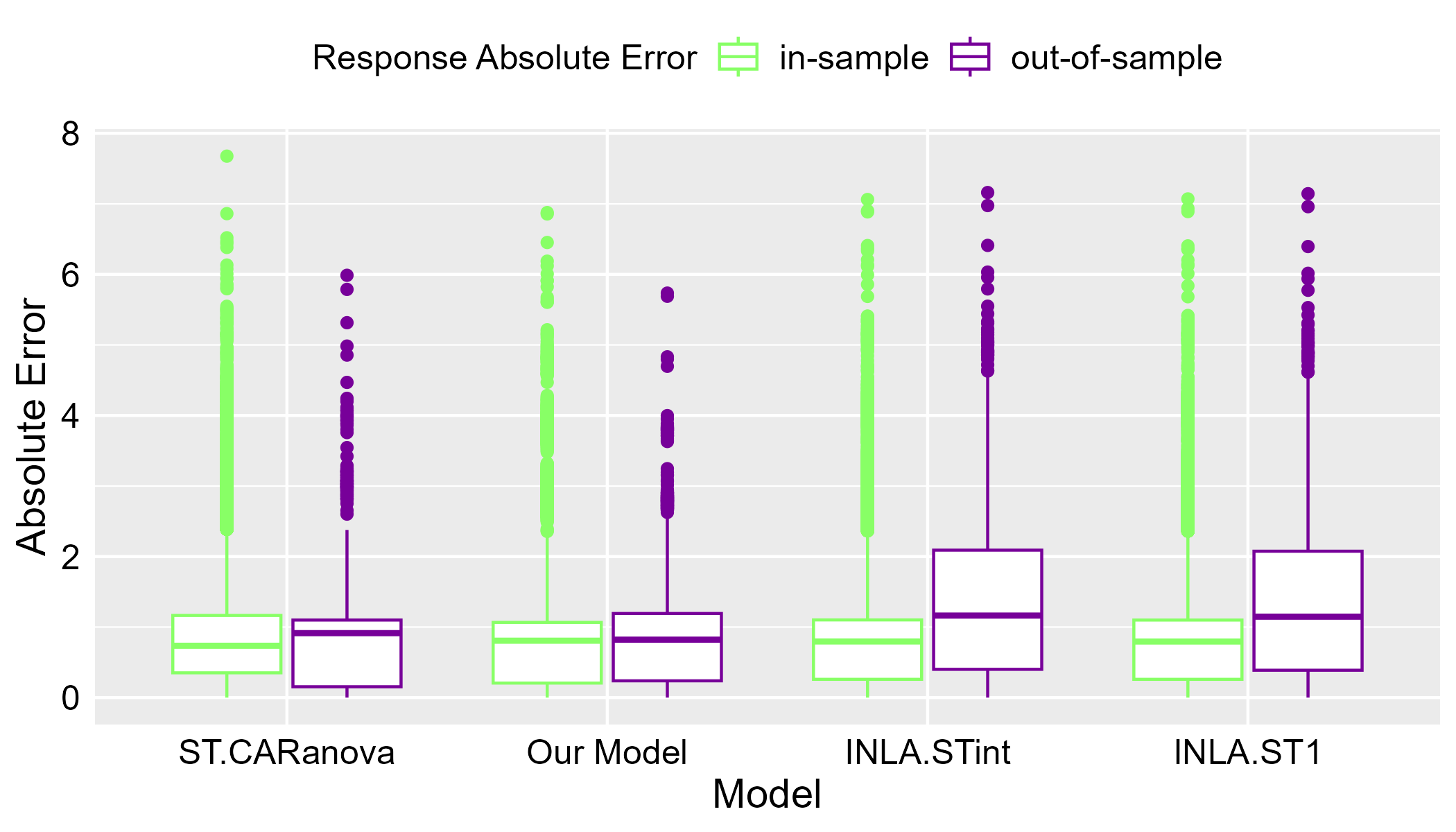}     
    		\caption{Boxplot for data simulated from \texttt{ST.CARanova} with $50$ consecutive time periods and $m+r=1521=39^2$ spatial locations: distribution of the absolute errors $\big|y_t(\bm{s}_i)-\hat{y}_t(\bm{s}_i)\big|$, where $\hat{y}_t(\bm{s}_i)$ is the posterior mean for each $(t,i)$.}
    		\label{plot:absErrorBoxplotSTCARanova} 
    	\end{figure}   
    	
    	\clearpage
    	
    	\section{World Weekly COVID-19 Data Results Complementary  to \Cref{sec:covid}}
    	\begin{table}[htb]
    		{
    			\begin{adjustwidth}{-0cm}{-0cm}
    				\begin{center}
    					\scalebox{0.78}{\begin{tabular}{|*{8}{c|}}
    							\hline
    							\multicolumn{2}{|c|}{} & \textbf{Minimum} & \textbf{1st Quantile} & \textbf{Median} & \textbf{Mean} & \textbf{3rd Quantile} & \textbf{Maximum} \\
    							\hline
    							\multicolumn{2}{|c|}{Actual $y_t(\bm{s}_i)$'s} & 0 &    504  &  3633  & 39190  & 16957 & 5650933 \\
    							\hline
    							$\hat{y}_t(\bm{s}_i)=$ & \texttt{autoregWITS} & 1 &    506  &  3633  & 39190 &  16954 & 5650904\\
    							\cline{2-8}
    							$\bar{U}_t(\bm{s}_i)\times$ & \texttt{autoregCor} & 1 &    506  &  3632 &  39190 &  16956 & 5650902  \\
    							\cline{2-8}
    							\multirow{2}{*}{$e^{\bm{x}_t(\bm{s}_i)\bm{\beta}}$'s} & \texttt{noselfWITS} & 1   &  505  &  3631  & 39190 &  16955 & 5650942 \\
    							\cline{2-8}
    							& \texttt{noselfCor} &  1  &   505  &  3631 &  39190  & 16951 & 5650865 \\
    							\hline
    							\multirow{4}{*}{$c$} & \texttt{autoregWITS} & 0.993 &  1.022  & 1.030  & 1.030  & 1.038 &  1.079  \\
    							\cline{2-8}
    							& \texttt{autoregCor} & 0.9899 & 1.0192 & 1.0271 & 1.0273 & 1.0349 & 1.0710 \\
    							\cline{2-8}
    							& \texttt{noselfWITS} &  4.257 &  4.417  & 4.449  & 4.450 &  4.483 &  4.608 \\
    							\cline{2-8}
    							& \texttt{noselfCor} & 2.714 &  2.803 &  2.824  & 2.825 &  2.846 &  2.950 \\
    							\hline
    							\multirow{4}{*}{$\kappa$} & \texttt{autoregWITS} &  0.000 & $7.257\cdot 10^{-6}$ & $3.122\cdot 10^{-5}$ & $6.216\cdot 10^{-5}$ & $8.260\cdot 10^{-5}$ & $8.317\cdot 10^{-4}$\\
    							\cline{2-8}
    							& \texttt{autoregCor} & $1.000 \cdot 10^{-9}$ & $2.510 \cdot 10^{-3}$ & $3.472\cdot 10^{-3}$ & $3.497\cdot 10^{-3}$ & $4.527\cdot 10^{-3}$ & $8.207\cdot 10^{-3}$ \\
    							\cline{2-8}
    							& \texttt{noselfWITS} & 0.3843 & 0.4033 & 0.4082 & 0.4082 & 0.4130 & 0.4347 \\
    							\cline{2-8}
    							& \texttt{noselfCor} & 0.5009 & 0.5157 & 0.5195 & 0.5195 & 0.5232 & 0.5376\\
    							\hline
    							\multirow{2}{*}{$\rho$} & \texttt{autoregWITS} & 0.8555 & 0.8679 & 0.8708 & 0.8708 & 0.8737 & 0.8865\\
    							\cline{2-8}
    							& \texttt{autoregCor} & 0.8487 & 0.8638 & 0.8669 & 0.8669 & 0.8699 & 0.8847  \\
    							\hline
    					\end{tabular}}
    				\end{center}
    			\end{adjustwidth}
    			\caption{World weekly COVID-19 new cases: summary statistics of the actual observed counts $y_{1:T}(\bm{s}_{1:m})$, the predicted responses $\hat{y}_t(\bm{s}_i)=\bar{U}_t(\bm{s}_i)\times\exp\{\bm{x}_t(\bm{s}_i)\bm{\beta}\}$ for all $(t,i)$, where $\bm{\beta}$ is fixed as described earlier and $\bar{U}_t(\bm{s}_i)$ is the mean of the kept posterior samples of $U_t(\bm{s}_i)$ for each $(t,i)$, and the kept $5000$ posterior samples of $c,\kappa$ for the four settings--\texttt{autoregWITS}, \texttt{autoregCor}, \texttt{noselfWITS}, and \texttt{noselfCor}--and of $\rho$ for the first two settings.}
    			\label{covidCasesPara}
    		}
    	\end{table} 
    	
    	\begin{figure}[ph]
    		\centering
    		\includegraphics[width=\textwidth]{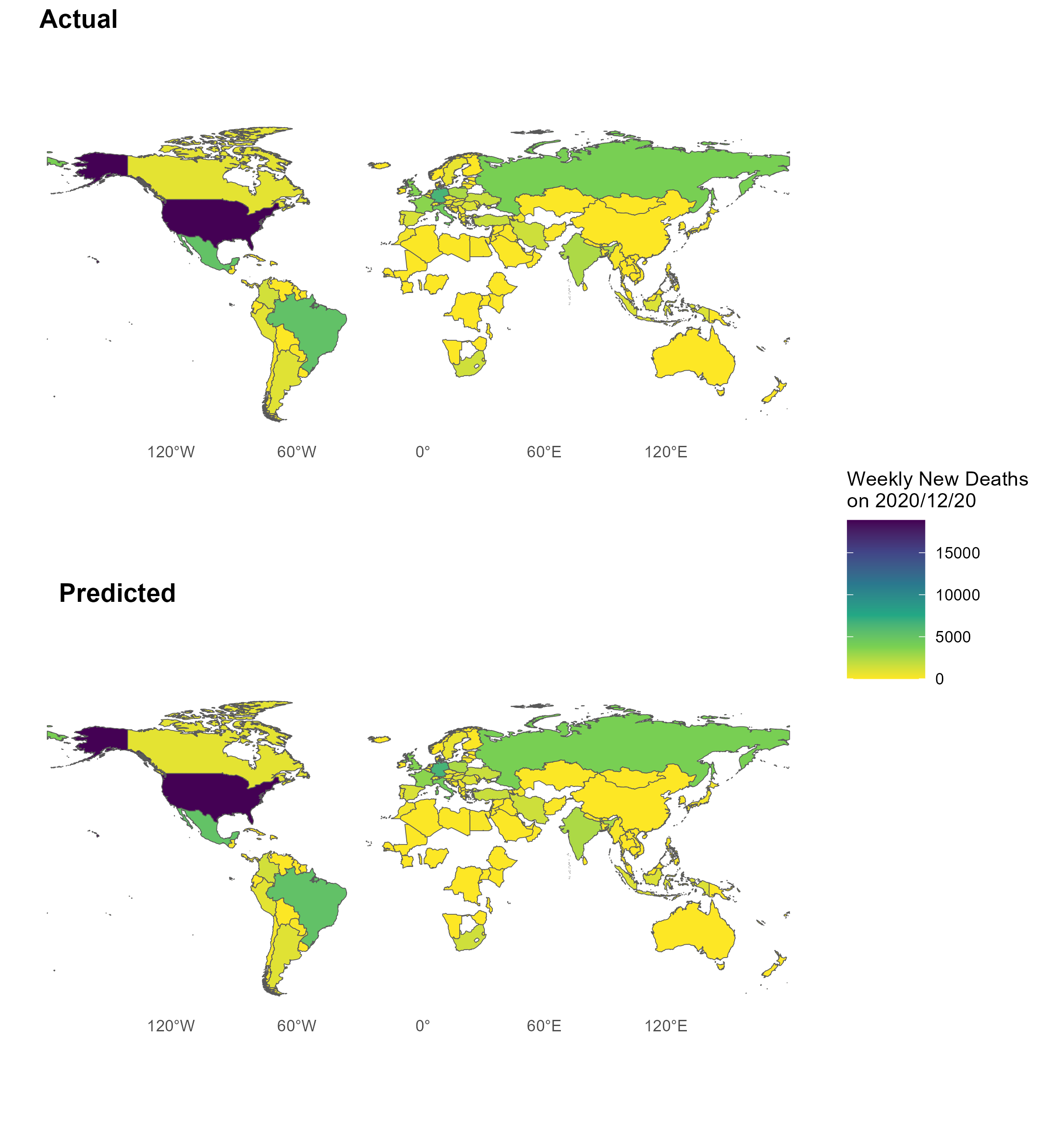}     
    		\caption{Actual and predicted (from the \texttt{noselfWITS} setting) new COVID-19 deaths in the $m=123$ training countries/territories in the week reported on December 20, 2020. The shapefile is obtained from \href{https://public.opendatasoft.com/explore/dataset/world-administrative-boundaries/export/}{Opendatasoft}.}
    		\label{covidDeaths}
    	\end{figure}
    	
    	\begin{table}[ph]
    		{
    			\begin{adjustwidth}{-0cm}{-0cm}
    				\begin{center}
    					\scalebox{0.76}{\begin{tabular}{|*{8}{c|}}
    							\hline
    							\multicolumn{2}{|c|}{} & \textbf{Minimum} & \textbf{1st Quantile} & \textbf{Median} & \textbf{Mean} & \textbf{3rd Quantile} & \textbf{Maximum} \\
    							\hline
    							\multicolumn{2}{|c|}{Actual $y_t(\bm{s}_i)$'s} & 0.0   &  5.0  &  41.0 &  447.6 &  213.0 & 28982.0  \\
    							\hline
    							$\hat{y}_t(\bm{s}_i)=$ & \texttt{autoregWITS} & 0.166  &   4.882 &   41.798 &  447.580  & 213.936 & 28982.182 \\
    							\cline{2-8}
    							$\bar{U}_t(\bm{s}_i)\times$ & \texttt{autoregCor} & 0.164  &   4.889   & 41.723 &  447.585 &  214.045 & 28982.409  \\
    							\cline{2-8}
    							\multirow{2}{*}{$e^{\bm{x}_t(\bm{s}_i)\bm{\beta}}$'s} & \texttt{noselfWITS} & 0.663  &   5.119  &  41.866 &  447.578 &  213.347 & 28979.238  \\
    							\cline{2-8}
    							& \texttt{noselfCor} & 0.541   &  5.034 &   41.828 &  447.581 &  213.384 &28985.861 \\
    							\hline
    							\multirow{4}{*}{$c$} & \texttt{autoregWITS} & 0.4660 & 0.4782 & 0.4807 & 0.4807 & 0.4832 & 0.4950 \\
    							\cline{2-8}
    							& \texttt{autoregCor} & 0.4674 & 0.4780 & 0.4806 & 0.4806 & 0.4831 & 0.4944  \\
    							\cline{2-8}
    							& \texttt{noselfWITS} & 3.433  & 3.549 &  3.578  & 3.578  & 3.605 &  3.719\\
    							\cline{2-8}
    							& \texttt{noselfCor} & 2.968 &  3.071  & 3.096 &  3.095 &  3.119  & 3.262 \\
    							\hline
    							\multirow{4}{*}{$\kappa$} & \texttt{autoregWITS} & 0.000 & $5.475\cdot 10^{-6}$ & $2.171\cdot 10^{-5}$ & $4.362 \cdot 10^{-5}$ & $5.821\cdot 10^{-5}$ & $5.054\cdot 10^{-4}$  \\
    							\cline{2-8}
    							& \texttt{autoregCor} & 0.000 & $6.068\times 10^{-6}$ & $2.501\times 10^{-5}$ & $4.755\times 10^{-5}$ & $6.494\times 10^{-5}$ & $4.973\times 10^{-4}$\\
    							\cline{2-8}
    							& \texttt{noselfWITS} & 0.4529 & 0.4770 & 0.4828 & 0.4827 & 0.4884 & 0.5164 \\
    							\cline{2-8}
    							& \texttt{noselfCor} & 0.4392 & 0.4591 & 0.4633 & 0.4634 & 0.4677 & 0.4889 \\
    							\hline
    							\multirow{2}{*}{$\rho$} & \texttt{autoregWITS} & 0.9162 & 0.9251 & 0.9274 & 0.9274 & 0.9296 & 0.9400  \\
    							\cline{2-8}
    							& \texttt{autoregCor} & 0.9163 & 0.9251 & 0.9274 & 0.9273 & 0.9295 & 0.9385 \\
    							\hline
    					\end{tabular}}
    				\end{center}
    			\end{adjustwidth}
    			\caption{World weekly COVID-19 new deaths: summary statistics of the actual observed counts $y_{1:T}(\bm{s}_{1:m})$, the predicted responses $\hat{y}_t(\bm{s}_i)=\bar{U}_t(\bm{s}_i)\times\exp\{\bm{x}_t(\bm{s}_i)\bm{\beta}\}$ for all $(t,i)$, where $\bm{\beta}$ is fixed as described earlier and $\bar{U}_t(\bm{s}_i)$ is the mean of the kept posterior samples of $U_t(\bm{s}_i)$ for each $(t,i)$, and the kept $5000$ posterior samples of $c,\kappa$ for the four settings--\texttt{autoregWITS}, \texttt{autoregCor}, \texttt{noselfWITS}, and \texttt{noselfCor}--and of $\rho$ for the first two settings.}
    			\label{covidDeathsPara}
    		}
    	\end{table} 
    	
    	\begin{table}[ph]
    		{
    			\begin{adjustwidth}{-0cm}{-0cm}
    				\begin{center}
    					\scalebox{0.91}{\begin{tabular}{|*{7}{c|}}
    							\hline
    							\backslashbox{\textbf{Method}}{\textbf{Metric}} & MAE & MAPE & DIC & p.d. & WAIC & p.w. \\
    							\hline
    							\texttt{autoregWITS} & 1.663053 & 8.244734 & 90033.79 & 9549.816 & 95707.1 & 7880.353\\
    							\hline
    							\texttt{autoregCor} & 1.663616 & 8.238353 & 90036.46 & 9551.499 & 94828.54 & 7452.335\\
    							\hline
    							\texttt{noselfWITS} & 1.208447 & 9.702637 & 91735.95 & 10393.63 & 94901.5 & 7101.675\\
    							\hline
    							\texttt{noselfCor} & 1.188701 & 9.391659 & 91613.89 & 10380.57 & 95029.55 & 7175.584\\
    							\hline
    							\texttt{INLA.STint} & 0.6212796 & 9.18734 & 93428.45 & 12055.36 & 93801.81 & 8511.567\\
    							\hline
    							\texttt{ST.CARarWbinary} & 1.559777 & 12.13594 & 89079.418   &   8702.329 &    87505.799 &     5265.140\\
    							\hline
    							\texttt{ST.CARarWcorr} & 1.535569 & 12.07959 & 88601.097   &   8549.985  &   86870.359  &    5050.861\\
    							\hline
    							\texttt{ST.CARadaptiveWbinary} & 1.912252 & 12.48362 &  88208.038  &    8419.877  &   86989.594   &   5263.732 \\
    							\hline
    							\texttt{ST.CARadaptiveWcorr} & 1.474775 & 11.83246 &  89026.926   &   8886.574     &87508.425    &  5387.956  \\
    							\hline
    							\texttt{ST.CARlocalisedWbinary} & 0.8899884 & 4.244823 & 93041.34   &   11800.16 &    110737.68    &  17358.49 \\
    							\hline
    							\texttt{ST.CARlocalisedWcorr} & 0.8595571 & 3.696582 & 94486.38   &   12652.32  &   122573.85  &    23243.61\\
    							\hline
    					\end{tabular}}
    				\end{center}
    			\end{adjustwidth}
    			\caption{Weekly COVID-19 new deaths: diagnostics metrics from 11 adequate model settings.}
    			\label{covidDeathsDiags}
    		}
    	\end{table}
    	
    	\begin{figure}[ph]
    		\centering
    		\includegraphics[width=\textwidth]{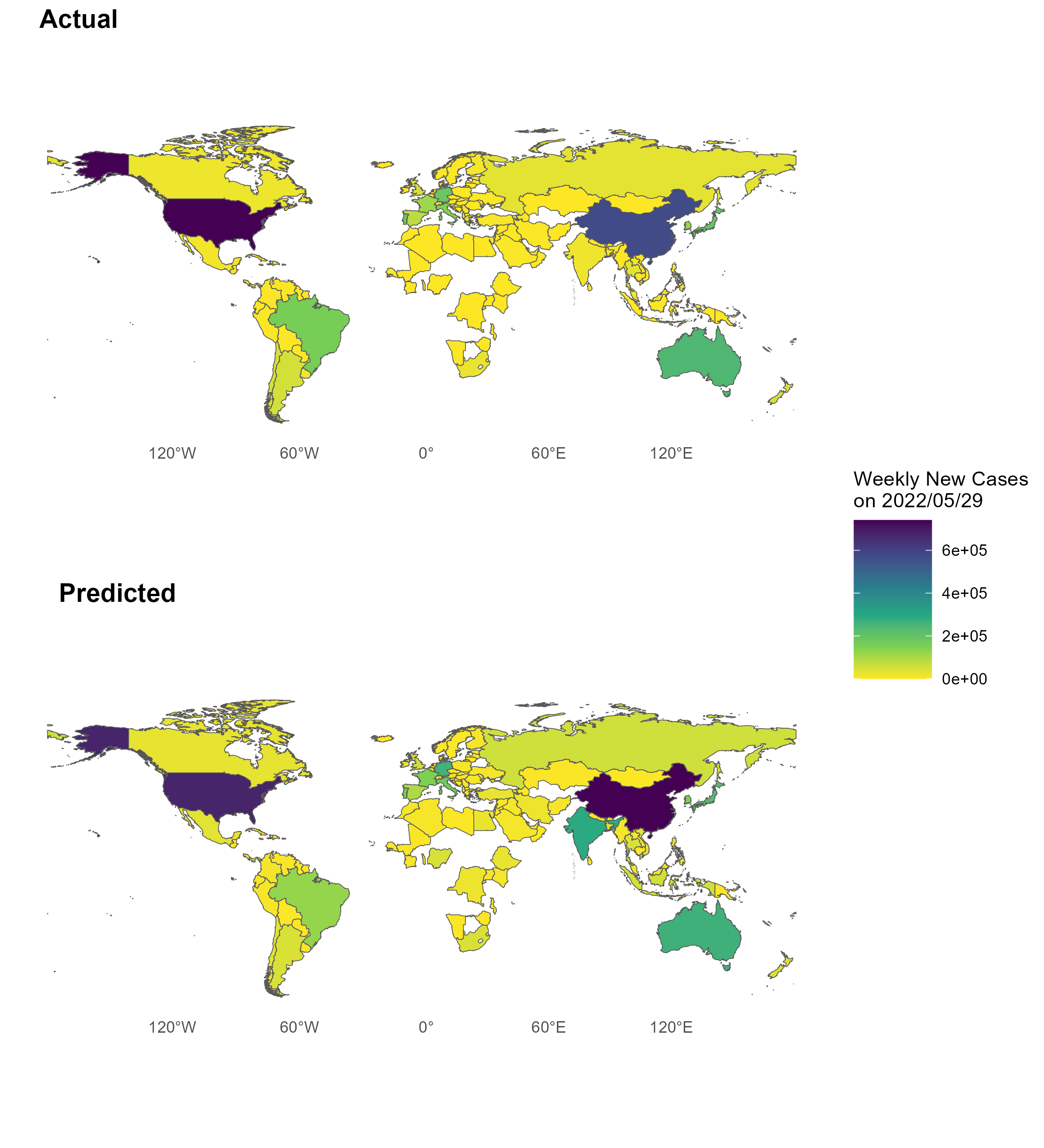}     
    		\caption{Actual and predicted (from the \texttt{autoregWITS} setting) new COVID-19 cases in the $m+r=123+4=127$ training and testing countries/territories in the week reported on May 29, 2022. The shapefile is obtained from \href{https://public.opendatasoft.com/explore/dataset/world-administrative-boundaries/export/}{Opendatasoft}.}
    		\label{covidCasesPredm127NewT1}
    	\end{figure}
    	
    	\begin{figure}[ph]
    		\centering
    		\includegraphics[width=\textwidth]{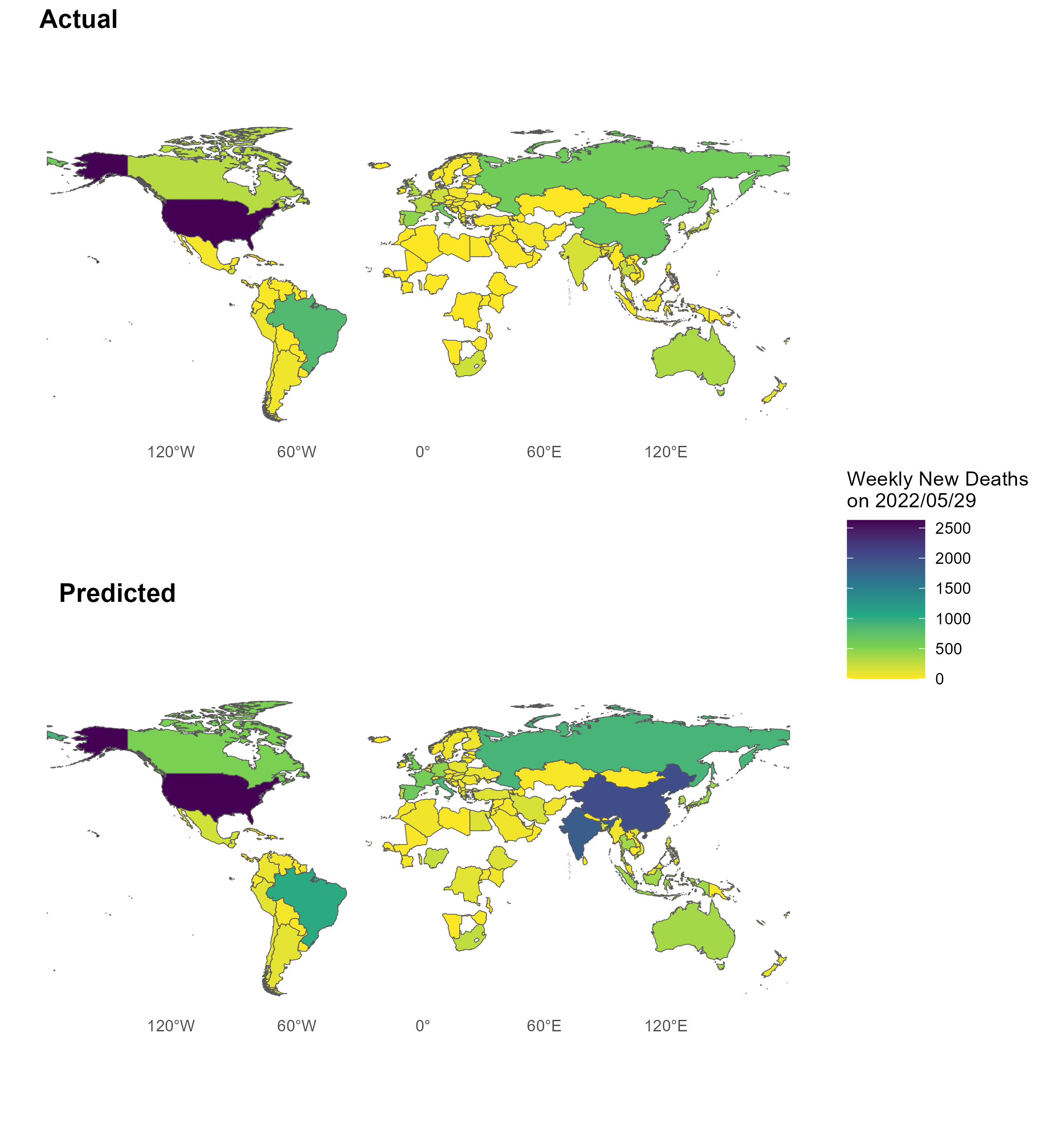}     
    		\caption{Actual and predicted (from the \texttt{autoregWITS} setting) new COVID-19 deaths in the $m+r=123+4=127$ training and testing countries/territories in the week reported on May 29, 2022. The shapefile is obtained from \href{https://public.opendatasoft.com/explore/dataset/world-administrative-boundaries/export/}{Opendatasoft}.}
    		\label{covidDeathsPredm127NewT1}
    	\end{figure}
	\clearpage
	\bibliographystyle{chicago}
	\bibliography{paper2}
\end{document}